\documentclass[11pt, letterpaper]{article}

\usepackage{comment}
\usepackage{tabularx}
\usepackage{graphicx}

\usepackage{amsmath, amssymb, mathtools}
\usepackage{amsthm}
\usepackage{bm}
\usepackage{bbm}
\allowdisplaybreaks
\usepackage[labelfont=bf]{caption}

\usepackage[T1]{fontenc}
\usepackage[scaled=0.8]{beramono}

\usepackage{xcolor}
\usepackage{colortbl}
\definecolor{linkcolor}{HTML}{0645AD}
\definecolor{crimson}{HTML}{A41034}
\definecolor{fowleremrobustrow}{HTML}{EAF6EF}

\usepackage{hyperref}
\hypersetup{
     colorlinks   = true,
     allcolors    = blue, 
}
\usepackage{float}

\def\*#1{\mathbf{#1}}

\newcommand{\E}{\mathbb{E}}
\newcommand{\bbone}{\mathbbm{1}}

\newcommand{\V}{\mathbb{V}}

\newcommand{\indep}{\!\perp\!\!\!\perp}

\newtheorem{theorem}{Theorem}
\newtheorem{proposition}{Proposition}
\newtheorem{lemma}{Lemma}
\newtheorem{assumption}{Assumption}
\newtheorem{remark}{Remark}

\newtheorem{result}{Result}

\theoremstyle{definition}

\numberwithin{equation}{section}
\numberwithin{theorem}{section}
\numberwithin{proposition}{section}
\numberwithin{lemma}{section}
\numberwithin{assumption}{section}
\numberwithin{example}{section}

\usepackage{setspace}
\usepackage{pdfpages}
\usepackage{booktabs}
\usepackage{multirow}
\usepackage{enumitem}

\usepackage{algorithm}
\usepackage{algpseudocode}

\usepackage{tikz}
\usetikzlibrary{arrows.meta, positioning}

\usepackage[round, authoryear]{natbib}
\title{Debiased Inference for AI-Generated Data \\ without Gold-Standard Labels: \\ Identification via Multiple Imperfect Measurements\thanks{We thank Betsy Ogburn, Helen Guo, AmirEmad Ghassami, and Ilya Shpitser for thoughtful comments.}}
\author{Naoki Egami\thanks{Associate Professor, Department of Political Science and Statistics and Data Science Center, Massachusetts Institute of Technology, Cambridge, MA 02139.
Email: \href{mailto:egami@mit.edu}{egami@mit.edu}
URL: \href{https://naokiegami.com/}{https://naokiegami.com}}
\and 
Sooahn Shin\thanks{Postdoctoral Associate, 
Department of Political Science, 
Massachusetts Institute of Technology, Cambridge, MA 02139.
Email: \href{mailto:sshin3@mit.edu}{sshin3@mit.edu}
URL: \href{https://sooahnshin.com}{https://sooahnshin.com}}
}
\date{}

\begin{document}
\maketitle

\begin{abstract}
An increasing number of scholars use AI to measure variables they subsequently include in downstream analyses. Although AI-measured variables are often analyzed as if observed without error, ignoring prediction errors in automated measurement leads to substantial bias and invalid confidence intervals in downstream analyses, even if AI measurement accuracy is high, e.g., above 90\%. Existing solutions, such as design-based supervised learning and prediction-powered inference, combine error-prone AI-based measurements with gold-standard labels, which may be costly and difficult to obtain in some application areas.

In this paper, we propose \textit{debiased inference with multiple imperfect measurements} (DMM), a framework that combines multiple error-prone AI measurements to enable valid downstream inference without gold-standard labels. Building on the established results on CP decomposition, DMM assumes that these measurements are independent conditional on the latent true label and observed unit-level features, such as text features represented by embeddings. This framework allows for unknown misclassification rates to vary across annotation methods (e.g., large language models) and across units of annotation (e.g., texts). Under this assumption, we use semiparametric inference theory to prove that the DMM estimator is consistent and asymptotically normal, enabling valid inference for a wide range of downstream statistical analyses common in the social sciences. Our simulation results show that DMM yields valid inference and that adding accurate, though imperfect, measurements can improve efficiency. Focusing on common applications of large language model annotations, we also develop diagnostics to assess the conditional independence assumption.
\end{abstract}

\clearpage
\section{Introduction}

One of the most common applications of AI in the social sciences has been in measurement. Researchers have used AI to measure a wide range of variables, such as sentiment, tone, topics \citep[e.g.,][]{gilardi2023chatgpt, ziems2024can}, protests, political violence \citep[e.g.,][]{halterman2026codebook}, and job types \citep[e.g.,][]{hansen2023remote}, among others. Such measurement tasks are only the first step. Scholars often use learned measurements as key variables of interest in downstream analyses. For example, \citet{pan2018concealing} annotate whether each online post accuses local Chinese officials of corruption so that they can later study whether and how much such online complaints are censored. \citet{fowler2021political} annotate the tone of political ads and then analyze how politicians strategically change the tone of political advertising online and offline.

When using such AI-measured variables in downstream analyses, it might be tempting to ignore measurement errors---the unknown and heterogeneous mismatch between the gold-standard label and the AI measurements---and analyze such variables as if they were observed without errors. However, recent papers theoretically and empirically demonstrate that ignoring errors in the first-stage measurement step can lead to substantial bias and invalid confidence intervals, even if the accuracy of the AI measurement is high, e.g., 90\%. This is because such measurement errors are non-random and nonclassical and correlated with observed and unobserved variables that matter in downstream inference \citep{schennach2016recent, wang2020methods, egami2023using}. In practice, this means that researchers can get substantively and statistically different results if they choose slightly different measurement methods, such as large language model (LLM) annotations with different prompts, temperature, and so on \citep[e.g.,][]{baumann2025large, yang2026benchmark}. These biases can also limit the replicability of research based on AI-based measurements \citep{spirling2023open}, as many AI models that are currently the state of the art will be deprecated in the near future. 

A popular existing solution is to combine a small number of gold-standard labels with error-prone AI-based measures via a doubly robust procedure \citep{robins1994, cher2018double}. Widely used variants include design-based supervised learning \citep[DSL;][]{egami2023using}, prediction-powered inference \citep[PPI;][]{angelopoulos2023prediction}, and MAR-S \citep{carlson2025unifying}. They assume that researchers control the process through which each unit is sampled for gold-standard labeling (often simple random sampling but can also accommodate unequal sampling based on observed features). The key strength is that these methods make no assumptions about errors in AI-based measures. However, the biggest requirement is that they need high-quality gold-standard labels as the validation data, which might be costly and difficult to obtain in some application areas. 

In this paper, we propose a framework of \textit{debiased inference with multiple imperfect measurements} (DMM) that combines three or more imperfect measurements in place of gold-standard labels to conduct valid downstream inference with AI-generated data. For identification, we build on the long-standing literature on nonparametric latent variable models and CP decomposition \citep{kruskal1977three, hu2008identification, allman2009identifiability}. The classical conditional independence assumption in this literature \citep{dawid1979maximum} assumes that multiple imperfect measurements, also known as proxies, are independent conditional on the true latent variable. This assumption is weaker than assuming that proxies have no measurement error and allows misclassification rates to vary across proxies. However, it does not allow for any shared source of errors (e.g., some documents might be more difficult to annotate for every LLM; \citealp{chen2026partial}). We instead assume that multiple proxies are independent conditional not only on the unobserved true labels but also on any observed features of each annotation task, such as the complexity and difficulty of texts captured by rich text embeddings. This assumption allows for unknown misclassification rates to vary not only across measurement methods (e.g., different LLMs) but also across units (e.g., texts). Our approach allows proxies to share sources of errors as long as such common causes are captured by conditioning variables, which can include rich text representation, for example. Importantly, if researchers wish to avoid assuming access to gold-standard labels, they have to make some assumptions about proxies, and our paper makes this assumption explicit and transparent. Recognizing its importance, Section~\ref{sec:conditional-independence-practice} offers a series of strategies to make the conditional independence assumption more plausible and to statistically evaluate the observational implications of this assumption. 

For estimation and inference, the proposed DMM estimator builds on semiparametric causal inference, in particular \citet{zhou2024causal} and \citet{guo2026proximal}, to perform debiased downstream inference. Simply estimating the finite mixture model (or variants of the Dawid-Skene model \citep{dawid1979maximum} or Kruskal decomposition \citep{kruskal1977three, allman2009identifiability}) and including the learned latent variables directly in downstream inference is not sufficient. We need to use a tailored debiased moment estimator to achieve Neyman orthogonality \citep{cher2018double} and allow for slow convergence rates of the estimation of conditional classification rate models, which are nuisance functions for downstream inference. The proposed DMM estimator is consistent and asymptotically normal, and its corresponding
confidence interval is valid, under the conditional independence assumption and mild assumptions about convergence rates of the nuisance functions. By extending the existing literature, we allow for (a) a wide range of downstream analyses common in the social and biomedical sciences that can be written as the moment estimator (e.g., most maximum likelihood estimators); (b) settings where an error-prone variable is either a dependent or independent categorical variable in downstream analyses, and (c) more than three imperfect measurements (and their implications for efficiency). 

Thinking broadly, our proposed method and existing approaches based on gold-standard labels are complementary. The proposed DMM makes no assumption about the gold-standard labels but makes stronger assumptions about measurement errors in AI, whereas methods using validation data, such as DSL \citep{egami2023using} and PPI \citep{angelopoulos2023prediction}, make no assumption about measurement errors in AI but assume access to gold-standard labels. Which method is more appropriate depends on applications, and we discuss when and how to combine these two methods. 
\\ 

Our contributions can be summarized in three points. 
\begin{enumerate}
\item \textbf{Exploiting Multiple Imperfect Measurements in Place of Gold-Standard Labels}: We develop a method that combines multiple proxies to conduct valid downstream inference for AI-generated data, without assuming access to gold-standard labels. This is in contrast to existing bias-correction methods, such as DSL and PPI, that rely on gold-standard data. 
\item \textbf{General Applicability}: We extend existing theoretical results by allowing for (a) a wide range of downstream analyses common in practice, including linear regression, logistic regression, and most maximum likelihood estimators, (b) settings where an error-prone variable is either a dependent or independent categorical variable in downstream analyses, and (c) more than three proxies (e.g., more than three annotation methods). This general applicability is fundamental to support diverse downstream analyses conducted in empirical sciences. 
\item \textbf{Application to LLM annotations}: While the method itself is applicable for downstream inference with any error-prone variables, we specifically focus on the most common applications of LLM annotations for AI-generated data. In particular, we discuss (a) how to make the conditional independence assumption more plausible by carefully choosing conditioning variables (e.g., estimated task difficulty and text embeddings), proxies (e.g., different families of LLMs), and prompts (e.g., randomly selecting different prompts for each LLM); (b) how to use more than three proxies to perform the overidentification test of the conditional independence assumption, and (c) how to combine DMM and DSL in a special case where some gold-standard labels are also available. 
\end{enumerate}

After describing related work, we formally characterize the problem setting and existing methods (Section~\ref{sec:problem-setting}). In Section~\ref{sec:latent-independent}, we describe our proposed method and prove its theoretical properties. Here, for the sake of clear presentation, we focus on settings where an error-prone measurement is an independent variable in downstream analysis. We start with nonparametric identification and then derive asymptotic statistical properties of the DMM estimator in estimation and inference. In Section~\ref{sec:latent-dependent}, we extend the framework to settings in which the annotated construct is a binary dependent variable and then discuss the most general case where a multicategory label can be either a dependent or independent variable. In Section~\ref{sec:conditional-independence-practice}, we come back to the core assumption of conditional independence and discuss how to make the assumption more plausible in practice. We also develop a series of statistical diagnostic tools to assess the conditional independence assumption. In Section~\ref{sec:applications}, we provide extensive simulation and empirical validation studies to demonstrate the statistical properties of DMM. Section~\ref{sec:discussion} concludes with a discussion.

\subsection*{Related literature}

\paragraph{AI-Generated Data.}
With recent advances in AI and LLMs, a growing literature develops methods for valid downstream inference using AI- or machine-learning-generated measurements together with a gold-standard validation sample.
Popular examples include design-based supervised learning \citep{egami2023using,egami2024using}, prediction-powered inference \citep{angelopoulos2023prediction}, methods reviewed in \cite{ludwig2024large}, MAR-S \citep{carlson2025unifying}, model-assisted impact analysis \citep{mozer2023decreasing}, and related control-variate approaches \citep{katsumata2023statistical}. They combine predictions available at scale with a smaller number of gold-standard labels to bias-correct downstream inference. All of these methods build on the longstanding literature on semiparametric inference and causal inference \citep[e.g.,][]{robins1994, chen2000unified, chen2008semiparametric, cher2018double}.
These methods assume access to gold-standard labels and knowledge of their sampling design, whereas DMM does not require such gold-standard labels.

There are several recent papers that relax the assumption of gold-standard labels. \citet{battaglia2024inference} assume a new asymptotic regime where measurement errors and sampling errors are comparable and decrease as sample size grows. They use partial validation data where one only observes the gold-standard labels and error-prone AI measurements instead of the full validation where researchers need to observe not only the gold-standard labels and AI measurements but also all the downstream variables for each unit. \citet{chen2026partial} derive bounds on latent label prevalence and regression coefficients from panels of LLM reports under externally calibrated score and event restrictions (e.g., reporter-specific accuracy restrictions), while allowing arbitrary dependence across reports conditional on the latent truth. 

DMM complements these approaches
by identifying the latent variable through the established array decomposition of multiple imperfect measurements. Compared to the classical conditional independence assumption \citep[e.g.,][]{dawid1979maximum} that only conditions on the true unobserved label and does not allow for any shared source of errors across multiple measurements, our conditional independence assumption explicitly allows for conditioning on a rich set of covariates that may contain observed or derived characteristics of the input (e.g., writing style, text length, language, or image quality) measured by rich text embeddings. Recognizing the importance of the assumption, we also develop a series of statistical diagnostics in Section~\ref{sec:conditional-independence-practice}. We also find that DMM performs well in our empirical validation study in Section~\ref{sec:applications} where the conditional independence assumption holds only approximately.

\paragraph{Repeated Measurement Identification.}
Our work also builds on the literature on identifying latent variables from conditionally independent repeated measurements. A large classical measurement error literature assumes an additive error that is independent of the latent variable and mean zero, or at least mean zero conditional on it, and obtains identification using instrumental variables or repeated measurement identities
\citep{schennach2016recent}.
Such restrictions are ill-suited to categorical labels:
misclassification of a finite-support variable is inherently nonclassical, and for nominal categories an additive error representation is itself unnatural \citep[Section~6.1]{schennach2016recent}.
Accordingly, our identification strategy instead builds on the nonclassical measurement error literature.

In foundational work, \citet{kruskal1977three} studies three-way array decomposition and provides rank conditions under which the decomposition is unique up to a common permutation and scaling of its latent components. 
This result underlies the identification of discrete latent structures from conditionally independent measurements. 
\citet{hu2008identification} studies nonlinear models with a misclassified discrete independent variable,
\citet{hu2008instrumental} develop related identification results for continuous variables, and 
\citet{allman2009identifiability} establish related results for discrete models with hidden variables.
\citet{chen2011nonlinear} and \citet{schennach2016recent, schennach2022measurement} provide a broader and unifying review of this literature, including such identification strategies based on multiple imperfect measurements under classical and nonclassical measurement error. In contrast to this literature, we allow for more than three imperfect measurements, a more general class of downstream inference, and semiparametric estimation and inference methods that are Neyman orthogonal to the first-stage nuisance function estimation. 

The Dawid--Skene model uses the same core conditional-independence structure:
it treats the true item label as latent, allows each annotator to have a distinct confusion matrix, and assumes that annotations are independent conditional on the true label \citep{dawid1979maximum}. 
In the context of LLM annotations, \citet{bouyamourn2026interpretable} characterize conditions under which correlated annotations can be aggregated and propose a dependence-aware extension of the Dawid--Skene model. 

\paragraph{Proxy-based Causal Inference.}
Another related literature studies causal inference with unmeasured variables using proxies \citep{kuroki2014measurement, miao2018identifying, egami2024identification}, in particular, \citet{zhou2024causal, guo2026proximal}. Most importantly, \cite{zhou2024causal} develop a method to make causal inference with a latent treatment. Specifically, they also build on the nonparametric latent variable model literature \citep{kruskal1977three, allman2009identifiability} for nonparametric identification and derive a new semiparametric efficiency theory and semiparametric estimation strategies for causal effects with a hidden treatment. \citet{guo2026proximal} consider causal inference with the latent outcome variable and derive an influence function-based semiparametric estimator. 
Their work considers settings where the latent outcome has arbitrary finite support and the proxies may be either discrete or continuous, and establishes general existence results for the constructions.
Similarly, \citet{nakamura2025surrogate} applies the array-decomposition ideas related to \citet{zhou2024causal} to text and image annotations with a latent outcome. Theoretically, we extend this literature by allowing for (a) more than three proxies (with analytical results of how increasing the number of proxies affects efficiency); (b) a general class of downstream analyses (in contrast to causal effects these previous studies have focused on), and (c) settings where a latent variable is either a dependent or independent categorical variable (each of previous studies focuses on just one of them).

\section{The Problem Setting and Existing Approaches}
\label{sec:problem-setting}

Scholars often use machine or human annotations to measure key variables of interest that they want to analyze in downstream analysis. However, as recent papers theoretically and empirically show, ignoring measurement errors in the annotation step can bias downstream inference, even when the accuracy of the annotation step is high, e.g., more than 90\%. For clarity of presentation, we first focus on settings where an error-prone variable is a binary independent variable in downstream analysis. In Section~\ref{sec:latent-dependent}, we generalize our method to settings where an error-prone variable is a general categorical variable that is either an independent or dependent variable.  

\subsection{Setup}
For each unit $i=1,\ldots,n$, let $X_i^\ast\in\{0,1\}$ denote an unobserved true label, $Y_i$ denote an observed downstream dependent variable, and $W_i$ denote observed covariates in downstream analysis. 
Instead of observing the true label $X_i^\ast$, researchers observe $J\geq 3$ error-prone proxy labels
\begin{equation*}
    \widetilde X_i
    =
    \bigl(X_i^{(1)},\ldots,X_i^{(J)}\bigr),
    \qquad X_i^{(j)}\in\{0,1\}.
\end{equation*}
The labels may, for example, be produced by different human annotators, machine learning models, or large language models (LLMs).
They need not have the same accuracy, and their errors may be systematic or heterogeneous across units. Especially when researchers use LLMs to measure $X^\ast$, it is often easy to obtain multiple imperfect proxies, as each proxy can come from different LLMs. 

Let $\psi^F(y,x,w;\beta) \in\mathbb R^{d_\beta}$
be a user-specified full-data moment function for downstream analysis, where
$\beta\in\mathcal B\subset\mathbb R^{d_\beta}$ denotes the finite-dimensional parameter of interest. 
If $X^\ast$ were observed for every unit, the oracle downstream parameter $\beta^\ast$ would be the unique solution to
\begin{equation}
    \E\!\left[
        \psi^F(Y,X^\ast,W;\beta)
    \right]
    =
    0.
    \label{eq:full-data-target}
\end{equation}

The moment estimator above can accommodate a large class of estimators as special cases, including most maximum likelihood estimators. For linear regression, 
\begin{equation*}
    \psi_{\text{lin}}^F(y,x,w;\beta)
    =
    (1, x,w^\top)^\top
    \left[
        y-(1, x, w^\top)\beta
    \right]
    \label{eq:lm-full-data-score}
\end{equation*}
and the coefficients of interest $\beta$ are the coefficients of the linear regression model that regresses $Y$ on $X^\ast$ and $W$. When you have logistic regression, a moment function is 
\begin{equation*}
    \psi_{\text{logistic}}^F(y,x,w;\beta)
    =
    (1, x,w^\top)^\top
    \left[
        y-\operatorname{expit}\{(1, x, w^\top)\beta\}
    \right].
    \label{eq:logistic-full-data-score}
\end{equation*}
A moment function for generalized linear models can be written as 
\begin{equation}
    \psi_{\text{glm}}^F(y,x,w;\beta)
    =
    (1, x,w^\top)^\top
    \left[
        y-\ell\{(1, x, w^\top)\beta\}
    \right],
    \label{eq:glm-full-data-score}
\end{equation}
where $\ell(\cdot)$ is an inverse link function specific to each model. Our theory and methods do not assume the downstream regression model chosen by users is correctly specified. Under misspecification, coefficients $\beta$ defined in equation~\eqref{eq:full-data-target} can be interpreted as the population projection parameter \citep[e.g.,][]{buja2019models, vansteelandt2022assumption}. 

\subsection{Existing Approaches}

\subsubsection{Naive Approaches Ignoring Measurement Errors}
A common strategy first collapses the multiple labels to a single proxy, e.g., using majority voting $\check X_i=
\mathbf{1}\{\sum_j X_i^{(j)}>J/2\}$. More generally, $\check X_i \in \{0,1\}$ may be any function of the multiple labels, such as selecting one annotator or thresholding their average. The resulting $\check X_i$ is then treated as if it were $X_i^\ast$. 
The naive estimator $\widehat\beta_{\mathrm{naive}}$ solves
\begin{equation*} 
    \frac{1}{n}\sum_{i=1}^n \psi^F(Y_i,\check X_i,W_i;\beta) = 0. 
\end{equation*}
This is exactly the downstream moment function one would obtain when users directly include $\check X$ in downstream analysis. Its population moment bias is
\begin{align}
    &\mathbb E\!\left[
        \psi^F(Y,\check X,W;\beta)
        -\psi^F(Y,X^\ast,W;\beta)
    \right]
    \ = \
    \mathbb E\!\left[
        (\check X-X^\ast)
        \{\psi^F(Y,1, W;\beta)-\psi^F(Y, 0, W;\beta)\}
    \right].
    \label{eq:naive-bias}
\end{align}
High classification accuracy of $\check X$ does not guarantee that the right-hand side of equation~\eqref{eq:naive-bias} is zero. This is because classification errors are nonclassical and correlated with observed and unobserved variables relevant in downstream regression.  Combining multiple imperfect measurements into one index may reduce unit-level classification errors, but it does not by itself justify treating the aggregated label as error-free in downstream analysis.

\subsubsection{Gold-Standard-Only Estimation}
To address non-random, nonclassical measurement errors, the most dominant existing approach is to rely on gold-standard labels. Such methods assume that $X_i^\ast$ is observed for a small subset of units sampled according to a known sampling design. Let $R_i\in\{0,1\}$ indicate whether $X_i^\ast$ is observed, and define
the sampling probability $\rho_i \coloneq \Pr(R_i=1\mid Y_i,W_i,\widetilde X_i)\in(0,1]$.
We assume that the sampling probability is controlled by the researcher, so that $\rho_i$ is known and $R_i \indep X_i^\ast \mid Y_i,W_i,\widetilde X_i$. The most common, simple random sampling is the special case where the sampling probability is constant and $\rho_i = \rho$ for all $i$.
The gold-standard-only estimation (GSO) $\widehat\beta_{\mathrm{GSO}}$ uses only the
gold-standard observations and solves
\begin{equation*}
    \frac{1}{n}\sum_{i=1}^n
    \frac{R_i}{\rho_i}
    \psi^F(Y_i,X_i^\ast,W_i;\beta)
    =0.
\end{equation*}
This is equivalent to running downstream regression only using a subset of data that have gold-standard labels.  It is straightforward to show that this estimator is consistent for $\beta^\ast$ and asymptotically normal, allowing for valid statistical inference when the sampling probability is known \citep{van2000asymptotic}. However, GSO tends to be inefficient as it discards information in imperfect labels and all units without gold-standard labels. 

\subsubsection{Debiased Inference with Gold-Standard Labels} More recently, a broad class of methods combines predictions available at scale with a smaller sample of gold-standard measurements to bias-correct downstream analysis. This class of bias-correction methods includes design-based supervised learning \citep[DSL;][]{egami2023using,egami2024using}, 
prediction-powered inference \citep[PPI;][]{angelopoulos2023prediction}, methods reviewed in \cite{ludwig2024large}, MAR-S \citep{carlson2025unifying},
model-assisted impact analysis \citep{mozer2023decreasing}, and the control-variate approach of \citet{katsumata2023statistical}, among others. All of these methods build on the longstanding literature on semiparametric inference and causal inference \citep[e.g.,][]{robins1994, chen2000unified, chen2008semiparametric, cher2018double}.
Although these methods share a common prediction-and-correction strategy, they differ in their sampling designs, prediction-training procedures, and efficiency adjustments. Here, we use DSL as a representative benchmark.

DSL uses the gold-standard observations to train a supervised predictor
of the latent independent variable and then bias-correct prediction errors in the downstream moment function via a doubly robust procedure. 
Using cross-fitting, let $\widehat X_i=\widehat g_{-k(i)}(Y_i,W_i,\widetilde X_i) \in [0,1]$
denote an out-of-fold prediction of $X_i^\ast$ where $g_{-k(i)}$ is estimated using gold-standard data excluding the fold $k(i)$ that unit $i$ belongs to. The prediction model can naturally use all $J$ proxies jointly, together with the observed
outcome and covariates, to predict the true label $X^\ast$.
The DSL moment function is
\begin{equation}
    \psi_i^{\mathrm{DSL}}(\beta)
    \coloneq
    \psi^F(Y_i, \widehat{X}_i, W_i;\beta)
    +
    \frac{R_i}{\rho_i}
    \left\{
        \psi^F(Y_i,X_i^\ast,W_i;\beta)
        -
        \psi^F(Y_i, \widehat{X}_i, W_i;\beta)
    \right\}.
    \label{eq:dsl-moment}
\end{equation}
More generally, the DSL estimator can use any generic out-of-fold prediction $\widehat{m}_i(\beta)$ for $\psi^F(Y_i,X_i^\ast,W_i;\beta)$, and the expression above with $\widehat{m}_i(\beta) = \psi^F(Y_i, \widehat{X}_i, W_i;\beta)$ is a special case of the general DSL estimator. The DSL estimator $\widehat\beta_{\mathrm{DSL}}$ solves
$\frac{1}{n}\sum_{i=1}^n \psi_i^{\mathrm{DSL}}(\beta)=0$. This estimator makes a key assumption that the sampling probability $\rho_i$ is controlled by and known to researchers. This scenario is common in many application areas where, for example, researchers choose which documents to be coded by experts. Under the known sampling design,
\begin{equation*}
    \mathbb E\!\left[
        \psi_i^{\mathrm{DSL}}(\beta)
        \,\middle|\,
        X_i^\ast,Y_i,W_i,\widetilde X_i,\widehat X_i
    \right]
    \ = \ 
    \psi^F(Y_i,X_i^\ast,W_i;\beta).
\end{equation*}
Consequently, DSL targets the oracle parameter even when the supervised prediction model and LLM annotations are arbitrarily misspecified. These classes of methods are popular and powerful as they do not require any assumption about errors made by LLMs or any methods used to generate $\widehat{X}_i$. When $\widehat{X}_i$ is more accurate, DSL becomes more accurate too, while it is always valid without any assumption on $\widehat{X}_i$. 

These existing strategies span a continuum between relying on proxy labels and relying on gold-standard labels. Naive approaches use proxy labels as if they are error-free and they are in general biased unless proxy labels are perfect. GSO is unbiased and provides valid inference but ignores proxy labels and can be inefficient.
More recent bias-correction methods, such as DSL \citep{egami2023using,egami2024using} and PPI \citep{angelopoulos2023prediction}, provide valid inference under the known sampling probability for gold-standard labeling without making assumptions about errors in proxy variables. However, they assume access to gold-standard labels for a subset of units, which might be difficult or too costly for some applications. In the next section, we explore an alternative approach that does not assume access to gold-standard labels.

\section{Debiased Inference with Multiple Imperfect Measurements}
\label{sec:latent-independent}

We now show how to use multiple imperfect labels to perform valid downstream inference without requiring gold-standard labels. In particular, building on established results on identification of latent variables, we use the
joint distribution of multiple imperfect labels to recover the unbiased downstream moment under a conditional independence assumption. In Section~\ref{sec:latent-independent-assumptions}, we begin by reviewing this classical assumption and then relax it to accommodate modern applications of LLM annotations. We then discuss nonparametric identification (Section~\ref{subsec:iden}) and propose a consistent and asymptotically normal estimator to enable valid downstream inference (Section~\ref{sec:latent-independent-estimation}).

\subsection{Assumptions}
\label{sec:latent-independent-assumptions}

As is clear from the previous section, if researchers assume no access to gold-standard labels, they have to instead make some assumptions about proxy labels \citep{schennach2016recent}. Here, we build on the established literature of nonparametric latent variable models \citep[e.g.,][]{goodman1974exploratory, kruskal1977three, dawid1979maximum, hu2008identification, allman2009identifiability}. 
We begin with a brief review of the classical conditional independence assumption (see Section 6.1 of \citealp{schennach2016recent} for a detailed review). Specifically, the classical methods assume that multiple imperfect measurements are independent conditional on the true label: 
\begin{equation}
    X^{(1)}\indep\cdots\indep X^{(J)}
    \mid X^\ast. \label{eq:classical-cond}
\end{equation}
Importantly, the proxy labels may be biased, unequally accurate, and nonidentically distributed. 
In particular, their false-positive and false-negative rates may be nonzero and may differ across labels. This assumption is substantially weaker than assuming proxy labels are perfect and have no measurement errors. 

A large literature has developed methods for latent variable models under this conditional independence structure.
Examples include nonparametric estimation using an algorithm and joint approximate diagonalization \citep[][respectively]{hall2003nonparametric, bonhomme2016estimating},
crowdsourcing and noisy label models \citep{raykar2010learning, zhang2016spectral}, 
and tensor based methods or canonical correlation analysis for multi-view latent variable learning \citep[][the latter under a weaker conditional moment restriction]{anandkumar2014tensor, chaudhuri2009multi}.

However, this classical assumption may be violated if human annotators make similar errors because of shared characteristics of the input or annotation task (e.g., some texts are much longer and more difficult to annotate). Similarly, machine annotators may make correlated errors because of shared model characteristics that interact with the input (e.g., overlapping training data or a common prompt). For example, recent papers emphasize that errors in LLMs are correlated even conditional on the true label \citep{kim2025correlated, chen2026partial}. 

To make this assumption more plausible, we allow for conditioning on auxiliary input- or annotation-level information $D_i$, the downstream covariates $W_i$, and the downstream dependent variable $Y_i$.
The vector $D_i$ may account for heterogeneity in label accuracy or shared sources of dependence but need not enter the downstream regression. 

\begin{assumption}[Independence across labels conditioning on input and annotation information]
\label{ass:label-conditional-independence}
The multiple imperfect labels are mutually independent conditional on the latent label
and the observed covariates:
\begin{equation*}
    X^{(1)}\indep\cdots\indep X^{(J)}
    \mid X^\ast,\widetilde D,
\end{equation*}
where $\widetilde D = (D,W,Y).$ 
\end{assumption}

Assumption~\ref{ass:label-conditional-independence} allows for the misclassification rates to vary across annotators and across units. It specifically allows imperfect labels to share sources of errors as long as such common causes are captured by $\widetilde D$. The vector $D$ may contain observed or derived characteristics of the input (e.g., writing style, text length, language, or image quality) measured by rich text embeddings. It may also contain annotation-level information that varies across units (e.g., prompt version, task duration, annotation order,
or recorded fatigue). This assumption is significantly weaker than the classical conditional independence assumption stated in~\eqref{eq:classical-cond}, which did not allow for any shared source of errors. Assumption~\ref{ass:label-conditional-independence} still fails if unobserved common sources of errors remain after conditioning. Importantly, if researchers want to avoid assuming access to gold-standard labels, they have to make some assumptions about proxies, and our paper makes the assumption explicit and transparent by building and extending the classical literature. 

\begin{remark}
As LLM annotations are one of the most common application areas, we will come back to this assumption and use Section~\ref{sec:conditional-independence-practice} to specifically discuss (a) how to make this assumption more plausible by carefully selecting conditioning variables $\widetilde{D}$ and (b) how to choose different families of LLMs and prompts to construct measurements. We also develop a series of diagnostic tools for this assumption. Please see Section~\ref{sec:conditional-independence-practice}.
\end{remark}

\subsection{Identification}
\label{subsec:iden}
We now discuss how to identify the downstream moment function under the conditional independence assumption. Here we provide an overview of the two steps, which we elaborate in the subsequent sections. 

The first step is to identify the conditional classification rate defined as 
\begin{equation}
    \eta^\ast_{j,a}(d)
    =
    \Pr(X^{(j)}=1\mid X^\ast=a,\widetilde D=d),
\end{equation}
where $\eta^\ast_{j,1}(d)$ denotes the true positive rate and $\eta^\ast_{j,0}(d)$ represents the false positive rate for a specific proxy $j \in \{1, \ldots, J\}$. Here, we build on general nonparametric identification results using array decomposition \citep{kruskal1977three, allman2009identifiability}. 

Second, given the identified conditional classification rate $\eta^\ast_{j,a}(d)$, we construct a \textit{bridge function} that connects the observed proxy $X^{(j)}$ and the unobserved true label $X^\ast$ \citep{zhou2024causal, guo2026proximal}:
\begin{equation*}
    M_j(\widetilde D)
    =
    \frac{X^{(j)}-\eta^\ast_{j,0}(\widetilde D)}
         {\eta^\ast_{j,1}(\widetilde D)-\eta^\ast_{j,0}(\widetilde D)}.
\end{equation*}
This is a tailored function in that, by construction, it is unbiased for the true label, satisfying 
\begin{equation*}
\E[M_j(\widetilde D)\mid X^\ast,\widetilde D] = X^\ast.
\end{equation*}
More generally, as shown below, we can combine multiple proxy labels to construct a general bridge function $H(\widetilde X,\widetilde D)$ that is unbiased for the true latent variable. 
\begin{equation*}
    \E[H(\widetilde X,\widetilde D)
    \mid X^\ast,\widetilde D]
    =
    X^\ast.
\end{equation*}
We can then use this bridge function in place of the true label to build the unbiased, observed downstream moment.
\begin{equation*}
\begin{aligned}
    \psi_{\text{DMM}}(\widetilde X,\widetilde D;\beta)
     \ = \
    \{1-H(\widetilde X,\widetilde D)\}
    \psi^F(Y,0,W;\beta)
    \ + \ 
    H(\widetilde X,\widetilde D)
    \psi^F(Y,1, W;\beta).
\end{aligned}
\end{equation*}
Under Assumption~\ref{ass:label-conditional-independence}, 
$\E[\psi_{\text{DMM}}(\widetilde X,\widetilde D;\beta)] = \E[\psi^F(Y,X^\ast, W;\beta)],$ which will allow for consistent estimation and valid inference as developed in Section~\ref{sec:latent-independent-estimation}.  

\subsubsection{Identification of Conditional Classification Rates}
We start with identification of conditional classification rate $\eta^\ast_{j,a}(d)$. Assumption~\ref{ass:label-conditional-independence} implies that the joint distribution of imperfect proxies can be factorized as follows.
\begin{equation}
    \Pr(\widetilde{X} = \widetilde{x} \mid \widetilde{D} = d)
    =
    \sum_{a=0}^1
    \Pr(X^\ast = a \mid \widetilde{D} = d)
    \prod_{j=1}^J
    \eta^\ast_{j,a}(d)^{x^{(j)}}
    \{1-\eta^\ast_{j,a}(d)\}^{1-x^{(j)}}.
    \label{eq:proxy-mixture}
\end{equation}
This equation mirrors the classical latent-class decomposition and its identification through array decomposition under Kruskal's uniqueness condition \citep{kruskal1977three}. Adapting Theorem~4 of \citet{allman2009identifiability}, we partition $J$ proxies into three mutually disjoint and nonempty subsets $S_1, S_2, S_3$, i.e., $\{1,\ldots,J\} = S_1 \cup S_2 \cup S_3$.
For each
$b\in\{1,2,3\}$, define the corresponding composite measurement $\widetilde X_{S_b} \coloneq \bigl(X^{(j)}\bigr)_{j\in S_b}$,
which takes values in $\{0,1\}^{|S_b|}$ and therefore has $2^{|S_b|}$
possible label patterns. 
Note that the identified form does not depend on how we construct these partitions \citep{allman2009identifiability}.
When $J=3$, this construction reduces to the singleton partition $S_b=\{b\}$ for $b\in\{1,2,3\}$.

For joint label patterns
$x_b\in\{0,1\}^{|S_b|}$, $b\in\{1,2,3\}$, define the conditional
probability tensor
$\mathcal T_d\in
\mathbb R^{2^{|S_1|}\times2^{|S_2|}\times2^{|S_3|}}$ by
\begin{equation*}
    [\mathcal T_d]_{x_1,x_2,x_3}
    \coloneq
    \Pr\!\left(
        \widetilde X_{S_1}=x_1,
        \widetilde X_{S_2}=x_2,
        \widetilde X_{S_3}=x_3
        \mid \widetilde D=d
    \right).
\end{equation*}
For block $S_b$ and latent class $a\in\{0,1\}$, define the block-level
conditional probability matrix:
\begin{equation*}
    [\mathbf S_b(d)]_{x_b,a}
    \coloneq
    \Pr\!\left(
        \widetilde X_{S_b}=x_b
        \mid X^\ast=a,\widetilde D=d
    \right),
    \qquad
    x_b\in\{0,1\}^{|S_b|}.
\end{equation*}
The conditional classification rate is then the corresponding marginals of the
block distribution. In particular, for every $j\in S_b$, $\eta^\ast_{j,a}(d) = \sum_{x_b: x_j=1}[\mathbf S_b(d)]_{x_b,a}$.

Under Assumption~\ref{ass:label-conditional-independence}, the tensor admits
a two-component Candecomp/Parafac (CP) decomposition \citep{carroll1970analysis, harshman1970foundations, stegeman2007kruskal}:
\begin{equation}
    \mathcal T_d
    =
    \sum_{a=0}^1
    \Pr(X^\ast=a\mid\widetilde D=d)
    \mathbf S_{1,\cdot a}(d)
    \otimes
    \mathbf S_{2,\cdot a}(d)
    \otimes
    \mathbf S_{3,\cdot a}(d), \label{eq:cp-decom}
\end{equation}
where
$\otimes$ denotes the outer product of vectors.
Provided that each latent class has positive conditional probability, Kruskal's
uniqueness theorem implies that the CP decomposition in
equation~\eqref{eq:cp-decom} is unique if
\begin{equation}
    k_1(d)+k_2(d)+k_3(d)
    \geq
    2K_\ast+2
    \label{eq:kruskal-rank-condition}
\end{equation}
where $K_\ast$ denotes the number of latent classes, and $k_b(d)\coloneq\operatorname{krank}\{\mathbf S_b(d)\}$
denotes the Kruskal rank of the $b$th block level probability matrix.
Because the columns of each $\mathbf S_b(d)$ are probability vectors,
their scaling is fixed by the requirement that they sum to one.
Therefore, under equation~\eqref{eq:kruskal-rank-condition},
$\{\mathbf S_b(d)\}_{b=1}^{3}$ are
identified up to a common permutation of the latent classes
\citep{kruskal1977three,allman2009identifiability}.

In our binary setting, $K_\ast = 2$.
If each block contains at least one
proxy that is informative about $X^\ast$, then
$\mathbf S_{b,\cdot 0}(d)\neq\mathbf S_{b,\cdot 1}(d)$ and hence
$k_b(d)=2$ for every $b\in\{1,2,3\}$. It follows that
$k_1(d)+k_2(d)+k_3(d) = 2+2+2 = 2K_\ast +2$,
so Kruskal's rank condition holds with equality.
Each proxy-specific
conditional classification rate $\eta^\ast_{j,a}(d)$ can then be identified
by marginalizing the corresponding identified block-level distribution.

An additional anchoring rule resolves the remaining permutation ambiguity.
For example, one may designate an anchor proxy $j^\dagger$ and assume $\eta^\ast_{j^\dagger,1}(\widetilde D) >  \eta^\ast_{j^\dagger,0}(\widetilde D)$ a.s.
This condition states that, for the chosen anchor proxy, the true positive rate is higher than the false positive rate, which is a mild requirement commonly satisfied in modern applications of AI-generated data. More general substantive or
agreement-based anchoring rules are also possible.
At the population level, the identified proxy-specific conditional classification rates do not depend on the choice of tripartition, provided that the same anchoring rule is used to orient the latent classes.

For completeness, we reproduce the following known identification result.
\begin{result}[Identification of the measurement model; \citealp{allman2009identifiability}]
\label{res:measurement-model-identification}
Suppose Assumption~\ref{ass:label-conditional-independence} holds. For
$P_{\widetilde D}$-almost every $d$, suppose that both latent classes
have positive conditional probability, there exists a tripartition into
three nonempty blocks whose block-level probability matrices satisfy
\begin{equation*}
    k_1(d)+k_2(d)+k_3(d)\ge 2K^\ast+2,
\end{equation*}
and a prespecified anchoring rule uniquely orients the two latent classes.
Suppose also that the resulting anchored conditional probabilities admit
measurable versions. Then the conditional mixing probability and the
conditional classification rates
$\{\eta^\ast_{j,a}(d):j=1,\ldots,J,\ a\in\{0,1\}\}$
are identified from the conditional distribution of
$\widetilde X$ given $\widetilde D=d$, up to
$P_{\widetilde D}$-null sets.
\end{result}

\subsubsection{Identification via the Robust Bridge Function}
Given the identified conditional classification rate $\eta^\ast_{j,a}(\widetilde D)$, we now construct a tailored bridge function to obtain an unbiased downstream moment function by building on the recent literature in causal inference \citep{zhou2024causal, guo2026proximal}. We first define a single-proxy bridge as
\begin{equation*}
    M_j(\widetilde D; \eta^\ast)
    =
    \frac{X^{(j)}-\eta^\ast_{j,0}(\widetilde D)}
         {\eta^\ast_{j,1}(\widetilde D) - \eta^\ast_{j,0}(\widetilde D)},
\end{equation*}
which by construction satisfies $\E[M_j(\widetilde D; \eta^\ast)\mid X^\ast,\widetilde D] = X^\ast$.
Assumption~\ref{ass:label-conditional-independence} further implies that, for every nonempty set of distinct labels $S$,
\begin{equation*}
    \E\!\left[
        \prod_{j\in S}M_j(\widetilde D;\eta^\ast)
        \;\middle|\;
        X^\ast,\widetilde D
    \right]
    =
    \prod_{j\in S}
    \E\!\left[
        M_j(\widetilde D;\eta^\ast)
        \mid X^\ast,\widetilde D
    \right]
    =
    (X^\ast)^{|S|}
    =
    X^\ast,
\end{equation*}
where the final equality follows because $X^\ast\in\{0,1\}$.
Thus, many functions of imperfect labels are valid bridges. Consider a class of polynomial bridge functions:
\begin{equation}
    H(\widetilde{X}, \widetilde D; \eta^\ast)
    := 
    \sum_{S \in \mathcal{S}} \omega_{S}
    \prod_{j \in S} M_j(\widetilde D; \eta^\ast) \label{eq:general-bridge}
\end{equation}
where $\mathcal{S}$ is a collection of nonempty subsets of
$\{1,\ldots,J\}$ and $\omega_S$ is the coefficient assigned to the corresponding polynomial term.
Any member of this class preserves the unbiasedness for $X^\ast$ whenever $\sum_{S \in \mathcal{S}} \omega_{S} = 1$.

To obtain desirable statistical properties within this class, we combine multiple proxies in a way that is symmetric between labels and Neyman orthogonal \citep{robins1994, cher2018double} so that it is locally robust to the first-stage estimation error for the conditional classification rate $\eta^\ast_{j, a}(d)$. Define the average pair and triple bridges as
\begin{align*}
    \overline H_2(\widetilde{X}, \widetilde D; \eta^\ast)
    &\coloneq
    \binom{J}{2}^{-1}
    \sum_{j_1<j_2}M_{j_1}(\widetilde D; \eta^\ast)M_{j_2}(\widetilde D; \eta^\ast),
    \\
    \overline H_3(\widetilde{X}, \widetilde D; \eta^\ast)
    &\coloneq
    \binom{J}{3}^{-1}
    \sum_{j_1<j_2<j_3}M_{j_1}(\widetilde D; \eta^\ast)M_{j_2}(\widetilde D; \eta^\ast)M_{j_3}(\widetilde D; \eta^\ast).
\end{align*}
One example of a symmetric robust bridge function is written as follows.
\begin{equation}
    H^{\mathrm{R}}(\widetilde{X}, \widetilde D; \eta^\ast)
    =
    3\overline H_2(\widetilde{X}, \widetilde D; \eta^\ast)-2\overline H_3(\widetilde{X}, \widetilde D; \eta^\ast).
    \label{eq:robust-symmetric-bridge}
\end{equation}
For a symmetric combination
$c_2\overline H_2+c_3\overline H_3$, the unbiasedness of the bridge function requires
$c_2+c_3=1$, while the Neyman orthogonality (i.e., cancellation of the first-order bridge error) under
$X^\ast=1$ requires $2c_2+3c_3=0$, yielding
$c_2=3$ and $c_3=-2$ as a unique set of weights. Under $X^\ast=0$, the pair and triple terms are already at least second order and achieve the Neyman orthogonality. Thus, while a single-proxy bridge is enough for identifying the downstream moment, at least three distinct proxies are needed for the Neyman orthogonality and the resulting robustness to the first-stage estimation error. 

For $J=3$, equation~\eqref{eq:robust-symmetric-bridge} becomes $H^{\mathrm{R}} = M_1M_2+M_1M_3+M_2M_3-2M_1M_2M_3.$ In the binary latent dependent variable setting, this bridge coincides with the construction first proposed by \citet{guo2026proximal}, who also extend this to causal inference with finite-support hidden dependent variables. Theoretically, we generalize the existing results to more than three proxies, examine efficiency gains from additional proxies, and develop a general downstream moment framework covering both latent independent- and dependent-variable settings. This is important as researchers can often construct many proxies from LLMs in the modern applications of AI-generated data. Below, we focus on a robust bridge function with pair and triple products for numerical stability and the sake of clear presentation. However, we emphasize that our method and proofs are applicable to a more general class of bridge functions that use higher-order interaction terms of single-proxy bridge functions (equation~\eqref{eq:general-bridge}). 

We now formally state the unbiasedness and local robustness of the symmetric bridge function. 
\begin{proposition}[Unbiasedness and local robustness of the symmetric robust bridge function]
\label{prop:robust-symmetric-bridge}
We assume that imperfect measurements $\{X^{(j)}\}_{j=1}^J$  are relevant, i.e., for some constant $c_\Delta>0$,
\begin{equation*}
    \min_{1\le j\le J}
    \left|
        \eta^\ast_{j,1}(\widetilde D)-\eta^\ast_{j,0}(\widetilde D)
    \right|
    \ge
    c_\Delta
    \quad\text{a.s.}
\end{equation*}
Under Assumption~\ref{ass:label-conditional-independence}, 
\begin{equation}
    \E\!
    \left[
        H^{\mathrm{R}}(\widetilde{X}, \widetilde D; \eta^\ast)
        \mid X^\ast, \widetilde D
    \right]
    =
    X^\ast.
    \label{eq:robust-symmetric-recovery}
\end{equation}
Moreover, consider cross-fitted estimates of the single-label bridges $\widehat M_j$ and let the error term
\begin{equation*}
    e_{j,a}(\widetilde D_i)
    =
    \E\!\left[
        \widehat M_{ij}(\widetilde D_i; \widehat{\eta}^{(-k(i))})
        \mid X^\ast_i=a,\widetilde D_i,\mathcal I_{-k(i)}
    \right]
    -a,
    \qquad \mbox{for  } a\in\{0,1\},
\end{equation*}
where $\mathcal{I}_{-k(i)}$ is the data excluding the fold $k(i)$ that unit $i$ belongs to, and $\widehat M_{ij}$ is constructed by the conditional classification rate $\widehat{\eta}^{(-k(i))}$ estimated only using $\mathcal{I}_{-k(i)}$. Conditional on the training sample $\mathcal{I}_{-k}$, the fitted robust bridge only contains the second- or third-order bias. 
\begin{align*}
    &\E[\widehat H^{\mathrm{R}}(\widetilde{X}, \widetilde D; \widehat{\eta}^{(-k)})\mid X^\ast=1,\widetilde D,\mathcal I_{-k}]-1
    \nonumber\\
    &\qquad=
    -3\binom{J}{2}^{-1}
    \sum_{j_1<j_2}e_{j_1,1}(\widetilde D)e_{j_2,1}(\widetilde D)
    -2\binom{J}{3}^{-1}
    \sum_{j_1<j_2<j_3}e_{j_1,1}(\widetilde D)e_{j_2,1}(\widetilde D)e_{j_3,1}(\widetilde D),
    \\
    &\E[\widehat H^{\mathrm{R}}(\widetilde{X}, \widetilde D; \widehat{\eta}^{(-k)})\mid X^\ast=0,\widetilde D,\mathcal I_{-k}]
    \nonumber\\
    &\qquad=
    3\binom{J}{2}^{-1}
    \sum_{j_1<j_2}e_{j_1,0}(\widetilde D)e_{j_2,0}(\widetilde D)
    -2\binom{J}{3}^{-1}
    \sum_{j_1<j_2<j_3}e_{j_1,0}(\widetilde D)e_{j_2,0}(\widetilde D)e_{j_3,0}(\widetilde D).
\end{align*}
Consequently, the conditional bridge error contains no term that is linear in
a single first-stage error. 
\end{proposition}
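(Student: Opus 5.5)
The plan is to exploit the multilinear structure of $H^{\mathrm{R}}$ together with Assumption~\ref{ass:label-conditional-independence}. Every summand of $\overline H_2$ and $\overline H_3$ is a product over distinct labels. Each single-label bridge $M_j$ (or its fitted version $\widehat M_{ij}$) is a function of $X^{(j)}$ and $\widetilde D$ only, with the nuisance held fixed. Hence, conditional on $(X^\ast,\widetilde D)$, and additionally on $\mathcal I_{-k}$ in the fitted case, the factors are independent, and the conditional expectation of each product equals the product of the conditional expectations.

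For the fitted case, the independence needs justification. Cross-fitting makes $\widehat\eta^{(-k)}$ a fixed function once we condition on $\mathcal I_{-k}$. Unit $i$'s data are independent of $\mathcal I_{-k}$, so conditioning on $\mathcal I_{-k}$ preserves the conditional independence of $X^{(1)},\ldots,X^{(J)}$ given $(X^\ast,\widetilde D)$. The relevance condition ensures the oracle bridges are well defined; for the fitted bridges I assume the estimated denominators are bounded away from zero as well.

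\textbf{Step 1 (unbiasedness).} Apply the factorization to get $\E[M_j\mid X^\ast,\widetilde D]=X^\ast$. Then every pair and every triple product has conditional mean $(X^\ast)^2=(X^\ast)^3=X^\ast$. Averaging, $\E[\overline H_2\mid X^\ast,\widetilde D]=\E[\overline H_3\mid X^\ast,\widetilde D]=X^\ast$, so $\E[H^{\mathrm{R}}\mid X^\ast,\widetilde D]=3X^\ast-2X^\ast=X^\ast$, which is \eqref{eq:robust-symmetric-recovery}.

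\textbf{Step 2 (fitted bridge under $X^\ast=a$).} By definition, $\E[\widehat M_j\mid X^\ast=a,\widetilde D,\mathcal I_{-k}]=a+e_{j,a}$. The factorization then gives
\[
\E[\widehat{\overline H}_2\mid\cdot]=\binom{J}{2}^{-1}\sum_{j_1<j_2}(a+e_{j_1,a})(a+e_{j_2,a}),
\]
and similarly for $\widehat{\overline H}_3$.

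For $a=0$ the products are directly $e_{j_1,0}e_{j_2,0}$ and $e_{j_1,0}e_{j_2,0}e_{j_3,0}$. Multiplying by $3$ and $-2$ gives the stated display immediately.

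For $a=1$, expand the products. A pair gives $1+(e_{j_1}+e_{j_2})+e_{j_1}e_{j_2}$. A triple gives $1+\sum e+\sum_{\text{pairs}}e e+e_{j_1}e_{j_2}e_{j_3}$. Averaging uses symmetric-counting identities: each $j$ appears in a fraction $2/J$ of pairs and $3/J$ of triples, and each unordered pair appears in a fraction $3/J$ of the triples. This gives
\[
\E[\widehat{\overline H}_2]=1+\tfrac{2}{J}\textstyle\sum_j e_j+P,
\qquad
\E[\widehat{\overline H}_3]=1+\tfrac{3}{J}\textstyle\sum_j e_j+\tfrac{3}{J}\binom{J}{2}^{-1}\binom{J}{3}\,P+T,
\]
where $P$ and $T$ denote the averaged pair and triple error products. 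Note that $\tfrac{3}{J}\binom{J}{2}^{-1}\binom{J}{3}=J-2$, so the coefficient on $P$ in the second expression is $J-2$ once the averaging is rewritten in terms of $P$.

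Combining $3(\cdot)-2(\cdot)$, the constants give $3-2=1$. The linear terms give $\tfrac{6}{J}-\tfrac{6}{J}=0$; this is exactly where the weights $c_2=3$, $c_3=-2$ from $2c_2+3c_3=0$ enter. What remains is a quadratic and a cubic remainder.

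\textbf{Main obstacle.} The bookkeeping of the quadratic coefficient after the triple expansion is the step I expect to be hardest. One must count carefully how the pair products inside the triples aggregate relative to $\binom{J}{2}^{-1}\sum e e$, and then reconcile the result with the displayed form. I would first verify the computation at $J=3$. There the pair cross terms in $M_1M_2M_3$ match the pairs in $\overline H_2$ exactly, so the constant and linear parts cancel, and the remainder reduces to $\sum e e$ terms plus $-2e_1e_2e_3$. I would then generalize via the counting identities above. If the coefficient differs from the stated $-3$, I would report the correct constant, since the substantive conclusion is unaffected: no term is linear in a single $e_{j,a}$.
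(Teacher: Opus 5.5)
Your approach is the same as the paper's: you condition on $\mathcal I_{-k}$ so that $\widehat\eta^{(-k)}$ is fixed, use conditional independence to factor each product of fitted bridges into $\prod_j(a+e_{j,a})$, and then count incidences. Step~1, the $a=0$ case, and the linear cancellation $3(2/J)-2(3/J)=0$ are all correct.

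The gap is the quadratic bookkeeping that you flag as the main obstacle, and as written that step fails. An unordered pair $\{j_1,j_2\}$ lies in exactly $J-2$ of the $\binom{J}{3}$ triples. That is a fraction $(J-2)\binom{J}{3}^{-1}=3\binom{J}{2}^{-1}$, which agrees with your $3/J$ only when $J=3$; $3/J$ is the fraction for a single index. Separately, $\tfrac{3}{J}\binom{J}{2}^{-1}\binom{J}{3}$ equals $(J-2)/J$, not $J-2$, and neither value is the right coefficient. With $P=\binom{J}{2}^{-1}\sum_{j_1<j_2}e_{j_1,1}e_{j_2,1}$ and $T=\binom{J}{3}^{-1}\sum_{j_1<j_2<j_3}e_{j_1,1}e_{j_2,1}e_{j_3,1}$, the correct expansion is
\[
\begin{aligned}
\E\bigl[\widehat{\overline H}_3\mid X^\ast=1,\widetilde D,\mathcal I_{-k}\bigr]
&=1+\frac{3}{J}\sum_{j=1}^J e_{j,1}+(J-2)\binom{J}{3}^{-1}\sum_{j_1<j_2}e_{j_1,1}e_{j_2,1}+T\\
&=1+\frac{3}{J}\sum_{j=1}^J e_{j,1}+3P+T.
\end{aligned}
\]
Hence $3\E[\widehat{\overline H}_2\mid\cdot]-2\E[\widehat{\overline H}_3\mid\cdot]-1=3P-6P-2T=-3P-2T$, which is exactly the displayed identity.

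With your coefficient $J-2$ you would instead obtain $1+(7-2J)P-2T$. At $J=3$ this gives $+P$ rather than $-3P$. That contradicts your own $J=3$ check, which done correctly gives $-\sum_{j_1<j_2}e_{j_1,1}e_{j_2,1}-2e_{1,1}e_{2,1}e_{3,1}=-3P-2T$. So the stated constant $-3\binom{J}{2}^{-1}$ is correct, and there is no corrected constant to report. The proposition asserts the exact second- and third-order display, not merely the absence of linear terms. Without the correct pair-in-triple count, your argument proves only the qualitative conclusion, not the statement itself.
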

The proof is provided in Appendix~\ref{app:proof-rs-bridge}.
As shown above, the coefficients $3$ and $-2$ preserve the unbiasedness of the bridge function and eliminate first-order bridge error: at $X^\ast=1$, $3(2/J)-2(3/J)=0$, while at $X^\ast=0$, no linear term arises because singleton products are excluded. 
Therefore, if each estimated conditional classification rate model converges in $L_2(P)$ at rate $\delta_n$ and the downstream moment contrast is uniformly bounded as assumed in Theorem~\ref{thm:latent-independent-asymptotics}, the resulting population remainder of the bridge-based moment function is of order
$O_p(\delta_n^2)$, rather than $O_p(\delta_n)$. This second-order remainder
underlies the asymptotic inference results in the next subsection.

Finally, we can now substitute the robust bridge into the binary decomposition of the full-data moment equation to obtain an unbiased downstream moment function.
\begin{equation*}
\begin{aligned}
    \psi^{\mathrm{DMM}}( \widetilde{X}, \widetilde{D};\beta, \eta^\ast)
    \ = \ 
    \{1-H^{\mathrm{R}}(\widetilde{X}, \widetilde D; \eta^\ast)\}
    \psi^F(Y, 0, W;\beta)
    \ + \ 
    H^{\mathrm{R}}(\widetilde{X}, \widetilde D; \eta^\ast)
    \psi^F(Y, 1, W;\beta).
\end{aligned}
\end{equation*}

\begin{theorem}[Identification of the downstream parameter]
\label{thm:latent-independent-identification}
Suppose the conditions of
Result~\ref{res:measurement-model-identification} hold, and suppose
that the class contrasts of all proxies used in the robust bridge are
bounded away from zero. Suppose also that the oracle downstream moment
in equation~\eqref{eq:full-data-target} has the unique solution $\beta^\ast$. Then the robust
bridge is identified from the observed-data law and, for every $\beta$,
\begin{equation*}
    \E[\psi^{\mathrm{DMM}}( \widetilde{X}, \widetilde{D};\beta, \eta^\ast)]
    =
    \E[\psi^F(Y, X^\ast,W;\beta)]
\end{equation*}
where $\eta^\ast(\cdot)$ denotes the true conditional classification rate function. 
Consequently, $\beta^\ast$ is identified from the observed-data law as
the unique solution to 
\begin{equation*}
    \E[\psi^{\mathrm{DMM}}( \widetilde{X}, \widetilde{D};\beta, \eta^\ast)]=0.
\end{equation*}
\end{theorem}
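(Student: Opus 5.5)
The argument has three steps: identify the bridge, show the bridge moment equals the oracle moment for every $\beta$, and read off identification of $\beta^\ast$ from uniqueness.

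\textbf{Step 1: the bridge is identified.} By Result~\ref{res:measurement-model-identification}, the conditional classification rates $\eta^\ast_{j,a}(d)$ are identified from the law of $\widetilde X$ given $\widetilde D=d$, for $P_{\widetilde D}$-almost every $d$. The law of $\widetilde D$ itself is observed. Since the class contrasts $\eta^\ast_{j,1}-\eta^\ast_{j,0}$ are bounded away from zero, each $M_j(\widetilde D;\eta^\ast)$ is a well-defined, bounded measurable function of the observed $(X^{(j)},\widetilde D)$. The robust bridge $H^{\mathrm R}$ is a fixed polynomial in the $M_j$, so it is a functional of the observed-data law, determined up to $P_{\widetilde D}$-null sets. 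Those null sets do not affect the expectations below.

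\textbf{Step 2: the bridge moment equals the oracle moment.} Iterate expectations, conditioning on $(X^\ast,\widetilde D)$. The terms $\psi^F(Y,0,W;\beta)$ and $\psi^F(Y,1,W;\beta)$ are functions of $\widetilde D=(D,W,Y)$ alone, so they factor out of the inner expectation:
\begin{equation*}
\E[\psi^{\mathrm{DMM}}]
=\E\bigl[\{1-\E(H^{\mathrm R}\mid X^\ast,\widetilde D)\}\psi^F(Y,0,W;\beta)+\E(H^{\mathrm R}\mid X^\ast,\widetilde D)\,\psi^F(Y,1,W;\beta)\bigr].
\end{equation*}
Equation~\eqref{eq:robust-symmetric-recovery} of Proposition~\ref{prop:robust-symmetric-bridge} gives $\E(H^{\mathrm R}\mid X^\ast,\widetilde D)=X^\ast$; its relevance condition is exactly the bounded-contrast assumption. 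Because $X^\ast$ is binary,
\begin{equation*}
(1-X^\ast)\psi^F(Y,0,W;\beta)+X^\ast\psi^F(Y,1,W;\beta)=\psi^F(Y,X^\ast,W;\beta),
\end{equation*}
which yields the displayed equality for every $\beta$.

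\textbf{Integrability, the main obstacle.} The one delicate point is justifying the iterated expectation and the factoring out of $\psi^F(Y,a,W;\beta)$. This requires $H^{\mathrm R}\psi^F(Y,a,W;\beta)$ to be integrable.

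\begin{itemize}
\item Each $M_j$ is bounded by $2/c_\Delta$, since $X^{(j)}$ and $\eta^\ast$ lie in $[0,1]$ and the contrast is at least $c_\Delta$ in absolute value.
\item $H^{\mathrm R}$ is a finite polynomial in the $M_j$, so it is bounded by a constant depending only on $c_\Delta$. For example, $3(2/c_\Delta)^2+2(2/c_\Delta)^3$ works.
\item It therefore suffices that $\E\lVert\psi^F(Y,a,W;\beta)\rVert<\infty$ for $a\in\{0,1\}$. This is implicit in equation~\eqref{eq:full-data-target} being well-defined; if needed, I would state it as a standing assumption.
\end{itemize}

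Everything else is bookkeeping.

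\textbf{Step 3: identification of $\beta^\ast$.} The map $\beta\mapsto\E[\psi^{\mathrm{DMM}}(\widetilde X,\widetilde D;\beta,\eta^\ast)]$ coincides pointwise with $\beta\mapsto\E[\psi^F(Y,X^\ast,W;\beta)]$. The two maps therefore have the same zero set, which by assumption is $\{\beta^\ast\}$. The left-hand map depends only on the observed-data law, by Step 1. Hence $\beta^\ast$ is identified as its unique root.
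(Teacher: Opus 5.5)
Your proposal is correct and follows the same route as the paper's proof. Result~\ref{res:measurement-model-identification} plus the bounded class contrasts make $H^{\mathrm R}$ well defined and identified; iterated expectation over $(X^\ast,\widetilde D)$, equation~\eqref{eq:robust-symmetric-recovery}, and the binary decomposition $(1-X^\ast)\psi^F(Y,0,W;\beta)+X^\ast\psi^F(Y,1,W;\beta)=\psi^F(Y,X^\ast,W;\beta)$ then give the moment equality, from which uniqueness of $\beta^\ast$ transfers. Your integrability bookkeeping is extra care the paper omits, but integrability of $\psi^F(Y,a,W;\beta)$ for both $a\in\{0,1\}$ is not implied by equation~\eqref{eq:full-data-target}, which only involves $\psi^F$ at the realized $X^\ast$, so you should state it as an explicit assumption rather than call it implicit.
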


\begin{proof}
By Result~\ref{res:measurement-model-identification},
$\eta^\ast$ is identified from the observed-data law. The class-contrast
condition therefore makes the robust bridge $H^R$ well defined and
identified.
By iterated expectation and equation~\eqref{eq:robust-symmetric-recovery},
\begin{align*}
    \E[\psi^{\mathrm{DMM}}( \widetilde{X}, \widetilde{D};\beta, \eta^\ast)]
    & =
    \E\left[
        \psi^F(Y, 0, W;\beta)
        +
        \E[H^{\mathrm{R}}(\widetilde{X}, \widetilde D; \eta^\ast)\mid  X^\ast,\widetilde D]
        \{\psi^F(Y, 1, W;\beta) - \psi^F(Y, 0, W;\beta)\}
    \right]
    \\
    & =
    \E\left[
        (1-X^\ast)\psi^F(Y, 0, W;\beta)
        +X^\ast\psi^F(Y, 1, W;\beta)
    \right]\\
    & =
    \E\left[\psi^F(Y, X^\ast, W;\beta)\right],
\end{align*}
which completes the proof.
\end{proof}

\subsection{Estimation and Inference}
\label{sec:latent-independent-estimation}

Given the identification results in the previous section, this section proposes the DMM (debiased inference with multiple imperfect measurements) estimator and proves its consistency and asymptotic normality.

First, to estimate the conditional classification rate, we use cross-fitting \citep{cher2018double} and separate estimation of the nuisance function from
evaluation of the downstream moment function. Partition the observations into
$K$ folds $\mathcal I_1,\ldots,\mathcal I_K$. For each fold $k$, estimate the
conditional classification rate model $\{\eta_{j,0}(d), \eta_{j,1}(d)\}_{j=1}^J$ from $(\widetilde X_i,\widetilde D_i)$ using
observations outside $\mathcal I_k$. One option is to maximize the conditional
likelihood
\begin{equation*}
\begin{aligned}
    \mathcal L_{-k}
    =
    \sum_{i\notin\mathcal I_k}
    \log\Bigg[
        \sum_{a=0}^1
        \pi(\widetilde D_i)^a
        \{1-\pi(\widetilde D_i)\}^{1-a}
        \prod_{j=1}^J
        \eta_{j,a}(\widetilde D_i)^{X_i^{(j)}}
        \{1-\eta_{j,a}(\widetilde D_i)\}^{1-X_i^{(j)}}
    \Bigg],
\end{aligned}
\end{equation*}
where $\pi(\widetilde{D})\coloneq\Pr(X^\ast = 1 \mid \widetilde D).$
This likelihood can be maximized using the expectation-maximization (EM) algorithm \citep{dempster1977maximum}, as in \citet{guo2026proximal}.
More flexible nuisance functions can be accommodated
by maximizing the same likelihood over sieve spaces \citep{shen1997methods, hu2008instrumental}, 
or alternatively by sieve minimum distance based on conditional moment restrictions \citep{chen2012estimation, zhou2024causal}.
Each fold-specific fit is oriented using the same anchoring rule. As in debiased machine learning \citep{cher2018double}, the asymptotic results below do not depend on which particular fitting algorithm researchers use to estimate the conditional classification rates. Below, we provide high-level requirements for convergence rates that can be achieved by a wide range of fitting algorithms. 

Then, for $i\in\mathcal I_k$, we construct
\begin{equation}
    \widehat M_{ij}^{(-k)}
    =
    \frac{X_i^{(j)}-\widehat\eta_{j,0}^{(-k(i))}(\widetilde D_i)}{\widehat\eta_{j,1}^{(-k(i))}(\widetilde D_i)-\widehat\eta_{j,0}^{(-k(i))}(\widetilde D_i)}, \label{eq:est-bridge}
\end{equation}
where $\widehat{\eta}^{(-k(i))}$ is estimated only using $\mathcal{I}_{-k(i)}$, the data excluding the fold $k(i)$ that unit $i$ belongs to. Then, the cross-fitted moment function for the DMM estimator is 
\begin{equation}
\begin{aligned}
    \widehat\psi^{\mathrm{DMM}}(\widetilde{X}_i, \widetilde{D}_i; \beta, \widehat{\eta}^{(-k(i))})
    \ \coloneq \ 
    \{1-\widehat H_i^{\mathrm{R}}(\widehat{\eta}^{(-k(i))})\}
    \psi^F(Y_i,0, W_i;\beta)
    \ + \ 
    \widehat H_i^{\mathrm{R}}(\widehat{\eta}^{(-k(i))})
    \psi^F(Y_i, 1, W_i;\beta),
\end{aligned}
\label{eq:estimated-robust-symmetric-score}
\end{equation}
where $\widehat{H}_i^{\mathrm{R}}(\widehat{\eta}^{(-k(i))})$ is defined by equation~\eqref{eq:robust-symmetric-bridge} with estimated single-proxy bridge functions in equation~\eqref{eq:est-bridge}. Our DMM estimator $\widehat\beta^{\mathrm{DMM}}$ can then be written as the solution to the following equation.
\begin{equation*}
    \frac{1}{n}
    \sum_{i=1}^n
    \widehat\psi^{\mathrm{DMM}}(\widetilde{X}_i, \widetilde{D}_i; \beta, \widehat{\eta}^{(-k(i))})
    =0.
\end{equation*}
Throughout this subsection, the observations are i.i.d., the numbers of
proxies $J$ and folds $K$ are fixed, each fold contains a nonvanishing
fraction of the sample, and all fold-specific nuisance estimates are
oriented using the same anchoring rule.

The following theorem presents the asymptotic properties of the DMM estimator. 

\begin{theorem}[Consistency and asymptotic normality]
\label{thm:latent-independent-asymptotics}
Suppose the assumptions required in
Theorem~\ref{thm:latent-independent-identification} hold. Let
$\mathcal B_0\subset\mathcal B$ be a compact neighborhood of
$\beta^\ast$, and let $\widehat\beta^{\mathrm{DMM}}$ be a solution
of the sample moment equation in $\mathcal B_0$, which exists
with probability approaching one. We assume the following standard regularity conditions:
(a) $\beta^\ast$ is the unique well-separated interior root on $\mathcal B_0$;
(b) for $a\in\{0,1\}$, the moment functions
$\psi^F(Y,a,W;\beta)$ and their Jacobians are Lipschitz in $\beta$
on $\mathcal B_0$ with square-integrable envelopes, 
the population
Jacobian $\mathbf A_0$ defined below is nonsingular,
and there exists
a finite constant $C$ such that
$\sup_{\beta\in\mathcal B_0}\left\|\psi^F(Y,1,W;\beta)-\psi^F(Y,0,W;\beta)\right\|\le C$ a.s.; 
(c) 
the true and fitted class contrasts, $\{\eta^\ast_{j,1}(\widetilde D) - \eta^\ast_{j,0}(\widetilde D)\}_{j=1}^J$ and $\{\{\widehat{\eta}^{(-k)}_{j,1}(\widetilde D) - \widehat{\eta}^{(-k)}_{j,0}(\widetilde D)\}_{j=1}^J\}_{k=1}^K$, are uniformly bounded away
from zero over $\widetilde D$, the latter with probability
approaching one, and (d) for every $k$, $j$, $a$, $\widehat{\eta}^{(-k)}_{j,a}(\widetilde D)$ takes values in $[0, 1]$ almost surely.

Define the maximum fold-specific nuisance error.
\begin{equation*}
    \delta_n
    =
    \max_{k,j,a}
    \left\|
        \widehat\eta_{j,a}^{(-k)}-\eta_{j,a}^\ast
    \right\|_{2,P},
\end{equation*}
where $\eta_{j,a}^\ast$ is the true classification rate and the norm is taken over the distribution of $\widetilde D$. If the nuisance function estimation is consistent, i.e., $\delta_n=o_p(1)$, then
$\widehat\beta^{\mathrm{DMM}}\overset{p}{\to}\beta^\ast$. More generally, the DMM estimator achieves multiple robustness for consistency: consistency of the DMM estimator requires the convergence of the nuisance functions to the true functions for only $J-1$ proxies, and one does not need to know which $J-1$ classification rates are consistently estimated.

If the nuisance function estimator converges at a slow nonparametric rate, i.e., $\delta_n=o_p(n^{-1/4}),$ 
\begin{equation}
    \sqrt n
    (\widehat\beta^{\mathrm{DMM}}-\beta^\ast)
    =
    \frac{1}{\sqrt n}
    \sum_{i=1}^n
    \mathbf{A}_0^{-1}
    \psi^{\mathrm{DMM}}( \widetilde{X}_i, \widetilde{D}_i; \beta^\ast, \eta^\ast)
    +o_p(1),
    \label{eq:rs-asymptotic-linearity}
\end{equation}
and
\begin{equation}
    \sqrt n
    (\widehat\beta^{\mathrm{DMM}}-\beta^\ast)
    \rightsquigarrow
    N\!\left(
        0,
        \mathbf{A}_0^{-1}\boldsymbol{\Omega}_0\mathbf{A}_0^{-\top}
    \right)
    \label{eq:rs-asymptotic-normality}
\end{equation}
where 
\begin{align*}
    \mathbf{A}_0
    &=
    -\E\!\left[
        \{1-H^{\mathrm{R}}(\widetilde X,\widetilde D; \eta^\ast)\} \frac{\partial}{\partial \beta^\top} \psi^F(Y,0, W; \beta^\ast)        
        +
        H^{\mathrm{R}}(\widetilde X,\widetilde D; \eta^\ast) \frac{\partial}{\partial \beta^\top} \psi^F(Y,1, W; \beta^\ast)
    \right],
    \\
    \boldsymbol{\Omega}_0
    &=
    \E\!\left[
        \psi^{\mathrm{DMM}}( \widetilde{X}, \widetilde{D}; \beta^\ast, \eta^\ast)
        \psi^{\mathrm{DMM}}( \widetilde{X}, \widetilde{D}; \beta^\ast, \eta^\ast)^\top
    \right].
\end{align*}

\end{theorem}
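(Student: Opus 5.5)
The plan is the standard Z-estimation route with cross-fitting \citep{cher2018double}. Write $\widehat\Psi_n(\beta)=n^{-1}\sum_i\widehat\psi^{\mathrm{DMM}}_i(\beta)$, $\Psi_n(\beta)=n^{-1}\sum_i\psi^{\mathrm{DMM}}(\widetilde X_i,\widetilde D_i;\beta,\eta^\ast)$, and $\Psi_0(\beta)=\E[\psi^F(Y,X^\ast,W;\beta)]$. By Theorem~\ref{thm:latent-independent-identification}, $\Psi_0(\beta)=\E[\psi^{\mathrm{DMM}}(\cdot;\beta,\eta^\ast)]$. The key algebraic fact is that both $\widehat\psi^{\mathrm{DMM}}_i-\psi^{\mathrm{DMM}}_i$ and $\psi^{\mathrm{DMM}}_i-\Psi$ depend on the nuisance only through a scalar bridge. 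Specifically,
\begin{equation*}
\widehat\psi^{\mathrm{DMM}}_i(\beta)-\psi^{\mathrm{DMM}}_i(\beta)=\bigl(\widehat H^{\mathrm R}_i-H^{\mathrm R}_i\bigr)\,\Delta_i(\beta),\qquad \Delta_i(\beta)=\psi^F(Y_i,1,W_i;\beta)-\psi^F(Y_i,0,W_i;\beta),
\end{equation*}
with $\|\Delta_i(\beta)\|\le C$ uniformly on $\mathcal B_0$ by (b). Everything therefore reduces to controlling the bridge error $\widehat H^{\mathrm R}-H^{\mathrm R}$ within each fold $k$, conditional on $\mathcal I_{-k}$.

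\textbf{Consistency.} First, by (c) and (d), each $\widehat M_{ij}$ and $M_{ij}$ is bounded, since denominators are bounded away from zero and numerators lie in $[-1,1]$. Also $|\widehat M_{ij}-M_{ij}|\lesssim \sum_a|\widehat\eta^{(-k)}_{j,a}-\eta^\ast_{j,a}|(\widetilde D_i)$. Because $H^{\mathrm R}$ is a fixed polynomial in bounded arguments, $\|\widehat H^{\mathrm R}-H^{\mathrm R}\|_{2,P}\lesssim\delta_n$ conditional on the training folds. Hence
\begin{equation*}
\sup_{\beta\in\mathcal B_0}\|\widehat\Psi_n(\beta)-\Psi_n(\beta)\|\le C\,n^{-1}\sum_i|\widehat H^{\mathrm R}_i-H^{\mathrm R}_i|=O_p(\delta_n)=o_p(1).
\end{equation*}
The class $\{\psi^{\mathrm{DMM}}(\cdot;\beta,\eta^\ast):\beta\in\mathcal B_0\}$ is Lipschitz in $\beta$ with a square-integrable envelope ($H^{\mathrm R}$ is bounded), so it is Glivenko--Cantelli on the compact set $\mathcal B_0$. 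Thus $\sup_\beta\|\widehat\Psi_n-\Psi_0\|=o_p(1)$, and well-separation (a) gives $\widehat\beta^{\mathrm{DMM}}\to_p\beta^\ast$ by the usual argmin/Z-estimator theorem \citep{van2000asymptotic}.

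For multiple robustness, I would redo the computation of Proposition~\ref{prop:robust-symmetric-bridge} allowing $e_{j,a}$ not to vanish for one proxy $j_0$. Conditional on $(X^\ast,\widetilde D)$, the expectation of $\prod_{j\in S}\widehat M_j$ factorizes by Assumption~\ref{ass:label-conditional-independence}, because the $\widehat\eta$ are fixed given $\mathcal I_{-k}$. Every product term in the bias expansion then contains at least one factor $e_{j,a}$ with $j\neq j_0$. Once all other proxies are consistent, this gives $L_1$ bias $o_p(1)$ times a bounded quantity, and the argument above goes through. I will need to check carefully that the bound on the $e_{j_0,a}$ factor holds, which it does since the fitted contrasts are bounded away from zero.

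\textbf{Asymptotic normality.} I decompose
\begin{equation*}
\widehat\Psi_n(\beta)-\Psi_n(\beta)=\underbrace{n^{-1}\textstyle\sum_i\{(\widehat H^{\mathrm R}_i-H^{\mathrm R}_i)\Delta_i(\beta)-\E_k[\cdot]\}}_{\text{empirical process}}+\underbrace{n^{-1}\textstyle\sum_k|\mathcal I_k|\,\E_k[(\widehat H^{\mathrm R}-H^{\mathrm R})\Delta(\beta)]}_{\text{bias}},
\end{equation*}
where $\E_k$ is the expectation conditional on $\mathcal I_{-k}$. The empirical-process term is handled as in the DML literature: conditional on $\mathcal I_{-k}$, its variance is at most $C^2\|\widehat H^{\mathrm R}-H^{\mathrm R}\|_{2}^2/n=O_p(\delta_n^2/n)$, so it is $o_p(n^{-1/2})$ at $\beta^\ast$ by Chebyshev. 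Stochastic equicontinuity near $\beta^\ast$ then follows from the Lipschitz conditions in (b).

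For the bias term, iterate expectations over $(X^\ast,\widetilde D,Y,W)$. Assumption~\ref{ass:label-conditional-independence} conditions on $\widetilde D=(D,W,Y)$, so $\Delta(\beta)$ is $\widetilde D$-measurable and pulls out of the inner expectation. Since $\E[H^{\mathrm R}\mid X^\ast,\widetilde D]=X^\ast$ by Proposition~\ref{prop:robust-symmetric-bridge}, the inner expectation is exactly the conditional bias displayed there: sums of products of at least two errors $e_{j,a}$. Each $|e_{j,a}|\lesssim\sum_{a'}|\widehat\eta_{j,a'}-\eta^\ast_{j,a'}|$ by boundedness of the denominators. Applying Cauchy--Schwarz and boundedness of $\Delta$ bounds the bias by $O_p(\delta_n^2)=o_p(n^{-1/2})$ uniformly in $\beta$.

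\textbf{Main obstacle.} The delicate step is this second-order bias bound, specifically checking that conditioning on $Y,W$ inside $\widetilde D$ is legitimate. It is, because $\widetilde D$ is the conditioning set in Assumption~\ref{ass:label-conditional-independence}, and this is exactly why $Y$ is included there.

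With $\widehat\Psi_n(\beta)=\Psi_n(\beta)+o_p(n^{-1/2})$ holding uniformly near $\beta^\ast$, a standard mean-value expansion of $\Psi_n$ around $\beta^\ast$ completes the argument. The Jacobian $\partial_\beta\Psi_n\to-\mathbf A_0$ by the ULLN under (b), and nonsingularity of $\mathbf A_0$ yields the linear representation~\eqref{eq:rs-asymptotic-linearity}. Since $\E[\psi^{\mathrm{DMM}}(\cdot;\beta^\ast,\eta^\ast)]=0$ and its second moment $\boldsymbol\Omega_0$ is finite, the Lindeberg--L\'evy CLT gives~\eqref{eq:rs-asymptotic-normality}.
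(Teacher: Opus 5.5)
\textbf{Consistent case and asymptotic normality.} Here your proposal is essentially the paper's argument. Both use the exact identity $\widehat\psi^{\mathrm{DMM}}-\psi^{\mathrm{DMM}}=(\widehat H^{\mathrm R}-H^{\mathrm R})\Delta(\beta)$ and conditional Chebyshev for the centered cross-fitted term. Both get the $O_p(\delta_n^2)$ drift by pulling the $\widetilde D$-measurable $\Delta(\beta)$ out of the conditional expectation and applying the bridge-error expansion of Proposition~\ref{prop:robust-symmetric-bridge}.

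Two differences are harmless:
\begin{enumerate}
\item For consistency you bound $\widehat\Psi_n-\Psi_n$ directly by $C\,n^{-1}\sum_i|\widehat H^{\mathrm R}_i-H^{\mathrm R}_i|$. This is simpler than the paper's finite-net argument.
\item For normality you expand the oracle moment $\Psi_n$, so you need $\widehat\Psi_n-\Psi_n=o_p(n^{-1/2})$ at $\widehat\beta^{\mathrm{DMM}}$. That requires a stochastic-equicontinuity step, which does hold conditionally on the training folds via a bracketing bound for the Lipschitz-in-$\beta$ class. The paper instead expands the cross-fitted $\widehat\Psi_n$ itself. It therefore needs the nuisance terms only at the single point $\beta^\ast$, plus convergence of the cross-fitted Jacobian.
\end{enumerate}

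\textbf{Gap in the multiple-robustness paragraph.} Your claim that ``the argument above goes through'' is not correct. Your consistency argument rests on $\|\widehat H^{\mathrm R}-H^{\mathrm R}\|_{2,P}\lesssim\delta_n\to0$. When the fitted rates for one proxy $j_0$ are inconsistent, $\widehat H^{\mathrm R}-H^{\mathrm R}$ does not vanish in $L_1$: the error $\widehat M_{j_0}-M_{j_0}$ enters the pair and triple terms multiplied by the other (non-small) bridges $M_j$. Hence $C\,n^{-1}\sum_i|\widehat H^{\mathrm R}_i-H^{\mathrm R}_i|$ is $O_p(1)$, not $o_p(1)$. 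What delivers consistency is cancellation in the conditional mean, not smallness of the bridge error.

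The fix is to split $\widehat\Psi_n-\Psi_n$ into the conditional drift and the centered cross-fitted term, as you later do for normality:
\begin{enumerate}
\item You have handled the drift correctly. Every pair or triple contains some $e_{j,a}$ with $j\neq j_0$, and the $e_{j_0,a}$ factor is bounded by (c)--(d).
\item You never show that the centered term is $o_p(1)$ uniformly over $\mathcal B_0$ without any rate on $\widehat H^{\mathrm R}-H^{\mathrm R}$. This follows from boundedness of the fitted and true bridges together with the bounded, Lipschitz contrast $\Delta(\beta)$. Apply conditional Chebyshev on a finite $r$-net of $\mathcal B_0$, which gives $O_p(n^{-1/2})$ at each net point, then use the Lipschitz condition to fill in between net points.
\end{enumerate}
This is exactly how the paper justifies the robustness claim.
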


The rate requirement for the nuisance function estimation $\widehat{\eta}$ follows from Proposition~\ref{prop:robust-symmetric-bridge}:
the population remainder caused by estimating the conditional classification rate is second order
in the nuisance function estimation error. Cross-fitting controls the corresponding empirical
process term without requiring restrictive Donsker conditions, following the
logic of locally robust estimation \citep{chernozhukov2022locally}.
A proof is provided in Appendix~\ref{app:proof-rs-asymptotics}.

To estimate the asymptotic variance, we evaluate the cross-fitted moment function at
$\widehat\beta^{\mathrm{DMM}}$ and define
\begin{align}
    \widehat{\mathbf{A}}
    ={}&
    -\frac{1}{n}\sum_{i=1}^n
    \left[
        \{1-\widehat H_i^{\mathrm{R}}(\widehat{\eta}^{(-k(i))})\}
        \frac{\partial}{\partial \beta^\top} \psi^F(Y_i,0, W_i; \widehat{\beta}^{\mathrm{DMM}}) 
        +
        \widehat H_i^{\mathrm{R}}(\widehat{\eta}^{(-k(i))})
        \frac{\partial}{\partial \beta^\top} \psi^F(Y_i,1, W_i; \widehat{\beta}^{\mathrm{DMM}})
    \right],
    \label{eq:estimated-rs-bread}
    \\
    \widehat{\boldsymbol{\Omega}}
    ={}&
    \frac{1}{n}\sum_{i=1}^n
    \psi^{\mathrm{DMM}}( \widetilde{X}_i, \widetilde{D}_i; \widehat{\beta}^{\mathrm{DMM}}, \widehat{\eta}^{(-k(i))})
        \psi^{\mathrm{DMM}}( \widetilde{X}_i, \widetilde{D}_i; \widehat{\beta}^{\mathrm{DMM}}, \widehat{\eta}^{(-k(i))})^\top,
    \\
    \widehat{\mathbf{V}}
    ={}&
    \widehat{\mathbf{A}}^{-1}
    \widehat{\boldsymbol{\Omega}}
    \widehat{\mathbf{A}}^{-\top}.
    \label{eq:estimated-rs-variance}
\end{align}
Then $\widehat{\mathbf{V}}/n$ estimates the covariance matrix of
$\widehat\beta^{\mathrm{DMM}}$. For any fixed contrast $c$, an asymptotic
$(1-\alpha)$ confidence interval for $c^\top\widehat\beta^{\mathrm{DMM}}$ is
\begin{equation*}
    c^\top\widehat\beta^{\mathrm{DMM}}
    \;\pm\;
    z_{1-\alpha/2}
    \sqrt{
        \frac{c^\top\widehat{\mathbf{V}}c}{n}
    },
\end{equation*}
where $z_{1-\alpha/2}$ is the $(1-\alpha/2)$ percentile of the standard normal distribution. 
For the generalized linear model example in Section~\ref{sec:problem-setting},
$c$ may select the coefficient on the latent regressor.

\subsubsection*{Discussion on Efficiency.}
The robust bridge is designed to remove first-order nuisance bias.
Here, we examine how the asymptotic precision of this estimator
changes with the number and quality of the labels. 

Write $H_{0,J}^{\mathrm{R}}$ to emphasize that the bridge uses $J$ proxies. Then, the DMM moment function admits the decomposition
\begin{equation}
    \psi_{J}^{\mathrm{DMM}}( \widetilde{X}, \widetilde{D}; \beta^\ast, \eta^\ast)
    =
    \psi^F(Y,X^\ast,W; \beta^\ast)
    +\{H_{0,J}^{\mathrm{R}}-X^\ast\}
    \Delta\psi_0(Y,W),
    \label{eq:variance-rs-score-decomposition}
\end{equation}
where
$\Delta\psi_0(Y,W) = \psi^F(Y, 1, W;\beta^\ast)-\psi^F(Y,0, W;\beta^\ast)$. The unbiasedness of the bridge function implies that
the population bread matrix equals its oracle counterpart.
\begin{equation}
    \mathbf A_0
    =
   -\E\left[
        \left.
        \frac{\partial\psi^F(Y,X^\ast,W;\beta)}
        {\partial\beta^\top}
        \right|_{\beta=\beta^\ast}
    \right].
    \label{eq:variance-oracle-bread}
\end{equation}
Thus, the number and quality of the labels
affect the asymptotic variance only through the meat matrix:
\begin{equation}
    \boldsymbol\Omega_{0,J}^{\mathrm{R}}
    =
    \boldsymbol\Omega_0^F
    +
    \E\left[
        \{H_{0,J}^{\mathrm{R}}-X^\ast\}^2
        \Delta\psi_0(Y,W)\Delta\psi_0(Y,W)^\top
    \right]
    \label{eq:variance-meat-decomposition}
\end{equation}
where $\boldsymbol\Omega_0^F \coloneq \E\left[\psi^F(Y,X^\ast, W;\beta^\ast)\psi^{F}(Y,X^\ast, W;\beta^\ast)^\top\right]$. We use $\mathbf V_{0,J}^{\mathrm{R}} \coloneq \mathbf A_0^{-1} \boldsymbol\Omega_{0,J}^{\mathrm{R}} \mathbf A_0^{-\top}$ to denote the asymptotic variance of the DMM estimator using $J$ proxies, and we use  
$\mathbf V_0^F
    \coloneq
    \mathbf A_0^{-1}
    \boldsymbol\Omega_0^F
    \mathbf A_0^{-\top}
$ to denote the asymptotic variance of an infeasible, oracle estimator that uses unobserved $X^\ast$ in downstream moments.  

For a fixed linear combination of coefficients $c^\top\beta^\ast$, we can write the asymptotic variance of the DMM estimator as follows. 
\begin{equation}
\begin{aligned}
    c^\top\mathbf V_{0,J}^{\mathrm{R}}c
     \ = \ 
    c^\top\mathbf V_0^F c
    \ + \ 
    \E\left[
        \sum_{a=0}^1
        \pi(\widetilde D_i)^a
        \{1-\pi(\widetilde D_i)\}^{1-a}
        v_{a,J}^{\mathrm{R}}(\widetilde D)
        \{(\mathbf{A}_0^{-\top}c)^\top\Delta\psi_0(Y,W)\}^2
    \right],
\end{aligned}
\label{eq:variance-coefficient-factorization}
\end{equation}
where $v_{a,J}^{\mathrm{R}}(d) \coloneq \V\left(H_{0,J}^{\mathrm{R}}\mid X^\ast=a,\widetilde D=d\right)$.
Importantly, only $v_{a,J}^{\mathrm{R}}(d)$ depends on the selection of proxies. Hence, decreasing $v_{a,J}^{\mathrm{R}}(d)$ is sufficient to improve precision for downstream parameters of interest.

For proxy $j$, define its conditional single bridge variance
\begin{equation*}
\kappa_{j,a}(d) \coloneq \frac{\eta^\ast_{j,a}(d)\{1-\eta^\ast_{j,a}(d)\}}{\{\eta^\ast_{j,1}(d) - \eta^\ast_{j,0}(d)\}^2}.
\end{equation*}
Then, we can express the conditional variance of the bridge function as follows.
\begin{equation}
    v_{a,J}^{\mathrm{R}}(d)
    \ =  \ 
    \frac{9}{\binom{J}{2}^{2}}
    \sum_{j_1<j_2}
    \kappa_{j_1,a}(d)\kappa_{j_2,a}(d)
    \ + \
    \frac{4}{\binom{J}{3}^{2}}
    \sum_{j_1<j_2<j_3}
    \kappa_{j_1,a}(d)
    \kappa_{j_2,a}(d)
    \kappa_{j_3,a}(d).
\label{eq:variance-rs-exact}
\end{equation}
Suppose an additional proxy has conditional bridge variance
$\kappa_{J+1,a}(d)$. Appendix~\ref{app:rs-efficiency} derives an explicit
threshold $\overline{\kappa}_{J,a}(d)$ such that, except in the degenerate zero-variance
case,
\begin{equation*}
    v_{a,J+1}^{\mathrm{R}}(d)
    <
    v_{a,J}^{\mathrm{R}}(d)
    \quad\Longleftrightarrow\quad
    \kappa_{J+1,a}(d)< \overline{\kappa}_{J,a}(d).
\end{equation*}
Thus, adding a proxy improves precision when the conditional variance of its single-proxy bridge function is
small enough relative to those of the existing proxies. For example, a proxy with a very
small class contrast, $\eta_{j,1}(d) - \eta_{j,0}(d)$, has a large $\kappa_{j,a}(d)$ and can increase the variance of the bridge function $v_{a,J+1}^{\mathrm{R}}(d)$. More proxies therefore do not mechanically improve this particular estimator. It is straightforward to allow for different weights when constructing the robust bridge function so that it can guarantee the DMM estimator with $J+1$ proxies is at least as efficient as the DMM estimator with $J$ proxies. 

In a special case where every proxy has the same conditional variance $\kappa_a(d)$,
\begin{equation*}
    v_{a,J}^{\mathrm{R}}(d)
    =
    \frac{18\kappa_a(d)^2}{J(J-1)}
    +
    \frac{24\kappa_a(d)^3}{J(J-1)(J-2)},
\end{equation*}
which decreases with $J$ at rate $J^{-2}$, i.e., more proxies lead to higher accuracy for downstream parameter estimation.

\section{Extensions}
\label{sec:latent-dependent}

We now extend the proposed framework to settings in which an error-prone variable is a binary \textit{dependent variable} (Section~\ref{subsec:latent-dependent}). We then provide the most general results where we extend our method to settings where either an independent or dependent variable is \textit{multicategory} (Section~\ref{subsec:multicategory-labels}).

\subsection{Latent Dependent Variable}
\label{subsec:latent-dependent}

\subsubsection{Setup and Identification}
Let $Y_i^\ast\in\{0,1\}$ denote the latent true outcome, and let $W_i$ collect all observed variables entering the downstream analysis as before, including any focal independent variable and additional covariates. 
Instead of $Y_i^\ast$, the researcher observes $J\geq3$ imperfect labels
\begin{equation*}
    \widetilde Y_i
    =
    \bigl(Y_i^{(1)},\ldots,Y_i^{(J)}\bigr),
    \qquad
    Y_i^{(j)}\in\{0,1\}.
\end{equation*}
As in Section~\ref{sec:latent-independent}, let $D_i$ collect auxiliary information that may help account for heterogeneity or
shared dependence in the label errors. Define $\widetilde D_i = (D_i,W_i)$.  Let $\phi^F(y,w;\beta)\in\mathbb R^{d_\beta}$ be a user-specified full-data moment function. Reusing $\beta^\ast$ for the target parameter in this
section, suppose that if $Y^\ast$ were observed, $\beta^\ast$ would be the unique
solution to
\begin{equation*}
    \E\!\left[
        \phi^F(Y^\ast,W;\beta)
    \right]
    =0.
\end{equation*}

We impose an analog of Assumption~\ref{ass:label-conditional-independence} for the latent outcome setting.

\begin{assumption}[Independence across outcome labels conditioning on input and downstream information]
\label{ass:latent-dependent-label-independence}
The imperfect outcome labels are mutually independent conditional on the
latent outcome and the measurement conditioning covariates:
\begin{equation*}
    Y^{(1)}\indep\cdots\indep Y^{(J)}
    \mid Y^\ast,\widetilde D.
\end{equation*}
\end{assumption}
For $a\in\{0,1\}$, let
\begin{equation*}
    \pi^\ast(d)
    \coloneq
    \Pr(Y^\ast=1\mid\widetilde D=d),
    \qquad
    \eta^\ast_{j,a}(d)
    \coloneq
    \Pr(Y^{(j)}=1\mid Y^\ast=a,\widetilde D=d).
\end{equation*}
We impose the analogs of the overlap, relevance, and anchoring conditions in
Section~\ref{sec:latent-independent}. Under these conditions, the
joint distribution of $(\widetilde Y,\widetilde D)$ identifies
$\pi^\ast(d)$ and $\eta^\ast_{j,a}(d)$ by the same conditional latent-class
argument, so we do not repeat the decomposition here.

Define the outcome-proxy bridge
\begin{equation*}
    M_j^Y(\widetilde D)
    =
    \frac{
        Y^{(j)}-\eta^\ast_{j,0}(\widetilde D)
    }{
        \eta^\ast_{j,1}(\widetilde D)
        -\eta^\ast_{j,0}(\widetilde D)
    }.
\end{equation*}
Substituting this outcome-proxy bridge $M_j^Y$ into equation~\eqref{eq:robust-symmetric-bridge} defines the robust
outcome bridge $H_Y^{\mathrm{R}}(\widetilde Y,\widetilde D)$. The same
argument as in Proposition~\ref{prop:robust-symmetric-bridge} gives
\begin{equation}
    \E\!\left[
        H_Y^{\mathrm{R}}(\widetilde Y,\widetilde D)
        \mid Y^\ast,\widetilde D
    \right]
    =
    Y^\ast.
    \label{eq:latent-dependent-bridge-recovery}
\end{equation}
The fitted bridge also retains the same local robustness: its conditional
error contains no term that is linear in a single proxy-specific nuisance
error. We can then define the observed-data moment function
\begin{equation}
\begin{aligned}
    \phi_Y^{\mathrm{DMM}}(\widetilde{Y}, \widetilde{D};\beta,\eta^\ast)
     \ = \ 
    \{1-H_Y^{\mathrm{R}}(\widetilde Y,\widetilde D)\}
    \phi^F(0, W;\beta)
    \ + \ 
    H_Y^{\mathrm{R}}(\widetilde Y,\widetilde D)
    \phi^F(1, W;\beta).
\end{aligned}
\label{eq:latent-dependent-observed-estimating-function}
\end{equation}

\begin{proposition}[Identification with a latent dependent variable]
\label{prop:latent-dependent-identification}
Suppose Assumption~\ref{ass:latent-dependent-label-independence} and the
analogs of the overlap, relevance, and anchoring conditions in
Section~\ref{sec:latent-independent-assumptions} hold. Then, for every
$\beta$,
\begin{equation}
    \E\!\left[
        \phi_Y^{\mathrm{DMM}}(\widetilde{Y}, \widetilde{D};\beta,\eta^\ast)
    \right]
    =
    \E\!\left[
        \phi^F(Y^\ast,W;\beta)
    \right].
    \label{eq:latent-dependent-moment-equivalence}
\end{equation}
Consequently, $\beta^\ast$ is identified as the unique solution to
$\E[\phi_Y^{\mathrm{DMM}}(\widetilde{Y}, \widetilde{D};\beta,\eta^\ast)]=0$.
\end{proposition}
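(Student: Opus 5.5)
The argument follows the proof of Theorem~\ref{thm:latent-independent-identification}, with the latent outcome $Y^\ast$ playing the role of $X^\ast$. The one structural difference is that the conditioning set $\widetilde D=(D,W)$ now contains all downstream variables. As a result, $\phi^F(a,W;\beta)$ is $\widetilde D$-measurable for each $a\in\{0,1\}$.

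First, I establish identification of the nuisance functions. By Assumption~\ref{ass:latent-dependent-label-independence}, the conditional law of $\widetilde Y$ given $\widetilde D=d$ is a two-class mixture with the product form of equation~\eqref{eq:proxy-mixture}, with $\pi^\ast(d)$ and $\eta^\ast_{j,a}(d)$ in place of their Section~\ref{sec:latent-independent} counterparts. Under the assumed overlap ($0<\pi^\ast(d)<1$), relevance (class contrasts bounded away from zero, so each block matrix has Kruskal rank $2$ and condition~\eqref{eq:kruskal-rank-condition} holds), and anchoring, Result~\ref{res:measurement-model-identification} identifies $\eta^\ast_{j,a}$ up to $P_{\widetilde D}$-null sets. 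The bounded contrasts then make every $M^Y_j$, and hence $H_Y^{\mathrm R}$, well defined and identified from the observed-data law.

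Second, I verify equation~\eqref{eq:latent-dependent-bridge-recovery}. Conditional on $(Y^\ast,\widetilde D)$, each $M^Y_j$ has mean $Y^\ast$, since $\E[Y^{(j)}\mid Y^\ast,\widetilde D]=\eta^\ast_{j,Y^\ast}(\widetilde D)$. Conditional independence factorizes the expectation of any product over distinct indices, giving $(Y^\ast)^{|S|}=Y^\ast$. Hence $\E[\overline H_2\mid Y^\ast,\widetilde D]=\E[\overline H_3\mid Y^\ast,\widetilde D]=Y^\ast$, and $3-2=1$ yields $\E[H^{\mathrm R}_Y\mid Y^\ast,\widetilde D]=Y^\ast$. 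This is exactly the computation behind equation~\eqref{eq:robust-symmetric-recovery} in Proposition~\ref{prop:robust-symmetric-bridge}, so I would cite it directly.

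Third, I prove the moment equivalence. Write $\phi^{\mathrm{DMM}}_Y=\phi^F(0,W;\beta)+H^{\mathrm R}_Y\{\phi^F(1,W;\beta)-\phi^F(0,W;\beta)\}$ and apply iterated expectations conditioning on $(Y^\ast,\widetilde D)$. Because the contrast is a function of $W\subset\widetilde D$, it factors out of the inner expectation, which then equals $Y^\ast$ by the previous step. This gives $\E[(1-Y^\ast)\phi^F(0,W;\beta)+Y^\ast\phi^F(1,W;\beta)]=\E[\phi^F(Y^\ast,W;\beta)]$, using $Y^\ast\in\{0,1\}$. Integrability of the terms follows from the bounded contrasts together with integrability of $\phi^F(a,W;\beta)$, which is implicit in the definition of $\beta^\ast$. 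Since the oracle moment has the unique root $\beta^\ast$ and the equivalence holds for every $\beta$, $\beta^\ast$ is the unique root of the identified moment.

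The only delicate point is the first step. It requires checking that the ``analogs'' of the conditions in Result~\ref{res:measurement-model-identification} are exactly what is needed when $\widetilde D$ no longer includes the outcome, and that measurable anchored versions of $\eta^\ast$ exist so that $H^{\mathrm R}_Y(\widetilde Y,\widetilde D)$ is a measurable random variable. I would impose this measurability as part of the analog conditions, as in Result~\ref{res:measurement-model-identification}, rather than prove it.
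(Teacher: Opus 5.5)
Your proposal is correct and matches the paper's argument. The paper's one-line proof is exactly your third step: by bridge recovery, $\E[H_Y^{\mathrm R}\mid Y^\ast,\widetilde D]=Y^\ast$, and because $\phi^F(a,W;\beta)$ is $\widetilde D$-measurable, iterated expectation reduces the DMM moment to the oracle moment; your first step simply makes explicit the nuisance-identification argument used in the proof of Theorem~\ref{thm:latent-independent-identification}. One small caveat: integrability of $\phi^F(0,W;\beta)$ and $\phi^F(1,W;\beta)$ separately does not in general follow from integrability of $\phi^F(Y^\ast,W;\beta)$ (it does if $\pi^\ast$ is bounded away from $0$ and $1$), so it is cleaner to impose it as an assumption, as the paper implicitly does, rather than call it implicit in the definition of $\beta^\ast$.
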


\begin{proof}
Equation~\eqref{eq:latent-dependent-moment-equivalence} follows directly from
equations~\eqref{eq:latent-dependent-bridge-recovery} and~\eqref{eq:latent-dependent-observed-estimating-function} by iterated expectation.
\end{proof}

\subsubsection{Estimation and Inference}

Estimation follows Section~\ref{sec:latent-independent}. Within
each training fold, estimate the conditional latent-class model of
$\widetilde Y$ given $\widetilde D$, apply the same anchoring rule, and
construct the out-of-fold bridge $\widehat H_{Y,i}^{\mathrm{R}}$. Define
\begin{equation*}
    \widehat\phi_{Y,i}^{\mathrm{DMM}}(\beta)
    =
    (1-\widehat H_{Y,i}^{\mathrm{R}})
    \phi^F(0, W_i;\beta)
    +
    \widehat H_{Y,i}^{\mathrm{R}}
    \phi^F(1, W_i;\beta),
\end{equation*}
and let $\widehat\beta_{Y}^{\mathrm{DMM}}$ solve
\begin{equation*}
    \frac{1}{n}
    \sum_{i=1}^n
    \widehat\phi_{Y,i}^{\mathrm{DMM}}(\beta)
    =0.
\end{equation*}

\begin{theorem}[Large-sample theory for a latent dependent variable]
\label{thm:latent-dependent-asymptotics}
Suppose the direct analogs of the conditions in
Theorem~\ref{thm:latent-independent-asymptotics} hold. In particular, for the
same compact neighborhood $\mathcal B_0$, suppose that there exists a finite
constant $C$ such that
$\sup_{\beta\in\mathcal B_0}\left\|\phi^F(1,W;\beta)-\phi^F(0,W;\beta)\right\|\le C$ a.s.
Let
\begin{equation*}
    \delta_{Y,n}
    =
    \max_{k,j,a}
    \left\|
        \widehat\eta_{j,a}^{Y,(-k)}-\eta^\ast_{j,a}
    \right\|_{2,P}.
\end{equation*}

If $\delta_{Y,n}=o_p(1)$, then
$\widehat\beta_{Y}^{\mathrm{DMM}} \overset{p}{\to}\beta^\ast$. If, in addition, $\delta_{Y,n}=o_p(n^{-1/4})$, then
\begin{equation*}
    \sqrt n
    (\widehat\beta_{Y}^{\mathrm{DMM}}-\beta^\ast)
    \rightsquigarrow
    N\!\left(
        0,
        \mathbf{A}_{Y,0}^{-1}
        \boldsymbol{\Omega}_{Y,0}
        \mathbf{A}_{Y,0}^{-\top}
    \right),
\end{equation*}
where
\begin{align*}
    \mathbf{A}_{Y,0}
    &={}
    -\E\!\left[
        (1-H_{Y}^{\mathrm{R}}(\widetilde Y,\widetilde D; \eta^\ast))
        \left.
        \frac{\partial\phi^F(Y^\ast=0,W;\beta)}
        {\partial\beta^\top}
        \right|_{\beta=\beta^\ast}        +
        H_{Y}^{\mathrm{R}}(\widetilde Y,\widetilde D; \eta^\ast)
        \left.
        \frac{\partial\phi^F(Y^\ast=1,W;\beta)}
        {\partial\beta^\top}
        \right|_{\beta=\beta^\ast}    
        \right],
    \\
    \boldsymbol{\Omega}_{Y,0}
    &={}
    \E\!\left[
        \phi_Y^{\mathrm{DMM}}(\widetilde{Y}, \widetilde{D};\beta^\ast,\eta^\ast)
        \phi_Y^{\mathrm{DMM}}(\widetilde{Y}, \widetilde{D};\beta^\ast,\eta^\ast)^\top
    \right].
\end{align*}
\end{theorem}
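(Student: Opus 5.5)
The plan is to reduce Theorem~\ref{thm:latent-dependent-asymptotics} to the argument for Theorem~\ref{thm:latent-independent-asymptotics} through a relabeling. The latent-outcome moment $\phi_Y^{\mathrm{DMM}}$ in equation~\eqref{eq:latent-dependent-observed-estimating-function} has the same affine-in-bridge form $(1-H)\,g_0(\beta)+H\,g_1(\beta)$ as $\psi^{\mathrm{DMM}}$. Here $g_a(\beta)=\phi^F(a,W;\beta)$ plays the role of $\psi^F(Y,a,W;\beta)$. The conditioning vector $\widetilde D=(D,W)$ contains every variable on which $g_a$ depends, so $g_a$ is $\widetilde D$-measurable. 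This matters because the second-order bias computation conditions on $(Y^\ast,\widetilde D,\mathcal I_{-k})$: the contrast $\Delta\phi(W;\beta)=\phi^F(1,W;\beta)-\phi^F(0,W;\beta)$ can then be pulled out of the conditional expectation, just as $\Delta\psi$ could in Section~\ref{sec:latent-independent}.

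\textbf{Step 1: Decompose the estimation error.} Write $\widehat\phi_{Y,i}^{\mathrm{DMM}}(\beta)=\phi_Y^{\mathrm{DMM}}(\beta,\eta^\ast)+(\widehat H_{Y,i}^{\mathrm{R}}-H_{Y,i}^{\mathrm{R}})\Delta\phi(W_i;\beta)$. For each fold $k$, split the averaged error term into two parts.
\begin{itemize}
\item[(i)] A centered empirical-process part. Conditional on $\mathcal I_{-k}$, the summands are i.i.d. Their second moment is bounded by $C^2\,\E[(\widehat H^{\mathrm{R}}_Y-H^{\mathrm{R}}_Y)^2\mid\mathcal I_{-k}]$. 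The fitted and true contrasts are bounded away from zero, so the map $\eta\mapsto H_Y^{\mathrm R}$ is Lipschitz and this second moment is $O_p(\delta_{Y,n}^2)$. By Chebyshev conditional on the training fold, the term is $o_p(n^{-1/2})$ uniformly on $\mathcal B_0$, using the Lipschitz envelopes. No Donsker condition is needed.
\item[(ii)] A population bias $\E[(\widehat H_Y^{\mathrm R}-Y^\ast)\Delta\phi(W;\beta)\mid\mathcal I_{-k}]$. By the analog of Proposition~\ref{prop:robust-symmetric-bridge}, which the paper asserts carries over verbatim with $M_j^Y$, the conditional bias given $(Y^\ast,\widetilde D,\mathcal I_{-k})$ is a sum of products of at least two errors $e_{j,a}(\widetilde D)$.
\end{itemize}
Each $e_{j,a}$ is bounded pointwise by a constant times $|\widehat\eta_{j,0}-\eta^\ast_{j,0}|+|\widehat\eta_{j,1}-\eta^\ast_{j,1}|$, using the contrast lower bound and $\widehat\eta\in[0,1]$. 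Cauchy--Schwarz together with $|\Delta\phi|\le C$ then bounds the bias by $O_p(\delta_{Y,n}^2)$. Triple products are bounded by $\delta_{Y,n}^2$ as well, since one factor is uniformly bounded.

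\textbf{Step 2: Consistency.} If $\delta_{Y,n}=o_p(1)$, the perturbation from Step 1 is $o_p(1)$ uniformly over $\mathcal B_0$. The oracle moment satisfies a uniform law of large numbers by the Lipschitz-with-envelope condition and compactness. Proposition~\ref{prop:latent-dependent-identification} shows the population oracle moment equals $\E[\phi^F(Y^\ast,W;\beta)]$, whose root $\beta^\ast$ is well separated. The standard argmin/Z-estimator consistency theorem (van der Vaart, Thm.~5.9) then gives $\widehat\beta_Y^{\mathrm{DMM}}\to_p\beta^\ast$.

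\textbf{Step 3: Asymptotic normality.} Take a mean-value expansion of the sample moment around $\beta^\ast$. The sample Jacobian converges to $-\mathbf A_{Y,0}$; this follows from the Jacobian Lipschitz envelopes, consistency, and the fact that $\widehat H_Y^{\mathrm R}\to H_Y^{\mathrm R}$ in $L_2$ (bounded contrasts). When $\delta_{Y,n}=o_p(n^{-1/4})$, Step 1 gives $\sqrt n\cdot O_p(\delta_{Y,n}^2)=o_p(1)$. This yields the asymptotic linear representation with influence function $\mathbf A_{Y,0}^{-1}\phi_Y^{\mathrm{DMM}}(\beta^\ast,\eta^\ast)$. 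That function has mean zero by Proposition~\ref{prop:latent-dependent-identification} and finite variance because $H_Y^{\mathrm R}$ is bounded. The central limit theorem then delivers the stated sandwich limit.

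\textbf{Main obstacle.} I expect the main obstacle to be Step 1(ii), which is the crux. There I would verify carefully that the bias bound holds uniformly in $\beta$ and uses only $L_2$ rates, not sup-norm rates. The approach is to write the bias as $\E[\sum_{a}\Pr(Y^\ast=a\mid\widetilde D)\,b_a(\widetilde D)\,\Delta\phi(W;\beta)]$, where $b_a$ is the product expansion from Proposition~\ref{prop:robust-symmetric-bridge}. Then bound $|b_a|\lesssim\sum_{j_1\ne j_2}|e_{j_1,a}||e_{j_2,a}|$ and apply Cauchy--Schwarz pairwise, taking the supremum over $\beta$ only through the constant $C$.
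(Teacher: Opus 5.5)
Your proposal is correct and follows the paper's route: the paper proves this theorem by the same relabeling $(X^\ast,\widetilde X,\psi_a)\to(Y^\ast,\widetilde Y,\phi_a)$ in the proof of Theorem~\ref{thm:latent-independent-asymptotics}. That proof uses the same ingredients you list: a split into a cross-fitted centered term controlled by conditional Chebyshev/Markov, an oracle empirical term, and a second-order drift built from pairwise and triple bridge-error products, followed by Z-estimation consistency and a mean-value expansion. One small correction: conditional Chebyshev over a finite net gives the cross-fit term uniformly $o_p(1)$ on $\mathcal B_0$ and $o_p(n^{-1/2})$ at $\beta^\ast$, not uniformly $o_p(n^{-1/2})$ as you claim, but those two facts are all that your Steps 2 and 3 actually use.
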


The result follows from Proposition~\ref{prop:robust-symmetric-bridge} and
Theorem~\ref{thm:latent-independent-asymptotics} after replacing
$(X^\ast,\widetilde X,\psi_a)$ with
$(Y^\ast,\widetilde Y,\phi_a)$. The covariance matrix can be estimated by the
plug-in estimator for $\mathbf{A}_{Y,0}$ and $\boldsymbol{\Omega}_{Y,0}$.

\subsection{Multicategory Labels}
\label{subsec:multicategory-labels}
So far we have focused on settings with a binary independent or dependent variable. In this subsection, we discuss how our method can be generalized to the multicategory case. 
The downstream moment argument extends directly; the main additional challenge lies in identifying a vector of class-specific bridges under stronger rank and anchoring conditions. For the sake of clear presentation, here we present our method for a multicategory dependent variable, but an analogous extension applies to the multicategory latent independent variable setting. As discussed earlier, for causal inference with three proxies, the bridge construction builds on a general result developed in \citet{guo2026proximal} for the finite-support latent outcome variable setting. Here, we connect that result to our general downstream moment framework and develop extensions to more than three proxies and to latent independent variable settings.

Let $Y^\ast\in\{1,\ldots,K_\ast\}$ denote a multicategory latent dependent variable, and suppose that researchers
observe $\widetilde Y = \bigl(Y^{(1)},\ldots,Y^{(J)}\bigr)$, $Y^{(j)}\in\{1,\ldots,K_j\}$.
The labels need not have the same number of categories as one another or as $Y^\ast$.
This may occur, for example, when some human or machine annotators do not use all available categories for annotation.
As before, let
$\widetilde D=(D,W)$ collect the downstream covariates and any auxiliary information used to model the label distribution.

Define $\phi_a(W;\beta) \coloneq \phi^F(a,W;\beta)$ to denote the full-data moment function evaluated at $Y^\ast = a$, now for $a\in\{1,\ldots,K_\ast\}$.
Thus, it admits the exact decomposition:
\begin{equation*}
    \phi^F(Y^\ast,W;\beta)
    =
    \sum_{a=1}^{K_\ast}
    \bbone\{Y^\ast=a\}\phi_a(W;\beta).
\end{equation*}
Therefore, suppose that class-specific bridges
$H_a(\widetilde Y,\widetilde D)$ satisfy
\begin{equation*}
    \E\!\left[
        H_a(\widetilde Y,\widetilde D)
        \mid Y^\ast=b,\widetilde D
    \right]
    =
    \bbone\{a=b\},
    \qquad
    a,b\in\{1,\ldots,K_\ast\}.
\end{equation*}
Then the observed-data moment function
\begin{equation}
    \phi^H(\widetilde Y,\widetilde D;\beta)
    =
    \sum_{a=1}^{K_\ast}
    H_a(\widetilde Y,\widetilde D)
    \phi_a(W;\beta)
    \label{eq:multicategory-observed-moment}
\end{equation}
is unbiased for the oracle downstream full-data moment function. 

To construct the class-specific bridges, let $\mathcal Y = \prod_{j=1}^J\{1,\ldots,K_j\}$ denote the support of the joint proxy vector.
At each value
$\widetilde D=d$, define the joint measurement matrix
\begin{equation*}
    \mathbf{S}(d) 
    =
    \left[
        \Pr(\widetilde Y=\widetilde y
        \mid Y^\ast=a,\widetilde D=d)
    \right]_{\widetilde y\in\mathcal Y,\,
             a\in\{1,\ldots,K_\ast\}}.
\end{equation*}
Under the multicategory analog of
Assumption~\ref{ass:latent-dependent-label-independence}, the observed proxy
distribution is a finite mixture of product distributions. Tensor decomposition identifies the class-specific label distributions, and hence $\mathbf{S}(d)$, up to a common permutation of the latent classes under the
corresponding rank and anchoring conditions
\citep{kruskal1977three,allman2009identifiability}.

Suppose that $\mathbf{S}(d)$ has full column rank, and let $\mathbf{S}^\dagger(d)\coloneq \{\mathbf{S}(d)^\top \mathbf{S}(d)\}^{-1}\mathbf{S}(d)^\top$ be a left inverse
satisfying $\mathbf{S}^\dagger(d)\mathbf{S}(d) = \mathbf{I}_{K_\ast}$.
Writing $e(y)$ for the one-hot encoding of the observed joint proxy vector
$y$ and $e_a$ for the $a$ th standard basis vector in $\mathbb R^{K_\ast}$,
define
\begin{equation*}
    H_a(y,d)
    =
    e_a^\top \mathbf{S}^\dagger(d)e(y).
\end{equation*}
Then, this bridge function is unbiased for the latent variable of interest. 
\begin{align*}
    \E\!\left[
        H_a(\widetilde Y,d)
        \mid Y^\ast=b,\widetilde D=d
    \right]
    \ = \ 
    e_a^\top \mathbf{S}^\dagger(d)\mathbf{S}(d)e_b    
    \ = \ 
    e_a^\top e_b
    \ = \ 
    \bbone\{a=b\},
\end{align*}
where the first equality uses the identity $\mathbf{S}(d)e_b = \E[e(\widetilde Y) \mid Y^\ast=b,\widetilde D=d]$.
As in the binary case, these bridges need not lie in $[0,1]$ or sum to one
for each realized proxy pattern. Their role is to reproduce the latent class
indicators in conditional expectation.

To retain the same local robustness to first-stage estimation error as in the binary case, we require at least three conditionally independent measurements.
In this framework, several proxies may be combined into a composite measurement when a single proxy does not have enough categories to distinguish all $K_\ast$ latent classes.
For each proxy $j$, define $\mathbf{S}_j(d)$ analogously to $\mathbf{S}(d)$ as the conditional measurement matrix for $Y^{(j)}$, and let $\mathbf{S}_j^\dagger(d)$ denote its corresponding left inverse.
Let $M_{j,a}(d) \coloneq e_a^\top \mathbf S_j^\dagger(d) e(Y^{(j)})$.
For example, if both $Y^\ast$ and $Y^{(j)}$ are binary, then 
\begin{equation*}
    \mathbf{S}_j(d) = \begin{pmatrix}
        1 - \eta^\ast_{j,0}(d) & 1 - \eta^\ast_{j,1}(d) \\
        \eta^\ast_{j,0}(d) & \eta^\ast_{j,1}(d)
    \end{pmatrix},
\end{equation*}
and $M_{j,1}(d) =\{Y^{(j)} - \eta^\ast_{j,0}(d)\}/\{\eta^\ast_{j,1}(d)-\eta^\ast_{j,0}(d)\}$ which coincides with the single proxy bridge defined for the binary setting.
We define the robust bridge as
\begin{equation*}
    H_a^{\mathrm{R}}
    =
    3\binom{J}{2}^{-1}
    \sum_{j_1<j_2}M_{j_1,a}M_{j_2,a}
    -
    2\binom{J}{3}^{-1}
    \sum_{j_1<j_2<j_3}M_{j_1,a}M_{j_2,a}M_{j_3,a},
\end{equation*}
analogous to the binary setting.
Under conditional independence across the $J$ proxies, this bridge recovers $\bbone\{Y^\ast=a\}$ and retains the same first-order cancellation
property as the binary robust bridge. Thus, once the vector of
class-specific bridges is identified, estimation and inference proceed as in
the binary case by replacing the two-term bridge-based moment with
equation~\eqref{eq:multicategory-observed-moment}.
As discussed earlier, when $J=3$ and focusing on the latent outcome variable setting, this bridge coincides with the construction in \citet{guo2026proximal}.

The principal new requirements therefore concern identification of the measurement matrix and, consequently, the nuisance components needed to construct the bridge. 
For example, with three binary proxy labels, each proxy measurement matrix has
Kruskal rank at most two, whereas Kruskal's condition for $K_\ast$ latent classes
requires the three ranks to sum to at least $2K_\ast+2$. Multicategory latent
outcomes consequently require richer proxy supports. In addition, the anchoring rule must match all $K_\ast$ recovered
classes to their substantive meanings.

The number of response categories alone does not guarantee identification;
the corresponding class-conditional measurement matrices must also have
sufficiently independent columns. 
A simple sufficient condition is that all three matrices have full column rank $K_\ast$.
Specifically, if three conditionally
independent label groups have measurement matrices $S_j(d)$ with full column
rank $K_\ast$, then each has Kruskal rank $K_\ast$, and Kruskal's condition reduces to
$3K_\ast\geq 2K_\ast+2$, which holds for $K_\ast\geq2$. 
More generally, if two groups have full column rank $K_\ast$, the
third need only have Kruskal rank at least two. Thus, one group may have fewer
than $K_\ast$ response patterns if the other groups provide sufficient identifying
variation; for example, when $K_\ast=4$, the group support sizes $(2,4,4)$ can satisfy
the rank condition at equality when the matrices reach their maximum
Kruskal ranks. 

In the binary construction in Section~\ref{sec:latent-independent}, we showed how the rank
condition underlying classical array decomposition is related to the nonzero contrast condition required for bridge-based identification. For a binary proxy $j$, $\det\{\mathbf{S}_j(d)\} = \eta^\ast_{j,1}(d)-\eta^\ast_{j,0}(d)$.
Thus, the nonzero-contrast condition
$\eta^\ast_{j,1}(d)\neq\eta^\ast_{j,0}(d)$ is exactly the condition that
$\mathbf{S}_j(d)$ be invertible, or equivalently have Kruskal rank two.
In the multicategory setting, the corresponding requirement is that its joint measurement
matrix $\mathbf{S}_j(d)$ have full column rank. 
This full column rank requirement is sufficient to
construct a bridge; identification of the overall
latent class may hold under weaker, asymmetric Kruskal rank
conditions.
The multicategory analog that requires the binary contrast to be bounded
away from zero is a uniform lower bound on the smallest singular value, i.e., $\sigma_{\min}\{\mathbf{S}_j(d)\}$ is bounded away from zero.

\section{Designing and Assessing Conditional Independence}
\label{sec:conditional-independence-practice}

Conditional independence is a structural assumption about the residual errors
of the measurements and is not guaranteed by high predictive accuracy of each imperfect measurement.
We therefore first recommend explicitly designing multiple measurements to make the assumption
as plausible as possible (Section~\ref{subsec:more-plausible}). We then propose a diagnostic tool for the conditional independence assumption (Section~\ref{subsec:diag-without}). We finally propose a direct test of the assumption and a bias-correction estimator in a special case when a small amount of validation data is available (Section~\ref{subsec:diag-with}).

\subsection{Making the Conditional Independence Assumption More Plausible}
\label{subsec:more-plausible}

The first approach is to judiciously create a conditioning set. Unlike the traditional conditional independence assumption that only conditions on the latent variable itself (equation~\eqref{eq:classical-cond}), our method explicitly allows for conditioning on auxiliary input- or annotation-level information $D_i$ and the downstream variables $W_i$. Therefore, by incorporating rich information about each annotation task, the proposed method allows for measurement errors to be dependent through task difficulty, such as text length, complexity, language, writing style, or image quality. More generally, researchers can include any pre-specified features derived from the raw input like texts, images, and videos, such as text embeddings. One useful option is to include an estimate of task difficulty constructed from the level of agreement among additional proxies that are not used in the main analysis, e.g., quantifying the disagreement between multiple LLMs and including the disagreement score in a conditioning set.

Second, researchers can also make the conditional independence assumption more plausible by carefully constructing multiple measurements themselves. In particular, researchers should seek measurements with substantively different error mechanisms rather than selecting proxies solely by predictive accuracy. For LLM annotations, this may involve using different model families (e.g., one from the GPT family and another from an open-source model) as errors are more likely to be correlated within the same model family. In general, our method only requires $G \geq 3$ groups of proxies that are conditionally independent, but we can allow for any dependence of errors between proxies within each group. Specifically, partition the proxies into $G\geq3$ groups
$\widetilde X^{[1]},\ldots,\widetilde X^{[G]}$ and generalize
Assumption~\ref{ass:label-conditional-independence} to the following group-level conditional independence assumption.
\begin{equation*}
    \widetilde X^{[1]}
    \indep\cdots\indep
    \widetilde X^{[G]}
    \mid X^\ast,\widetilde D,
\end{equation*}
which allows for arbitrary dependence among proxies within the same group. Each group is then treated as a single multicategory measurement whose realizations are the joint label patterns. If each group has a full-rank joint
measurement matrix, one can construct a group-specific bridge and apply the robust bridge construction across groups, as in
Section~\ref{subsec:multicategory-labels}. This formulation permits, for example, annotations from GPT-4 and GPT-5 to form one group, annotations from Claude Opus and Sonnet to form another group, and those from an open-source model to form the third group, while allowing for any error dependence within each group. In practice, independently produced human annotations can also provide another useful measurement group even if they are not treated as gold-standard labels and their accuracy might be lower than LLMs. This is because human errors may differ systematically from those of LLMs. Importantly, human annotators should be blinded to the other annotations.

Finally, another strategy is to use different prompts across annotators, as a particular pattern in one prompt might create the same error across annotators. If researchers can commit to one clean and validated prompt or a codebook, that will not introduce any correlated error. But when researchers cannot pin down one particular prompt, as in many applications,  they might be able to randomly sample prompts from an admissible pool of prompts so that different annotators share fewer sources of errors. More generally, prompts may also be adapted to different types of inputs, provided that the prompt version and characteristics are recorded and included in the conditioning covariates $\widetilde D$. 

\subsection{Diagnostics Without Gold-Standard Labels}
\label{subsec:diag-without}
In most applications, researchers can also conduct a diagnostic test to evaluate the observable implications of the conditional independence assumption. 

With exactly three binary proxies and two latent classes, the unrestricted latent-class model is generically just identified at each fixed value
$\widetilde D=d$: the observed label distribution has $2^3-1=7$ degrees of
freedom, equal to the $2J+1=7$ latent-class and measurement parameters.
Consequently, the three-proxy model supplies no generic overidentifying restrictions. With $J\geq4$ proxies, however, the number of
overidentifying restrictions at each fixed $d$ is $(2^J-1)-(2J+1)=2^J-2J-2$.
These restrictions can be assessed by comparing the fitted product-mixture
distribution in equation~\eqref{eq:proxy-mixture}, with the empirical joint label distribution (i.e., $\sum_i \bbone\{\widetilde X_i = \widetilde x, \widetilde D_i = d\}/\sum_i \bbone\{\widetilde D_i = d\}$), using a
parametric bootstrap deviance or an analogous conditional moment test. 
When $\widetilde D$ is continuous or high-dimensional, cross-fitted residual
interaction moments or held-out predictive checks are more practical than
cell-by-cell tests.

A complementary, target-specific diagnostic exploits the fact that every
informative subset of at least three valid proxies identifies the same
downstream parameter. Let $\widehat\beta_S^{\mathrm{DMM}}$ denote the DMM estimate obtained
from label subset $S$. Under the identifying assumptions,
$\widehat\beta_S^{\mathrm{DMM}}$ should agree across subsets up to sampling errors. A joint
bootstrap or multiplier-bootstrap test can therefore assess heterogeneity
among the subset-specific estimates. As usual, this is simply a 
specification diagnostic rather than a direct test of conditional independence: the failure to reject the null hypothesis of no difference across the DMM estimators based on different subsets of proxies does not imply the validity of the conditional independence assumption.

\subsection{Testing and Bias-Correction with Small Gold-Standard Data}
\label{subsec:diag-with}
When the conditional independence assumption is in serious doubt, it is generally recommended to collect the gold-standard labels, even for a small amount of data, for two reasons: (a) direct testing of the conditional independence assumption and (b) bias correction without assuming the conditional independence assumption. 

First, a gold-standard sample permits a direct assessment of the
conditional independence assumption. Let $R$ indicate that $X^\ast$ is observed and let
$\rho$ be its known sampling probability, as in Section~\ref{sec:problem-setting}.
For a subset of labels $S$ with $|S|\geq2$, conditional independence implies
the residual moment restriction
\begin{equation*}
    \E\left[
        \frac{R}{\rho}
        \bbone\{X^\ast=a\}
        h(\widetilde D)
        \prod_{j\in S}
        \left\{
            X^{(j)}-\eta^\ast_{j,a}(\widetilde D)
        \right\}
    \right]
    =0,
    \qquad a\in\{0,1\},
\end{equation*}
for suitable instrument functions $h$. Pair and higher-order residual products
can be tested jointly using cross-fitted estimates of the class-specific label
means. With low-dimensional discrete covariates, the same idea can be
implemented using stratified log-linear or permutation tests. Again, because a small validation sample may have limited power, failure to reject these tests should not be interpreted as establishing the conditional independence assumption.

Second, a small validation sample can also be used to combine DMM and methods that use validation data, such as DSL and PPI, to directly bias-correct downstream inference. The DSL correction can be
applied to any cross-fitted imputation of the full-data moment function,
not only one obtained from a supervised prediction of the latent label. We
therefore use the DMM moment function as an imputation in the DSL framework.
Let $\widehat\psi_i^{\mathrm{DMM}}(\beta)$ 
denote the cross-fitted DMM moment function, which is available for every
unit, and recall that $\psi^F(Y_i,X_i^\ast,W_i;\beta)$
denotes the full-data moment function, which is observed only when $R_i=1$.
The resulting DMM-assisted DSL moment function is
\begin{equation}
    \widehat\psi_i^{\mathrm{DMM\text{-}DSL}}(\beta)
    =
    \widehat\psi_i^{\mathrm{DMM}}(\beta)
    +
    \frac{R_i}{\rho_i}
    \left\{
        \psi^F(Y_i,X_i^\ast,W_i;\beta)
        -
        \widehat\psi_i^{\mathrm{DMM}}(\beta)
    \right\}.
    \label{eq:validation-augmented-dmm}
\end{equation}
Thus, DMM imputes the full-data moment function for all observations, and
the validation observations estimate and correct any remaining imputation
error.

Under the known sampling design for obtaining gold-standard labels without assuming conditional independence of multiple imperfect measurements,
\begin{equation*}
    \E\left[
        \widehat\psi_i^{\mathrm{DMM\text{-}DSL}}(\beta)
        \mid
        X_i^\ast,\widetilde{D}_i,\widetilde X_i,
        \mathcal I_{-k(i)}
    \right]
    =
    \psi^F(Y_i,X_i^\ast,W_i;\beta),
\end{equation*}
where $\mathcal I_{-k(i)}$ denotes the sample used to construct the
cross-fitted DMM nuisance estimates. This identity does not require the DMM
conditional-independence model to be correct. The validation design therefore
provides validity, while DMM serves as an imputation or control variate that
can improve precision when its bridge-based moment function is informative.

More generally, the DMM-assisted DSL moment can be written as the
following control-variate adjustment (or power-tuning version in \citealp{angelopoulos2023ppi++}):
\begin{equation}
    \widehat\psi_i^{\mathrm{DMM\text{-}DSL}}(\beta;C)
    =
    \underbrace{
        \frac{R_i}{\rho_i}
        \psi^F(Y_i,X_i^\ast,W_i;\beta)
    }_{\text{gold-standard-only moment}}
    -
    C
    \underbrace{
        \left(\frac{R_i}{\rho_i}-1\right)
        \widehat\psi_i^{\mathrm{DMM}}(\beta)
    }_{\text{mean-zero control variate}},
    \label{eq:dmm-dsl-power-family}
\end{equation}
where $C$ is a scalar for a scalar moment or a conformable matrix for a
vector-valued moment.
The coefficient $C$ may be estimated on separate folds to reduce the
asymptotic variance. Setting $C=0$ yields gold-standard-only estimation,
whereas setting $C=\mathbf I$ yields
equation~\eqref{eq:validation-augmented-dmm}.

\section{Simulation and Empirical Validation}
\label{sec:applications}

We use simulation studies (Section~\ref{sec:fowler-application}) and real-world empirical validation (Section~\ref{subsec:pan-chen}) to evaluate DMM. First, a simulation study calibrated based on an empirical study of political advertisements 
\citep{fowler2021political} examines a latent dependent variable under a
synthetic data-generating process that satisfies the conditional independence assumption. This simulation study evaluates the finite-sample performance of the DMM estimator under the required assumption. 
Second, an empirical
validation based on \citet{pan2018concealing} examines a latent independent
variable using LLM annotations. This is an empirical validation study in that we do not impose the conditional independence assumption, and we empirically evaluate whether DMM without access to the gold-standard labels can recover the oracle benchmark estimate that relies on gold-standard labels. More details about each application are provided in Appendix~\ref{app:application-details}.

\subsection{Simulation Study: Platform Differences in Political Advertising Tone}
\label{sec:fowler-application}

\citet{fowler2021political} compare the tone of political advertisements on
Facebook and television. We use the expert-coded ``Promote'' indicator as the
latent dependent variable and target the coefficient on the Facebook indicator
in an ad-level logistic regression adjusting for party, incumbency status, and
office type. We hold the observed covariates for $12{,}973$ advertisements
fixed and simulate the latent outcome and proxy label models so that the resulting distributions are similar to the distribution of the original expert labels and LLM annotations. Please see Section~5 of
\citet{egami2024using} for details on these data. In each simulation run, we generate imperfect measurements independently conditional on the latent outcome and covariates so that the conditional independence assumption holds by construction and the conditional classification rate model used by DMM is correctly specified. The target Facebook
coefficient is $1.314$, and we conduct $500$ Monte Carlo replications.

To evaluate the impact of increasing the number of proxies, we rank the proxy labels by their F1 scores against the original expert labels and construct nested proxy sets by adding labels one at a time in decreasing order of F1 score. For each number of labels $J$, the naive estimator replaces
the latent outcome with the majority vote among the $J$ proxy labels. When
$J$ is even, ties are resolved by random draw. DMM instead combines the multiple imperfect labels
using the robust bridge function for downstream inference. We compare both estimators with the infeasible oracle estimator that assumes access to gold-standard labels. We report bias, root mean squared error (RMSE), and empirical coverage of nominal $95\%$ confidence intervals.

Figure~\ref{fig:fowler-promote-proxy-count} shows the results. As repeatedly found in the literature, ignoring measurement errors leads to substantial bias even when the individual proxy labels have high
F1 scores. With the top three proxies, the naive majority-vote estimator has
bias $0.186$, coverage $0.018$, and RMSE $0.191$. However, using the same three imperfect labels, DMM can explicitly provide debiased inference: 
DMM has bias $0.002$, coverage $0.948$, and RMSE $0.063$, compared with an
RMSE of $0.045$ for the oracle benchmark estimate. Thus, DMM reduces RMSE by approximately two
thirds relative to majority voting and restores the valid confidence interval (i.e., the empirical coverage is nearly the nominal level).

Figure~\ref{fig:fowler-promote-proxy-count} also shows how DMM behaves as additional proxies are introduced. Across the evaluated proxy
sets, DMM's bias remains between $0.002$ and $0.003$, and its coverage remains
between $0.946$ and $0.962$. By contrast, the majority-vote estimator has
bias between $0.085$ and $0.226$, with coverage ranging from $0$ to $0.538$.
Even with $15$ proxy labels, majority voting has bias $0.085$, coverage
$0.538$, and RMSE $0.096$, while DMM has bias $0.002$, coverage $0.946$,
and RMSE $0.047$.

\begin{figure}[t]
    \centering
    \includegraphics[width=\linewidth]
    {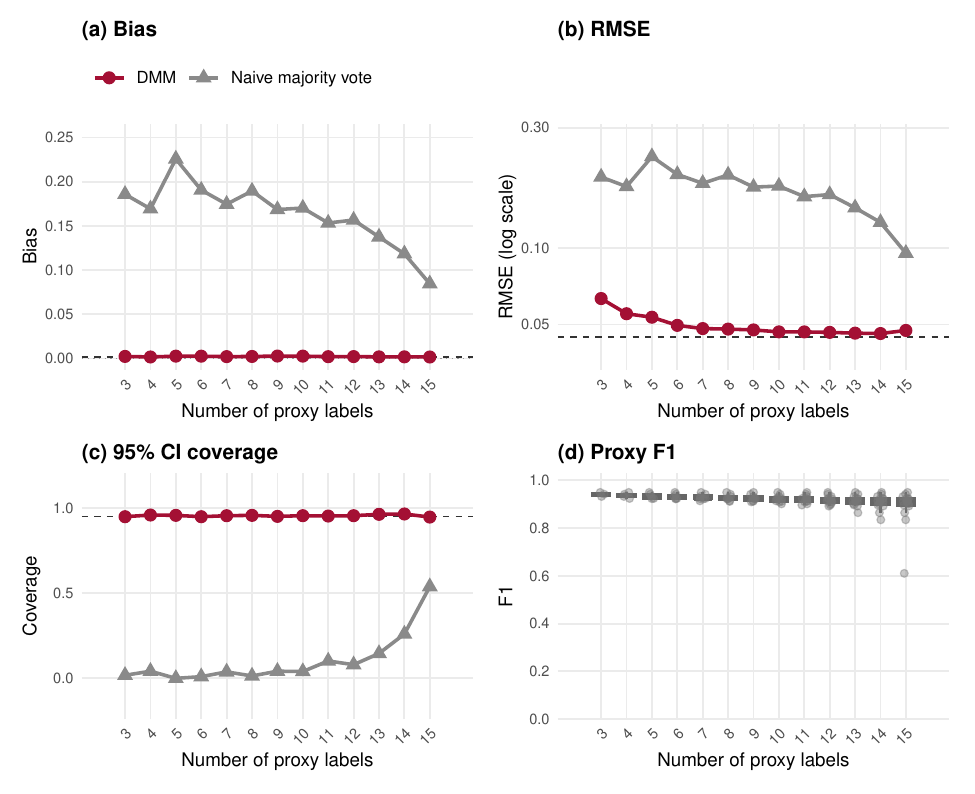}
    \caption{
    Simulation comparing DMM with naive majority voting as the
    number of proxy labels increases. Panels~(a)--(c) report bias, RMSE, and
    $95\%$ confidence-interval coverage over $500$ replications; Panel~(d)
    reports proxy F1 scores. Dashed lines mark the oracle benchmarks.
    }
    \label{fig:fowler-promote-proxy-count}
\end{figure}

Additional informative labels also improve the precision of DMM. Its RMSE
falls from $0.063$ with three labels to $0.053$, $0.048$, and $0.047$ with
the top 5, 7, and 10 labels, respectively, and reaches $0.046$ with 13
labels, close to the oracle RMSE of $0.045$. These gains eventually flatten
as progressively weaker labels are added. Thus, additional sufficiently
informative labels can narrow the precision gap between DMM and the oracle infeasible benchmark estimator that observes the latent variable for all units, but not every additional label is beneficial.

\subsection{Empirical Validation: Accusations of Wrongdoing in China}
\label{subsec:pan-chen}
\citet{pan2018concealing} study whether Chinese officials conceal online
complaints that accuse local officials of wrongdoing. We treat the
expert-coded indicator of prefecture-level wrongdoing as the latent independent
variable and upward reporting as the downstream dependent variable. 
The main estimand is the coefficient on prefecture-level wrongdoing in a logistic regression of upward reporting on the latent indicator and a subset of the covariates ($7$ binary and $1$ continuous) used in the original application.
The analysis contains $1{,}412$ complaints and uses the LLM annotations as proxy labels. We use three LLM labels in this study: GPT-4.1 5-shot, GPT-4 5-shot, and Llama-4 0-shot. 

We compare (a) the DMM estimator using the robust bridge without access to the gold-standard labels, (b) three naive
regressions that replace the latent variable with one proxy at a time, and (c) DSL using a simple random sample of $500$ gold-standard labels ($35\%$ of the sample). We evaluate them against the oracle logistic regression that uses the gold-standard labels for all units. Importantly, in this empirical validation, we use the real-world data and do not simulate any data. Therefore, this is a realistic evaluation of the DMM estimator because we do not know whether the conditional independence assumption required in the DMM estimator holds, as in the real-world empirical application.

Panel~(a) of Figure~\ref{fig:pan-chen-empirical-results} reports the point estimates and $95\%$ confidence intervals. The gold-standard benchmark for the
prefecture-wrongdoing coefficient is $-1.039$. Despite their similarly high
reported class-weighted F1 scores, the three naive estimates range from
$-1.869$ to $-0.388$ and all of them are biased. DSL and DMM estimate the coefficient to be $-1.279$ and $-0.844$, respectively, and both confidence intervals contain the benchmark estimate. Importantly, unlike DSL, DMM obtains its estimate without using any gold-standard labels.

\begin{figure}[t]
    \centering
    \includegraphics[width=\linewidth]
    {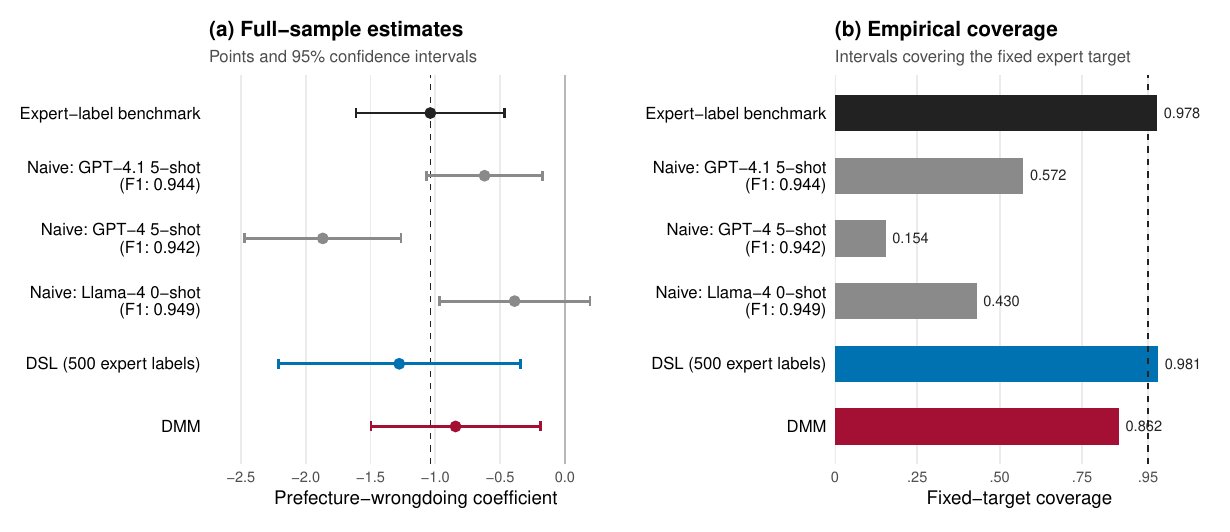}
    \caption{
    \cite{pan2018concealing} empirical estimates and coverage.
Panel~(a) reports coefficient estimates and 95\% confidence intervals; the
dashed line marks the full-sample expert-label benchmark. DMM uses no gold-standard labels, whereas DSL uses $500$. Panel~(b) reports the proportion of intervals
covering the fixed expert benchmark over $500$ bootstrap resamples.
    }
    \label{fig:pan-chen-empirical-results}
\end{figure}

Panel~(b) reports empirical coverage based on
$500$ resamples. Specifically, it shows the proportion of $95\%$
confidence intervals that contain the benchmark estimate.
The three naive estimators have coverage rates ranging from $0.154$ to $0.572$, whereas DSL and DMM have coverage rates of $0.981$ and $0.862$, respectively.

\section{Discussion}
\label{sec:discussion}

In this paper, we proposed debiased inference with multiple imperfect measurements (DMM), a framework to conduct valid downstream inference without gold-standard labels. Under the conditional independence assumption given the latent variable and any observed features of annotation tasks, such as annotation difficulty, captured by text embeddings, DMM identifies the conditional classification rate model required for valid downstream inference. 
Importantly, individual measurements may remain systematically biased, differ substantially in accuracy, and exhibit error rates that vary across units. By allowing the (mis)classification rate model to depend on input- and annotation-level information, our framework makes the conditional independence assumption more plausible and its substantive implications more transparent. We also provide practical guidance for designing measurement procedures that support this assumption and for assessing its plausibility in applied settings.

Our main methodological contribution is to connect the multiple measurement model to general downstream analyses covering both settings in which the latent variable is either an independent or a dependent variable. 
We construct a robust bridge whose conditional expectation recovers the latent indicator and use it to reproduce the oracle full-data moment function. By carefully designing how the multiple measurements are combined within the bridge, we show that the resulting construction is robust to misspecification of any one label-specific nuisance model.
Combined with cross-fitting, this yields a consistent and asymptotically normal estimator under standard product-rate conditions, allowing the nuisance estimators to converge at slow nonparametric rates $n^{-1/4}$ while preserving root-$n$ inference for the downstream parameter.
We also discuss extensions to multicategory variables under the corresponding rank conditions.

Our method accommodates any number of imperfect measurements, from as few as three to much larger collections, allowing researchers to adapt the measurement design to different annotation budgets and settings. 
The framework also clarifies how the number and quality of the available measurements affect statistical precision. 
Adding sufficiently informative measurements can reduce the measurement-induced component of the asymptotic variance, bringing DMM closer in efficiency to the oracle estimator. However, precision does not improve mechanically with the number of measurements: low-quality measurements can yield unstable bridges and offset the gains from averaging over a larger collection.

We present simulation and empirical validation studies to demonstrate the performance of DMM. Using Monte Carlo simulations calibrated to data from \citet{fowler2021political}, we show that DMM exhibits small bias and achieves nominal coverage across different numbers of labels, whereas naive estimators directly using imperfect labels remain biased and exhibit undercoverage even when individual proxies have high F1 scores. The empirical validation study based on \citet{pan2018concealing} provides a complementary illustration using real LLM annotations. DMM produces an estimate and confidence interval close to the oracle benchmark without using expert labels in settings where the conditional independence assumption is not guaranteed to hold, as in many applications. 

The main limitation of DMM is that it requires the conditional independence assumption about measurement errors, even though we made it much more plausible than the classical counterpart by allowing for a rich conditioning set. Conditional independence is not guaranteed by high classification accuracy and may be violated when annotations interact with omitted features of the input. We therefore devote substantial attention to how these concerns can be mitigated and assessed in practice. 
In Section~\ref{sec:conditional-independence-practice}, we discuss designing measurement systems with diverse annotators (either human or machine), conditioning on input- and annotation-level information that may explain common errors, and examining subset stability and other observable implications of the fitted measurement model. 
When a limited validation sample is available, it can additionally be used to assess the conditional independence assumption and to combine DMM with the design-based bias correction of DSL. 

More broadly, DMM and validation-based approaches should be viewed as complementary strategies. DMM is particularly useful when multiple measurements are expected to have reasonably high accuracy while they still have non-random, nonclassical measurement errors. Even when measurement errors are small, unless corrected carefully with DMM, they can induce substantial bias and invalidate downstream inference. Bias-correction methods using validation data, such as DSL and PPI, are preferable when gold-standard labels can be collected, and the conditional independence assumption is less credible even after conditioning on rich annotation-level variables.

\clearpage
\bibliography{bibliography}

@article{yang2026benchmark,
  title={{Benchmark Illusion: Disagreement among LLMs and Its Scientific Consequences}},
  author={Yang, Eddie and Wang, Dashun},
  journal={arXiv preprint arXiv:2602.11898},
  year={2026}
}

@article{spirling2023open,
  title={{Why Open-Source Generative AI Models Are An Ethical Way Forward For Science}},
  author={Spirling, Arthur},
  journal={Nature},
  volume={616},
  number={7957},
  pages={413--413},
  year={2023},
  publisher={Nature}
}

@article{baumann2025large,
  title={{Large Language Model Hacking: Quantifying the Hidden Risks of Using LLMs for Text Annotation}},
  author={Baumann, Joachim and R{\"o}ttger, Paul and Urman, Aleksandra and Wendsj{\"o}, Albert and Plaza-del-Arco, Flor Miriam and Gruber, Johannes B and Hovy, Dirk},
  journal={arXiv preprint arXiv:2509.08825},
  year={2025}
}

@article{wang2020methods,
  title={{Methods for Correcting Inference Based on Outcomes Predicted by Machine Learning}},
  author={Wang, Siruo and McCormick, Tyler H and Leek, Jeffrey T},
  journal={Proceedings of the National Academy of Sciences},
  volume={117},
  number={48},
  pages={30266--30275},
  year={2020},
  publisher={National Academy of Sciences}
}

@article{egami2024identification,
  title={{Identification and Estimation of Causal Peer Effects Using Double Negative Controls for Unmeasured Network Confounding}},
  author={Egami, Naoki and Tchetgen Tchetgen, Eric J},
  journal={Journal of the Royal Statistical Society Series B: Statistical Methodology},
  volume={86},
  number={2},
  pages={487--511},
  year={2024},
  publisher={Oxford University Press US}
}

@techreport{hansen2023remote,
  title={{Remote Work across Jobs, Companies, and Space}},
  author={Hansen, Stephen and Lambert, Peter John and Bloom, Nicholas and Davis, Steven J and Sadun, Raffaella and Taska, Bledi},
  year={2023},
  institution={National Bureau of Economic Research}
}

@article{halterman2026codebook,
  title={{Codebook LLMs: Evaluating LLMs as Measurement Tools for Political Science Concepts}},
  author={Halterman, Andrew and Keith, Katherine A},
  journal={Political Analysis},
  volume={34},
  number={2},
  pages={188--204},
  year={2026},
  publisher={Cambridge University Press}
}

@article{ziems2024can,
  title={{Can Large Language Models Transform Computational Social Science?}},
  author={Ziems, Caleb and Held, William and Shaikh, Omar and Chen, Jiaao and Zhang, Zhehao and Yang, Diyi},
  journal={Computational Linguistics},
  volume={50},
  number={1},
  pages={237--291},
  year={2024},
  publisher={MIT Press One Broadway, 12th Floor, Cambridge, Massachusetts 02142, USA~…}
}

@article{gilardi2023chatgpt,
  title={{ChatGPT Outperforms Crowd Workers for Text-Annotation Tasks}},
  author={Gilardi, Fabrizio and Alizadeh, Meysam and Kubli, Ma{\"e}l},
  journal={Proceedings of the National Academy of Sciences},
  volume={120},
  number={30},
  pages={e2305016120},
  year={2023},
  publisher={National Academy of Sciences}
}

@article{chen2000unified,
  title={{A Unified Approach to Regression Analysis under Double-Sampling Designs}},
  author={Chen, Yi-Hau and Chen, Hung},
  journal={Journal of the Royal Statistical Society: Series B (Statistical Methodology)},
  volume={62},
  number={3},
  pages={449--460},
  year={2000},
  publisher={Oxford University Press}
}

@book{van2000asymptotic,
  title={Asymptotic Statistics},
  author={Van der Vaart, Aad W},
  volume={3},
  year={2000},
  publisher={Cambridge University Press}
}

@article{vansteelandt2022assumption,
  title={{Assumption-lean Inference for Generalised Linear Model Parameters}},
  author={Vansteelandt, Stijn and Dukes, Oliver},
  journal={Journal of the Royal Statistical Society: Series B (Statistical Methodology)},
  volume={84},
  number={3},
  pages={657--685},
  year={2022},
  publisher={Wiley Online Library}
}

@article{buja2019models,
  title={{Models as Approximations I}},
  author={Buja, Andreas and Brown, Lawrence and Berk, Richard and George, Edward and Pitkin, Emil and Traskin, Mikhail and Zhang, Kai and Zhao, Linda},
  journal={Statistical Science},
  volume={34},
  number={4},
  pages={523--544},
  year={2019},
  publisher={JSTOR}
}

@article{chen2026partial,
  title={{Partial Identification from LLM Prompts}},
  author={Chen, Xiaohong and Rambachan, Ashesh and Tamer, Elie},
  journal={arXiv preprint arXiv:2606.15031},
  year={2026}
}

@article{kim2025correlated,
  title={{Correlated Errors in Large Language Models}},
  author={Kim, Elliot and Garg, Avi and Peng, Kenny and Garg, Nikhil},
  journal={arXiv preprint arXiv:2506.07962},
  year={2025}
}

@article{cher2018double,
	Author = {Chernozhukov, Victor and Chetverikov, Denis and Demirer, Mert and Duflo, Esther and Hansen, Christian and Newey, Whitney and Robins, James},
	Journal = {Econometrics Journal},
	Pages = {C1 -- C68},
	Title = {{Double/Debiased Machine Learning for Treatment and Structural Parameters}},
	Volume = {21},
	Year = {2018}}

@article{robins1994,
	Author = {Robins, James M and Rotnitzky, Andrea and Zhao, Lue Ping},
	Journal = {Journal of the American Statistical Association},
	Number = {427},
	Pages = {846--866},
	Publisher = {Taylor \& Francis},
	Title = {{Estimation of Regression Coefficients When Some Regressors Are Not Always Observed}},
	Volume = {89},
	Year = {1994}}

@article{chen2008semiparametric,
  title={{Semiparametric Efficiency in GMM Models with Auxiliary Data}},
  author={Chen, Xiaohong and Hong, Han and Tarozzi, Alessandro},
  year={2008},
  journal={Annals of Statistics},
}

@article{ludwig2024large,
  title={{Large Language Models: An Applied Econometric Framework}},
  author={Ludwig, Jens and Mullainathan, Sendhil and Rambachan, Ashesh},
  journal={Annual Review of Economics},
  volume={18},
  year={2026},
  publisher={Annual Reviews}
}

@article{carlson2025unifying,
  title={{A Unifying Framework for Robust and Efficient Inference with Unstructured Data}},
  author={Carlson, Jacob and Dell, Melissa},
  journal={arXiv preprint arXiv:2505.00282},
  year={2025}
}

@article{hu2008identification,
  title={Identification and estimation of nonlinear models with misclassification error using instrumental variables: A general solution},
  author={Hu, Yingyao},
  journal={Journal of Econometrics},
  volume={144},
  number={1},
  pages={27--61},
  year={2008},
  publisher={Elsevier}
}

@unpublished{bouyamourn2026interpretable,
  author = {Bouyamourn, Adam and Spirling, Arthur},
  title  = {Interpretable Aggregation of Correlated {LLM} Annotations},
  year   = {2026},
  month  = jul,
  note   = {Manuscript presented at the 2026 Annual Meeting of the Society for Political Methodology}
}

@article{katsumata2023statistical,
  author      = {Katsumata, Hiroto and Yamauchi, Soichiro},
  title       = {Statistical Analysis with Machine Learning Predicted Variables},
  journal     = {Working Paper},
  year        = {2023}
}

@inproceedings{mozer2023decreasing,
  title={Decreasing the human coding burden in randomized trials with text-based outcomes via model-assisted impact analysis},
  author={Mozer, Reagan and Miratrix, Luke},
  booktitle={2023 IMS International Conference on Statistics and Data Science (ICSDS)},
  pages={520},
  year={2023}
}

@article{angelopoulos2023prediction,
  title={Prediction-powered inference},
  author={Angelopoulos, Anastasios N and Bates, Stephen and Fannjiang, Clara and Jordan, Michael I. and Zrnic, Tijana},
  journal={Science},
  volume={382},
  number={6671},
  pages={669--674},
  year={2023},
  publisher={American Association for the Advancement of Science}
}

@article{egami2024using,
  title={Using large language model annotations for the social sciences: A general framework of using predicted variables in downstream analyses},
  author={Egami, Naoki and Hinck, Musashi and Stewart, Brandon M and Wei, Hanying},
  journal={American Journal of Political Science},
  year={2026}
}

@article{egami2023using,
  title={Using imperfect surrogates for downstream inference: Design-based supervised learning for social science applications of large language models},
  author={Egami, Naoki and Hinck, Musashi and Stewart, Brandon and Wei, Hanying},
  journal={Advances in Neural Information Processing Systems},
  volume={36},
  pages={68589--68601},
  year={2023}
}

@article{nakamura2025surrogate,
  title={Surrogate Representation Inference for Text and Image Annotations},
  author={Nakamura, Kentaro},
  journal={arXiv preprint arXiv:2509.12416},
  year={2025}
}

@article{kruskal1977three,
  title={Three-way arrays: rank and uniqueness of trilinear decompositions, with application to arithmetic complexity and statistics},
  author={Kruskal, Joseph B},
  journal={Linear Algebra and its Applications},
  volume={18},
  number={2},
  pages={95--138},
  year={1977},
  publisher={Elsevier}
}

@article{schennach2016recent,
  title={Recent advances in the measurement error literature},
  author={Schennach, Susanne M},
  journal={Annual Review of Economics},
  volume={8},
  number={1},
  pages={341--377},
  year={2016},
  publisher={Annual Reviews}
}

@article{schennach2022measurement,
  title={Measurement systems},
  author={Schennach, Susanne},
  journal={Journal of Economic Literature},
  volume={60},
  number={4},
  pages={1223--1263},
  year={2022},
  publisher={American Economic Association 2014 Broadway, Suite 305, Nashville, TN 37203-2425}
}

@article{hu2008instrumental,
  title={Instrumental variable treatment of nonclassical measurement error models},
  author={Hu, Yingyao and Schennach, Susanne M},
  journal={Econometrica},
  volume={76},
  number={1},
  pages={195--216},
  year={2008},
  publisher={Wiley Online Library}
}

@article{kuroki2014measurement,
  title={Measurement bias and effect restoration in causal inference},
  author={Kuroki, Manabu and Pearl, Judea},
  journal={Biometrika},
  pages={423--437},
  year={2014},
  publisher={JSTOR}
}

@article{zhou2024causal,
  title={Causal inference for a hidden treatment},
  author={Zhou, Ying and Tchetgen Tchetgen, Eric},
  journal={arXiv preprint arXiv:2405.09080},
  year={2024}
}

@article{miao2018identifying,
  title={Identifying causal effects with proxy variables of an unmeasured confounder},
  author={Miao, Wang and Geng, Zhi and Tchetgen Tchetgen, Eric J},
  journal={Biometrika},
  volume={105},
  number={4},
  pages={987--993},
  year={2018},
  publisher={Oxford University Press}
}

@article{allman2009identifiability,
  title   = {Identifiability of Parameters in Latent Structure Models with Many Observed Variables},
  author  = {Allman, Elizabeth S. and Matias, Catherine and Rhodes, John A.},
  journal = {The Annals of Statistics},
  year    = {2009},
  volume  = {37},
  number  = {6A},
  pages   = {3099--3132},
  doi     = {10.1214/09-AOS689}
}

@article{chernozhukov2022locally,
  title={Locally robust semiparametric estimation},
  author={Chernozhukov, Victor and Escanciano, Juan Carlos and Ichimura, Hidehiko and Newey, Whitney K and Robins, James M},
  journal={Econometrica},
  volume={90},
  number={4},
  pages={1501--1535},
  year={2022},
  publisher={Wiley Online Library}
}

@article{guo2026proximal,
  title={Proximal Causal Inference for Hidden Outcomes},
  author={Guo, Helen and Ghassami, AmirEmad and Shpitser, Ilya and Ogburn, Elizabeth L},
  journal={arXiv preprint arXiv:2605.09849},
  year={2026}
}

@article{fowler2021political,
  title={Political Advertising Online and Offline},
  author={Fowler, Erika Franklin and Franz, Michael M. and Martin, Gregory J. and Peskowitz, Zachary and Ridout, Travis N.},
  journal={American Political Science Review},
  volume={115},
  number={1},
  pages={130--149},
  year={2021}
}

@article{pan2018concealing,
  title={Concealing Corruption: How Chinese Officials Distort Upward Reporting of Online Grievances},
  author={Pan, Jennifer and Chen, Kaiping},
  journal={American Political Science Review},
  volume={112},
  number={3},
  pages={602--620},
  year={2018}
}

@article{battaglia2024inference,
  author  = {Battaglia, Laura and Christensen, Timothy and Hansen, Stephen and Sacher, Szymon},
  title   = {Inference for Regression with Variables Generated by {AI} or Machine Learning},
  journal = {arXiv preprint arXiv:2402.15585},
  year    = {2024},
  doi     = {10.48550/arXiv.2402.15585}
}

@article{dawid1979maximum,
  title={Maximum likelihood estimation of observer error-rates using the EM algorithm},
  author={Dawid, Alexander Philip and Skene, Allan M},
  journal={Journal of the Royal Statistical Society: Series C (Applied Statistics)},
  volume={28},
  number={1},
  pages={20--28},
  year={1979},
  publisher={Wiley Online Library}
}

@article{chen2011nonlinear,
  title={Nonlinear models of measurement errors},
  author={Chen, Xiaohong and Hong, Han and Nekipelov, Denis},
  journal={Journal of Economic Literature},
  volume={49},
  number={4},
  pages={901--937},
  year={2011},
  publisher={American Economic Association}
}

@article{goodman1974exploratory,
  title={Exploratory latent structure analysis using both identifiable and unidentifiable models},
  author={Goodman, Leo A},
  journal={Biometrika},
  volume={61},
  number={2},
  pages={215--231},
  year={1974},
  publisher={Oxford University Press}
}

@article{hall2003nonparametric,
  title={Nonparametric estimation of component distributions in a multivariate mixture},
  author={Hall, Peter and Zhou, Xiao-Hua},
  journal={The Annals of Statistics},
  volume={31},
  number={1},
  pages={201--224},
  year={2003},
  publisher={Institute of Mathematical Statistics}
}

@article{bonhomme2016estimating,
  title={Estimating multivariate latent-structure models},
  author={Bonhomme, St{\'e}phane and Jochmans, Koen and Robin, Jean-Marc},
  journal={The Annals of Statistics},
  volume={44},
  number={2},
  pages={540--563},
  year={2016}
}

@article{raykar2010learning,
  title={Learning from crowds.},
  author={Raykar, Vikas C and Yu, Shipeng and Zhao, Linda H and Valadez, Gerardo Hermosillo and Florin, Charles and Bogoni, Luca and Moy, Linda},
  journal={Journal of Machine Learning Research},
  volume={11},
  number={43},
  pages={1297--1322},
  year={2010}
}

@article{zhang2016spectral,
  title={Spectral methods meet EM: A provably optimal algorithm for crowdsourcing},
  author={Zhang, Yuchen and Chen, Xi and Zhou, Dengyong and Jordan, Michael I.},
  journal={Journal of Machine Learning Research},
  volume={17},
  number={102},
  pages={1--44},
  year={2016}
}

@article{anandkumar2014tensor,
  title={Tensor decompositions for learning latent variable models},
  author={Anandkumar, Animashree and Ge, Rong and Hsu, Daniel and Kakade, Sham M and Telgarsky, Matus},
  journal={Journal of Machine Learning Research},
  volume={15},
  number={1},
  pages={2773--2832},
  year={2014},
  publisher={JMLR. org}
}

@inproceedings{chaudhuri2009multi,
  title={Multi-view clustering via canonical correlation analysis},
  author={Chaudhuri, Kamalika and Kakade, Sham M. and Livescu, Karen and Sridharan, Karthik},
  booktitle={Proceedings of the 26th Annual International Conference on Machine Learning},
  pages={129--136},
  year={2009}
}

@article{stegeman2007kruskal,
  title={On Kruskal’s uniqueness condition for the Candecomp/Parafac decomposition},
  author={Stegeman, Alwin and Sidiropoulos, Nicholas D},
  journal={Linear Algebra and its Applications},
  volume={420},
  number={2-3},
  pages={540--552},
  year={2007},
  publisher={Elsevier}
}

@article{shen1997methods,
  title={On methods of sieves and penalization},
  author={Shen, Xiaotong},
  journal={The Annals of Statistics},
  pages={2555--2591},
  year={1997},
  publisher={JSTOR}
}

@article{chen2012estimation,
  title={Estimation of nonparametric conditional moment models with possibly nonsmooth generalized residuals},
  author={Chen, Xiaohong and Pouzo, Demian},
  journal={Econometrica},
  volume={80},
  number={1},
  pages={277--321},
  year={2012},
  publisher={Wiley Online Library}
}

@article{dempster1977maximum,
  title={Maximum likelihood from incomplete data via the EM algorithm},
  author={Dempster, Arthur P and Laird, Nan M and Rubin, Donald B},
  journal={Journal of the Royal Statistical Society: Series B (Statistical Methodology)},
  volume={39},
  number={1},
  pages={1--38},
  year={1977},
  publisher={Wiley Online Library}
}

@article{carroll1970analysis,
  title={Analysis of individual differences in multidimensional scaling via an N-way generalization of “Eckart-Young” decomposition},
  author={Carroll, J Douglas and Chang, Jih-Jie},
  journal={Psychometrika},
  volume={35},
  number={3},
  pages={283--319},
  year={1970},
  publisher={Cambridge University Press}
}

@article{harshman1970foundations,
  title={Foundations of the PARAFAC procedure: Models and conditions for an ``explanatory'' multi-modal factor analysis},
  author={Harshman, Richard A},
  journal={UCLA working papers in phonetics},
  volume={16},
  number={1},
  pages={84},
  year={1970},
  publisher={Los Angeles, CA}
}

@article{angelopoulos2023ppi++,
  title={PPI++: Efficient prediction-powered inference},
  author={Angelopoulos, Anastasios N and Duchi, John C and Zrnic, Tijana},
  journal={arXiv preprint arXiv:2311.01453},
  year={2023}
}

\clearpage
\appendix

\numberwithin{equation}{section}
\counterwithin{figure}{section}
\counterwithin{table}{section}

\section{Proofs}
\label{app:proof}

Throughout this section, partition $\{1,\ldots,n\}$ into $K$ folds
$\mathcal I_1,\ldots,\mathcal I_K$. For each fold $k$, let
$\mathcal I_{-k}=\{1,\ldots,n\}\setminus\mathcal I_k$ denote the corresponding
training indices and let $n_k=|\mathcal I_k|$, where
$O_i=(Y_i,\widetilde X_i,\widetilde D_i)$. Nuisance estimates indexed by
$(-k)$ are fitted using observations indexed by $\mathcal I_{-k}$. We take the
number of folds and the fold proportions to be fixed. For any measurable function $f$, write
\begin{equation*}
    Pf=\E\{f(O)\},
    \qquad
    \mathbb P_{n,k}f
    =
    \frac{1}{n_k}\sum_{i\in\mathcal I_k}f(O_i),
    \qquad
    \mathbb P_nf
    =
    \frac{1}{n}\sum_{i=1}^n f(O_i).
\end{equation*}
For a generic nuisance collection $\eta \coloneq \{\eta_{j,0}(\cdot), \eta_{j,1}(\cdot)\}_{j=1}^{J}$, let
\begin{equation*}
    \psi^{\mathrm{DMM}}(O;\beta,\eta)
    \ = \ 
    \{1-H^{\mathrm{R}}(\widetilde{X}, \widetilde{D};\eta)\}\psi_0(Y,W;\beta)
    \ + \ 
    H^{\mathrm{R}}(\widetilde{X}, \widetilde{D};\eta)\psi_1(Y,W;\beta)
\end{equation*}
where $H^{\mathrm{R}}(\widetilde{X}, \widetilde D; \eta^\ast) = 3\overline H_2(\widetilde{X}, \widetilde D; \eta^\ast)-2\overline H_3(\widetilde{X}, \widetilde D; \eta^\ast)$ and $\psi_a(Y,W;\beta) \coloneq \psi^F(Y,X^\ast = a, W;\beta).$ The fold-specific fitted objects are denoted by $\widehat M_{j,-k}=M_j(\widetilde{D};\widehat\eta_{-k})$ and
$\widehat H_{-k}^{\mathrm{R}}=H^{\mathrm{R}}(\widetilde{X}, \widetilde{D};\widehat\eta_{-k})$,
whereas $M_{j,0}=M_j(\widetilde{D};\eta^\ast)$ and
$H_0^{\mathrm{R}}=H^{\mathrm{R}}(\widetilde{X}, \widetilde{D};\eta^\ast)$ denote their population
counterparts.

\subsection{Proof of Proposition~\ref{prop:robust-symmetric-bridge}: Bounds for the Robust Bridge}
\label{app:proof-rs-bridge}

For $a\in\{0,1\}$, define the fold-specific single-bridge error
\begin{equation*}
    e_{j,a}^{(-k)}(\widetilde D)
    =
    \E\!\left[
        \widehat M_{j,-k}
        \mid X^\ast=a,\widetilde D,\mathcal I_{-k}
    \right]
    -a,
\end{equation*}
and the corresponding robust bridge error
\begin{equation*}
    B_{a,-k}^{\mathrm{R}}(\widetilde D)
    =
    \E\!\left[
        \widehat H_{-k}^{\mathrm{R}}
        \mid X^\ast=a,\widetilde D,\mathcal I_{-k}
    \right]
    -a.
\end{equation*}

\begin{lemma}[Foldwise robust bridge error]
\label{lem:appendix-rs-bridge-error}
Suppose Assumption~\ref{ass:label-conditional-independence} holds, $J$ is fixed, and the true and fitted class contrasts, $\{\eta^\ast_{j,1}(\widetilde D) - \eta^\ast_{j,0}(\widetilde D)\}_{j=1}^J$ and $\{\{\widehat{\eta}^{(-k)}_{j,1}(\widetilde D) - \widehat{\eta}^{(-k)}_{j,0}(\widetilde D)\}_{j=1}^J\}_{k=1}^K$ , are uniformly bounded away from zero. Then, for every fold
$k$, conditional on $\mathcal I_{-k}$,
\begin{align}
    B_{1,-k}^{\mathrm{R}}(\widetilde D)
    ={}&
    -3\binom{J}{2}^{-1}
    \sum_{j_1<j_2}
    e_{j_1,1}^{(-k)}(\widetilde D)
    e_{j_2,1}^{(-k)}(\widetilde D)
    \nonumber\\
    &-
    2\binom{J}{3}^{-1}
    \sum_{j_1<j_2<j_3}
    e_{j_1,1}^{(-k)}(\widetilde D)
    e_{j_2,1}^{(-k)}(\widetilde D)
    e_{j_3,1}^{(-k)}(\widetilde D),
    \label{eq:appendix-rs-error-one}
    \\
    B_{0,-k}^{\mathrm{R}}(\widetilde D)
    ={}&
    3\binom{J}{2}^{-1}
    \sum_{j_1<j_2}
    e_{j_1,0}^{(-k)}(\widetilde D)
    e_{j_2,0}^{(-k)}(\widetilde D)
    \nonumber\\
    &-
    2\binom{J}{3}^{-1}
    \sum_{j_1<j_2<j_3}
    e_{j_1,0}^{(-k)}(\widetilde D)
    e_{j_2,0}^{(-k)}(\widetilde D)
    e_{j_3,0}^{(-k)}(\widetilde D).
    \label{eq:appendix-rs-error-zero}
\end{align}
Moreover, for a finite constant $C$ that does not depend on $n$ or $k$, 
\begin{align}
    \max_{k,j,a}
    \left\|e_{j,a}^{(-k)}\right\|_{2,P}
    &\le C\delta_n,
    \label{eq:appendix-single-bridge-rate}
    \\
    \max_k
    \left\|
        \widehat H_{-k}^{\mathrm{R}}-H_0^{\mathrm{R}}
    \right\|_{2,P}
    &\le C\delta_n, \label{eq:appendix-rs-l2-rate}\\
    \max_{k,a}
    P\left|B_{a,-k}^{\mathrm{R}}\right|
    &\le C\delta_n^2,
    \label{eq:appendix-rs-drift-rate}        
\end{align}
where $\delta_n
    =
    \max_{k,j,a}
    \left\|
        \widehat\eta_{j,a}^{(-k)}-\eta_{j,a}^\ast
    \right\|_{2,P}.$
Consequently, the conditional bridge error has no term that is linear in a
single first-stage error. If all proxy-specific nuisance functions are
correctly specified except possibly those for one proxy, both
$B_{1,-k}^{\mathrm{R}}$ and $B_{0,-k}^{\mathrm{R}}$ are exactly zero.
\end{lemma}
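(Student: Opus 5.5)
The plan rests on one structural fact. Conditional on $\mathcal I_{-k}$, the fitted nuisance $\widehat\eta^{(-k)}$ is a fixed function. Hence each $\widehat M_{j,-k}$ is a fixed affine function of $X^{(j)}$ given $\widetilde D$. By Assumption~\ref{ass:label-conditional-independence}, and because fold $k$ is independent of $\mathcal I_{-k}$, the variables $\widehat M_{1,-k},\ldots,\widehat M_{J,-k}$ are then mutually independent given $(X^\ast=a,\widetilde D,\mathcal I_{-k})$.

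Therefore the conditional expectation of every product over distinct indices factorizes:
\[
\E\Bigl[\prod_{j\in S}\widehat M_{j,-k}\mid X^\ast=a,\widetilde D,\mathcal I_{-k}\Bigr]=\prod_{j\in S}\bigl(a+e^{(-k)}_{j,a}\bigr).
\]
\textbf{Exact expansions.}
\begin{itemize}
\item For $a=0$, the pair and triple averages have expectations $\binom J2^{-1}\sum e_{j_1,0}e_{j_2,0}$ and $\binom J3^{-1}\sum e_{j_1,0}e_{j_2,0}e_{j_3,0}$. Combining them with weights $3$ and $-2$ gives \eqref{eq:appendix-rs-error-zero} directly.
\item For $a=1$, expand $\prod(1+e_j)$. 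The pair average becomes $1+\frac{2}{J}\sum_j e_j+\binom J2^{-1}\sum e_{j_1}e_{j_2}$. The triple average becomes $1+\frac3J\sum_j e_j+\frac{3(J-2)}{J(J-1)}\sum_{j_1<j_2}e_{j_1}e_{j_2}+\binom J3^{-1}\sum e_{j_1}e_{j_2}e_{j_3}$. Here each pair appears in $J-2$ triples, and $\binom J3^{-1}(J-2)=6/\{J(J-1)\}=3\binom J2^{-1}$.
\item With weights $3$ and $-2$, the constants give $1$ and the linear terms cancel, since $6/J-6/J=0$. The pair coefficient is $3\binom J2^{-1}-6\binom J2^{-1}=-3\binom J2^{-1}$, and the triple coefficient is $-2\binom J3^{-1}$. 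This is \eqref{eq:appendix-rs-error-one}.
\end{itemize}

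\textbf{Multiple robustness.} The last claim then follows immediately. If all but one proxy are correctly specified, then $e_{j,a}=0$ for those $j$, because the true $\eta^\ast$ makes the bridge exactly unbiased. So every pair and triple contains at least one vanishing factor.

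\textbf{Rates.}
\begin{itemize}
\item \emph{Single-bridge error.} Write $\widehat\Delta_j=\widehat\eta_{j,1}-\widehat\eta_{j,0}$. Compute
\[
e_{j,a}=\frac{\eta^\ast_{j,a}-\widehat\eta_{j,0}}{\widehat\Delta_j}-a .
\]
For $a=0$ this is $(\eta^\ast_{j,0}-\widehat\eta_{j,0})/\widehat\Delta_j$. For $a=1$ it is $\{(\eta^\ast_{j,1}-\widehat\eta_{j,1})\}/\widehat\Delta_j$ after simplification. Since $\widehat\Delta_j$ is bounded away from zero, the bound $|e_{j,a}|\le c^{-1}|\widehat\eta_{j,a}-\eta^\ast_{j,a}|$ holds pointwise, which gives \eqref{eq:appendix-single-bridge-rate}.
\item \emph{Bridge difference.} For \eqref{eq:appendix-rs-l2-rate}, write $\widehat M_j-M_{j,0}$ as a difference of ratios. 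The numerators $X^{(j)}-\eta$ are bounded by $1$ because all $\eta$ lie in $[0,1]$, and the denominators are bounded below. So $|\widehat M_j-M_{j,0}|\le C\max_a|\widehat\eta_{j,a}-\eta^\ast_{j,a}|$ pointwise, and $|\widehat M_j|,|M_{j,0}|\le C$. Telescoping each product, $\prod\widehat M-\prod M_0=\sum(\text{bounded})\cdot(\widehat M_j-M_{j,0})$, then gives the $L_2$ bound $C\delta_n$.
\item \emph{Bridge drift.} For \eqref{eq:appendix-rs-drift-rate}, the $e_{j,a}$ are uniformly bounded, since both bridges are bounded. So each triple product is bounded by $C|e_{j_1}e_{j_2}|$. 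Cauchy--Schwarz then gives $P|e_{j_1}e_{j_2}|\le\|e_{j_1}\|_2\|e_{j_2}\|_2\le C\delta_n^2$.
\end{itemize}
The only mild obstacle is the careful bookkeeping of the triple-expansion coefficient $(J-2)$ and keeping every constant uniform in $k$ and $n$. This uses the stated bounded-away-from-zero contrasts and the requirement that the fitted $\eta$ lie in $[0,1]$, which holds with probability approaching one.
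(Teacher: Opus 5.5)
Your proposal is correct and follows the paper's proof essentially step for step. Both arguments use the same ingredients: mutual independence of the fitted single-proxy bridges given $(X^\ast,\widetilde D,\mathcal I_{-k})$; the counting argument ($J-1$ pairs and $\binom{J-1}{2}$ triples per index, $J-2$ triples per pair) for the $X^\ast=1$ cancellation; the pointwise bound on $e_{j,a}$; telescoping for the $L_2$ rate; and Cauchy--Schwarz for the drift. One small slip: the pair coefficient in your expansion of the triple average should be $6/\{J(J-1)\}=3\binom{J}{2}^{-1}$, not $3(J-2)/\{J(J-1)\}$. Your very next sentence and your final coefficient $-3\binom{J}{2}^{-1}$ already use the correct value.
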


\begin{proof}
Because the fold-specific nuisance estimates are fixed conditional on
$\mathcal I_{-k}$, for $a\in\{0,1\}$,
\begin{align*}
    &\E\!\left[
        \widehat M_{j,-k}
        \mid X^\ast=a,\widetilde D,\mathcal I_{-k}
    \right]
     \ = \ 
    \frac{
        \eta^\ast_{j,a}(\widetilde D)
        -\widehat\eta_{j,0}^{(-k)}(\widetilde D)
    }{
        \widehat\eta_{j,1}^{(-k)}(\widetilde D)-\widehat\eta_{j,0}^{(-k)}(\widetilde D)
    }
\end{align*}
where $\eta^\ast_{j,a}(d) \coloneq \Pr(X^{(j)}=1\mid X^\ast=a,\widetilde D=d).$
Subtracting $a$ yields
\begin{equation*}
    e_{j,a}^{(-k)}(\widetilde D)
    =
    \frac{
        \eta^\ast_{j,a}(\widetilde D)
        -\widehat\eta_{j,a}^{(-k)}(\widetilde D)
    }{
        \widehat\eta_{j,1}^{(-k)}(\widetilde D)-\widehat\eta_{j,0}^{(-k)}(\widetilde D)
    }.
\end{equation*}
Because the absolute value of the denominator is uniformly bounded away from zero,
\begin{equation*}
    \left|e_{j,a}^{(-k)}(\widetilde D)\right|
    \le
    C\left|
        \widehat\eta_{j,a}^{(-k)}(\widetilde D)
        -\eta^\ast_{j,a}(\widetilde D)
    \right|,
\end{equation*}
which proves equation~\eqref{eq:appendix-single-bridge-rate} after taking
$L_2(P)$ norms.

Assumption~\ref{ass:label-conditional-independence} implies that, conditional on
$(X^\ast,\widetilde D,\mathcal I_{-k})$, the fitted single-proxy bridges are mutually independent. When $X^\ast=1$,
\begin{align*}
    &\E\!\left[
        \widehat{\overline H}_{2,-k}
        \mid X^\ast=1,\widetilde D,\mathcal I_{-k}
    \right]
    =
    \binom{J}{2}^{-1}
    \sum_{j_1<j_2}
    \{1+e_{j_1,1}^{(-k)}\}
    \{1+e_{j_2,1}^{(-k)}\},
    \\
    &\E\!\left[
        \widehat{\overline H}_{3,-k}
        \mid X^\ast=1,\widetilde D,\mathcal I_{-k}
    \right]
    =
    \binom{J}{3}^{-1}
    \sum_{j_1<j_2<j_3}
    \{1+e_{j_1,1}^{(-k)}\}
    \{1+e_{j_2,1}^{(-k)}\}
    \{1+e_{j_3,1}^{(-k)}\},
\end{align*}
where the arguments $\widetilde D$ for $e(\cdot)$ are suppressed for readability. Each
$e_{j,1}^{(-k)}$ appears in $J-1$ pairs and $\binom{J-1}{2}$ triples, so the
average linear terms are, respectively,
\begin{equation*}
    \frac{2}{J}\sum_{j=1}^J e_{j,1}^{(-k)}
    \qquad\text{and}\qquad
    \frac{3}{J}\sum_{j=1}^J e_{j,1}^{(-k)}.
\end{equation*}
Their coefficient in
$3\widehat{\overline H}_{2,-k}-2\widehat{\overline H}_{3,-k}$ is zero because
$3(2/J)-2(3/J)=0$. Each pair appears in $J-2$ triples and
\begin{equation*}
    (J-2)
    \binom{J}{3}^{-1}
    =
    3\binom{J}{2}^{-1}.
\end{equation*}
The coefficient on each pair product is therefore
$3\binom{J}{2}^{-1}-2\{3\binom{J}{2}^{-1}\}
=-3\binom{J}{2}^{-1}$, while each triple product has coefficient
$-2\binom{J}{3}^{-1}$. This proves
equation~\eqref{eq:appendix-rs-error-one}.

When $X^\ast=0$, the conditional mean of $\widehat M_{j,-k}$ is
$e_{j,0}^{(-k)}$, so the pair and triple products directly give
equation~\eqref{eq:appendix-rs-error-zero}. By
equation~\eqref{eq:appendix-single-bridge-rate}, Cauchy--Schwarz yields
\begin{equation*}
    P\left|e_{j_1,a}^{(-k)}e_{j_2,a}^{(-k)}\right|
    \le
    \left\|e_{j_1,a}^{(-k)}\right\|_{2,P}
    \left\|e_{j_2,a}^{(-k)}\right\|_{2,P}
    \le
    C\delta_n^2.
\end{equation*}
Because the fitted class contrasts are uniformly bounded away from zero, the single-bridge errors are uniformly bounded, so
each cubic term is bounded by a constant times one of its pair products. Since
$J$ is fixed, summing the finite number of terms proves
equation~\eqref{eq:appendix-rs-drift-rate}.

It remains to prove equation~\eqref{eq:appendix-rs-l2-rate}. Write
\begin{equation*}
    \Delta_j(\widetilde D)
    =
    \eta^\ast_{j,1}(\widetilde D)-\eta^\ast_{j,0}(\widetilde D).
\end{equation*}
The difference between the fitted and true single-label bridges satisfies the
following equality.
\begin{align*}
    \widehat M_{j,-k}-M_{j,0}
     \ = \ 
    \frac{
        \eta^\ast_{j,0}-\widehat\eta_{j,0}^{(-k)}
    }{
        \widehat\eta_{j,1}^{(-k)}-\widehat\eta_{j,0}^{(-k)}
    }
    \ + \
    \{X^{(j)}-\eta^\ast_{j,0}\}
    \frac{
        \bigl(\eta^\ast_{j,1}-\widehat\eta_{j,1}^{(-k)}\bigr)
        -
        \bigl(\eta^\ast_{j,0}-\widehat\eta_{j,0}^{(-k)}\bigr)
    }{
        \bigl(\widehat\eta_{j,1}^{(-k)}-\widehat\eta_{j,0}^{(-k)}\bigr)
        \bigl(\eta^\ast_{j,1}-\eta^\ast_{j,0}\bigr)
    },
\end{align*}
where all nuisance functions are evaluated at $\widetilde D$. Because $X^{(j)}$ and the proxy means lie in $[0,1]$ and both contrasts are bounded away from zero,
\begin{equation*}
    \left|\widehat M_{j,-k}-M_{j,0}\right|
    \le
    C\max_{a\in\{0,1\}}
    \left|
        \widehat\eta_{j,a}^{(-k)}-\eta^\ast_{j,a}
    \right|.
\end{equation*}
Hence
\begin{equation*}
    \max_{k,j}
    \left\|\widehat M_{j,-k}-M_{j,0}\right\|_{2,P}
    \le
    C\delta_n.
\end{equation*}
For any $S=\{j_1,\ldots,j_m\}$ with $m\in\{2,3\}$, inserting one fitted
factor at a time gives
\begin{equation*}
    \prod_{t=1}^m\widehat M_{j_t,-k}
    -\prod_{t=1}^m M_{j_t,0}
    =
    \sum_{t=1}^m
    \left(\prod_{s<t}\widehat M_{j_s,-k}\right)
    (\widehat M_{j_t,-k}-M_{j_t,0})
    \left(\prod_{s>t}M_{j_s,0}\right).
\end{equation*}
Because the true and fitted class contrasts are uniformly bounded away from zero, all factors multiplying each difference are uniformly bounded. 
Taking $L_2(P)$ norms, applying the triangle inequality,
and averaging the finitely many pair and triple products proves
equation~\eqref{eq:appendix-rs-l2-rate}.
\end{proof}

\begin{proof}[Proof of Proposition~\ref{prop:robust-symmetric-bridge}]
The conditional error identities follow from
Lemma~\ref{lem:appendix-rs-bridge-error}. Applying the lemma at the true
nuisance functions gives $e_{j,a}^{(-k)}=0$ for every $j$ and
$a\in\{0,1\}$, yielding the unbiasedness of the bridge function. The displayed identities
contain only products involving errors from at least two distinct labels and
therefore contain no term that is linear in a single first-stage error.
Moreover, if all proxy-specific nuisance functions are correctly specified
except possibly those associated with one proxy, every pair and triple product
of the errors vanishes, yielding the stated multi-proxy robustness.
\end{proof}

\subsection{Proof of Theorem~\ref{thm:latent-independent-asymptotics}: Second-Order Remainder for the DMM Moment Function and Large-Sample Theory}
\label{app:proof-rs-asymptotics}
Let $\mathcal B_0\subset\mathcal B$ be the compact neighborhood of
$\beta^\ast$ in Theorem~\ref{thm:latent-independent-asymptotics}.
Under the bounded-envelope condition in
Theorem~\ref{thm:latent-independent-asymptotics}, there exists a finite
constant $C$ such that
\begin{equation*}
    \sup_{\beta\in\mathcal B_0}
    \left\|\psi^F(Y,1,W;\beta)
    -
    \psi^F(Y,0,W;\beta)\right\|
    \le C
    \qquad\text{a.s.}
\end{equation*}
The remaining smoothness and envelope conditions are those stated in
Theorem~\ref{thm:latent-independent-asymptotics}.

For each fold, define the conditional population remainder
\begin{equation*}
    R_{n,k}(\beta)
    =
    \E\!\left[
        \psi^{\mathrm{DMM}}(O;\beta,\widehat\eta_{-k})
        -\psi^{\mathrm{DMM}}(O;\beta,\eta^\ast)
        \mid \{O_i:i\in\mathcal I_{-k}\}
    \right]
\end{equation*}
where $O = (Y, \widetilde{X}, \widetilde{D})$, and its fold-weighted average
\begin{equation*}
    \overline R_n(\beta)
    =
    \sum_{k=1}^K\frac{n_k}{n}R_{n,k}(\beta).
\end{equation*}

\begin{lemma}[Second-order remainder of the DMM moment function]
\label{lem:appendix-rs-score-drift}
Under the conditions of
Lemma~\ref{lem:appendix-rs-bridge-error} and the envelope conditions above,
for every fold $k$ and $\beta\in\mathcal B_0$,
\begin{align}
    R_{n,k}(\beta)
    =
    \E\Big[{}
    \{X^\ast B_{1,-k}^{\mathrm{R}}(\widetilde D)
      +(1-X^\ast)B_{0,-k}^{\mathrm{R}}(\widetilde D)\}
    \ \times \ \Delta\psi(Y,W;\beta)
    \;\Bigm|\;\mathcal I_{-k}
    \Big].
    \label{eq:appendix-drift-identity}
\end{align}
Consequently,
\begin{equation}
    \max_{1\le k\le K}
    \sup_{\beta\in\mathcal B_0}
    \left\|R_{n,k}(\beta)\right\|_2
    \le
    C\delta_n^2,
    \label{eq:appendix-quadratic-drift}
\end{equation}
and the same bound holds with $R_{n,k}$ replaced by $\overline R_n$.
\end{lemma}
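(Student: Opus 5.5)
The plan is to compute the conditional remainder directly from the linear-in-bridge structure of the DMM moment. The key fact is that $\psi^{\mathrm{DMM}}(O;\beta,\eta)$ is affine in the bridge. For any nuisance collection,
\begin{equation*}
    \psi^{\mathrm{DMM}}(O;\beta,\eta)
    =
    \psi_0(Y,W;\beta)
    +
    H^{\mathrm{R}}(\widetilde X,\widetilde D;\eta)\,\Delta\psi(Y,W;\beta),
\end{equation*}
where $\Delta\psi=\psi_1-\psi_0$. The difference in $R_{n,k}(\beta)$ therefore equals $\E[(\widehat H^{\mathrm{R}}_{-k}-H^{\mathrm{R}}_0)\Delta\psi(Y,W;\beta)\mid\mathcal I_{-k}]$, with the $\psi_0$ terms cancelling.

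Next I condition on $(X^\ast,\widetilde D,\mathcal I_{-k})$ by iterated expectation. Here $Y$ and $W$ are components of $\widetilde D=(D,W,Y)$, so $\Delta\psi(Y,W;\beta)$ is measurable with respect to this conditioning set and can be pulled out. The training-fold data are independent of the evaluation observation, so the fitted nuisance is fixed. Then
\begin{equation*}
    \E[\widehat H^{\mathrm{R}}_{-k}-H^{\mathrm{R}}_0\mid X^\ast,\widetilde D,\mathcal I_{-k}]
    =
    X^\ast B^{\mathrm{R}}_{1,-k}(\widetilde D)+(1-X^\ast)B^{\mathrm{R}}_{0,-k}(\widetilde D),
\end{equation*}
which uses $\E[H^{\mathrm{R}}_0\mid X^\ast,\widetilde D]=X^\ast$ from Proposition~\ref{prop:robust-symmetric-bridge} together with the definition of $B^{\mathrm{R}}_{a,-k}$ and $X^\ast\in\{0,1\}$. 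Substituting this gives identity \eqref{eq:appendix-drift-identity}. Integrability is no issue, since the fitted and true bridges are bounded under the contrast conditions and $\Delta\psi$ is bounded by $C$.

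For the bound \eqref{eq:appendix-quadratic-drift}, I take norms inside the expectation and use $\sup_{\beta\in\mathcal B_0}\|\Delta\psi(Y,W;\beta)\|\le C$ almost surely. This gives
\begin{equation*}
    \sup_{\beta}\|R_{n,k}(\beta)\|_2\le C\,\big(P|B^{\mathrm{R}}_{1,-k}|+P|B^{\mathrm{R}}_{0,-k}|\big),
\end{equation*}
and \eqref{eq:appendix-rs-drift-rate} of Lemma~\ref{lem:appendix-rs-bridge-error} bounds the right-hand side by $C\delta_n^2$. The constant does not depend on $k$ or $\beta$, so taking the maximum over finitely many folds preserves the bound. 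For $\overline R_n$, the triangle inequality with weights $n_k/n$ summing to one gives the same bound.

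The only delicate step is the measurability and conditioning argument: checking that $\Delta\psi$ depends only on variables in $\widetilde D$, so that it factors out of the conditional expectation given $X^\ast$, and that the fold independence justifies treating $\widehat\eta_{-k}$ as fixed. Once that is settled, the rest is bookkeeping.
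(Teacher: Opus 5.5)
Your proposal is correct and follows essentially the same route as the paper's proof. Both arguments use the exact identity $\psi^{\mathrm{DMM}}(O;\beta,\widehat\eta_{-k})-\psi^{\mathrm{DMM}}(O;\beta,\eta^\ast)=(\widehat H^{\mathrm R}_{-k}-H^{\mathrm R}_0)\Delta\psi(Y,W;\beta)$, then condition on $(X^\ast,\widetilde D,\mathcal I_{-k})$ using $\E[H^{\mathrm R}_0\mid X^\ast,\widetilde D]=X^\ast$, and finish with the bounded moment contrast plus the rate $P|B^{\mathrm R}_{a,-k}|\le C\delta_n^2$ from Lemma~\ref{lem:appendix-rs-bridge-error}; your explicit check that $\Delta\psi$ is measurable with respect to $\widetilde D=(D,W,Y)$ makes precise a step the paper uses implicitly.
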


\begin{proof}
The difference between the fitted and population estimating functions has the
exact form
\begin{align*}
    &\psi^{\mathrm{DMM}}(O;\beta,\widehat\eta_{-k})
    -\psi^{\mathrm{DMM}}(O;\beta,\eta^\ast)
     \ = \ 
    (\widehat H_{-k}^{\mathrm{R}}-H_0^{\mathrm{R}})
    \Delta\psi(Y,W;\beta)
\end{align*}
where
$\widehat H_{-k}^{\mathrm{R}}
\coloneq
H^{\mathrm{R}}(\widetilde{X}, \widetilde{D};\widehat\eta_{-k})
=
3\binom{J}{2}^{-1}
\sum_{j_1<j_2}
\widehat M_{j_1,-k}\widehat M_{j_2,-k}
-
2\binom{J}{3}^{-1}
\sum_{j_1<j_2<j_3}
\widehat M_{j_1,-k}\widehat M_{j_2,-k}\widehat M_{j_3,-k},$
and
$H_0^{\mathrm{R}}
\coloneq
H^{\mathrm{R}}(\widetilde{X}, \widetilde{D};\eta^\ast)$ is defined analogously by replacing
$\widehat M_{j,-k}$ with $M_{j,0}$. These are the robust bridges evaluated at
the fold-specific estimated and true nuisance functions, respectively.
Conditioning on $(X^\ast,\widetilde D,\mathcal I_{-k})$ and using
$\E[H_0^{\mathrm{R}}\mid X^\ast,\widetilde D]=X^\ast$ gives
\begin{equation*}
    \E\!\left[
        \widehat H_{-k}^{\mathrm{R}}-H_0^{\mathrm{R}}
        \mid X^\ast,\widetilde D,\mathcal I_{-k}
    \right]
    =
    X^\ast B_{1,-k}^{\mathrm{R}}(\widetilde D)
    +(1-X^\ast)B_{0,-k}^{\mathrm{R}}(\widetilde D).
\end{equation*}
Substitution proves equation~\eqref{eq:appendix-drift-identity}. By that
identity and the bounded moment-contrast condition,
\begin{align*}
    \sup_{\beta\in\mathcal B_0}
    \left\|R_{n,k}(\beta)\right\|_2
    &\le
    \E\!\left[
        \left\{
            \left|B_{1,-k}^{\mathrm R}(\widetilde D)\right|
            +
            \left|B_{0,-k}^{\mathrm R}(\widetilde D)\right|
        \right\}
        \sup_{\beta\in\mathcal B_0}
        \left\|\Delta\psi(Y,W;\beta)\right\|
        \,\middle|\,\mathcal I_{-k}
    \right]
    \\
    &\le
    C\sum_{a=0}^1
    P\left|B_{a,-k}^{\mathrm R}\right|
    \le
    C\delta_n^2,
\end{align*}
where the last inequality follows from
equation~\eqref{eq:appendix-rs-drift-rate}. This proves
equation~\eqref{eq:appendix-quadratic-drift}. Because $K$ is fixed and the
weights $n_k/n$ sum to one, the same rate holds for $\overline R_n$.
\end{proof}

\begin{proof}[Proof of Theorem~\ref{thm:latent-independent-asymptotics}]
Define the cross-fitted sample moment and its population counterpart by
\begin{equation*}
    \widehat\Psi_n(\beta)
    =
    \sum_{k=1}^K\frac{n_k}{n}
    \mathbb P_{n,k}
    \psi^{\mathrm{DMM}}(O;\beta,\widehat\eta_{-k}),
    \qquad
    \Psi(\beta)
    =
    P\psi^{\mathrm{DMM}}(O;\beta,\eta^\ast).
\end{equation*}
By Theorem~\ref{thm:latent-independent-identification},
$\Psi(\beta^\ast)=0$ and $\beta^\ast$ is its unique root. For every
$\beta\in\mathcal B_0$,
\begin{align}
    \widehat\Psi_n(\beta)-\Psi(\beta)
    ={}&
    \sum_{k=1}^K\frac{n_k}{n}
    (\mathbb P_{n,k}-P)
    \Big\{
        \psi^{\mathrm{DMM}}(O;\beta,\widehat\eta_{-k})
        -\psi^{\mathrm{DMM}}(O;\beta,\eta^\ast)
    \Big\}
    \nonumber\\
    &+
    (\mathbb P_n-P)\psi^{\mathrm{DMM}}(O;\beta,\eta^\ast)
    +\overline R_n(\beta).
    \label{eq:appendix-moment-decomposition}
\end{align}

\paragraph{Consistency.} We first establish consistency. We begin with the first term in equation~\eqref{eq:appendix-moment-decomposition}.
Conditional on $\mathcal I_{-k}$, the
observations in $\mathcal I_k$ are independent of $\widehat\eta_{-k}$. By the
exact moment-difference identity, the bounded moment-contrast condition, and
equation~\eqref{eq:appendix-rs-l2-rate},
\begin{align*}
    \max_k
    \sup_{\beta\in\mathcal B_0}
    \left\|
        \psi^{\mathrm{DMM}}(O;\beta,\widehat\eta_{-k})
        -\psi^{\mathrm{DMM}}(O;\beta,\eta^\ast)
    \right\|_{2,P}
    &=
    \max_k
    \sup_{\beta\in\mathcal B_0}
    \left\|
        (\widehat H_{-k}^{\mathrm R}-H_0^{\mathrm R})
        \Delta\psi(Y,W;\beta)
    \right\|_{2,P}
    \\
    &\le
    C\max_k
    \left\|\widehat H_{-k}^{\mathrm R}-H_0^{\mathrm R}\right\|_{2,P}
    \le C\delta_n
    =o_p(1).
\end{align*}

For any $r>0$, let $\{\beta_1,\ldots,\beta_N\}$ be a finite
$r$-net of $\mathcal B_0$. For each fold $k$ and $\epsilon>0$,
conditional Chebyshev's inequality and a union bound give
\begin{align*}
    &\Pr\Bigg[
        \max_{1\le \ell\le N}
        \left\|
            (\mathbb P_{n,k}-P)
            \Big\{
                \psi^{\mathrm{DMM}}
                (O;\beta_\ell,\widehat\eta_{-k})
                -
                \psi^{\mathrm{DMM}}
                (O;\beta_\ell,\eta^\ast)
            \Big\}
        \right\|_2
        >
        \epsilon
        \,\Bigg|\,
        \mathcal I_{-k}
    \Bigg]
    \\
    &\qquad\le
    \frac{1}{n_k\epsilon^2}
    \sum_{\ell=1}^N
    \left\|
        \psi^{\mathrm{DMM}}
        (O;\beta_\ell,\widehat\eta_{-k})
        -
        \psi^{\mathrm{DMM}}
        (O;\beta_\ell,\eta^\ast)
    \right\|_{2,P}^2
    =
    o_p(1).
\end{align*}
Because $K$ is fixed, this also holds uniformly over $k$. Moreover,
class-contrast separation uniformly bounds the bridge difference, and the
Lipschitz-envelope condition for the moment contrast implies
\begin{align*}
    &\max_k
    \sup_{\beta\in\mathcal B_0}
    \left\|
        (\mathbb P_{n,k}-P)
        \Big\{
            \psi^{\mathrm{DMM}}(O;\beta,\widehat\eta_{-k})
            -
            \psi^{\mathrm{DMM}}(O;\beta,\eta^\ast)
        \Big\}
    \right\|_2
    \\
    &\qquad\le
    \max_{\substack{1\le k\le K\\1\le\ell\le N}}
    \left\|
        (\mathbb P_{n,k}-P)
        \Big\{
            \psi^{\mathrm{DMM}}(O;\beta_\ell,\widehat\eta_{-k})
            -
            \psi^{\mathrm{DMM}}(O;\beta_\ell,\eta^\ast)
        \Big\}
    \right\|_2
    +
    O_p(r)
    \\
    &\qquad=
    o_p(1)+O_p(r).
\end{align*}
Letting $n\to\infty$ and then $r\downarrow0$ yields
\begin{equation*}
    \max_k
    \sup_{\beta\in\mathcal B_0}
    \left\|
        (\mathbb P_{n,k}-P)
        \Big\{
            \psi^{\mathrm{DMM}}(O;\beta,\widehat\eta_{-k})
            -
            \psi^{\mathrm{DMM}}(O;\beta,\eta^\ast)
        \Big\}
    \right\|_2
    =
    o_p(1).
\end{equation*}
Since the weights $n_k/n$ sum to one, the first term in
equation~\eqref{eq:appendix-moment-decomposition} is therefore
uniformly $o_p(1)$.

Now, we consider the second term in equation~\eqref{eq:appendix-moment-decomposition}. The oracle moment function class is finite-dimensional and has an
integrable Lipschitz envelope, so a uniform law of large numbers makes the
second term uniformly $o_p(1)$. 

Finally, we consider the third term in equation~\eqref{eq:appendix-moment-decomposition}. 
Lemma~\ref{lem:appendix-rs-score-drift} and $\delta_n=o_p(1)$ make the third
term uniformly $o_p(1)$. Therefore,
\begin{equation*}
    \sup_{\beta\in\mathcal B_0}
    \left\|
        \widehat\Psi_n(\beta)-\Psi(\beta)
    \right\|_2
    =o_p(1).
\end{equation*}
Because $\beta^\ast$ is a well-separated root and
$\widehat\Psi_n(\widehat\beta^{\mathrm{DMM}})=0$, the standard
$Z$-estimation argument yields
$\widehat\beta^{\mathrm{DMM}}\overset{p}{\to}\beta^\ast$.

The robustness claim follows similarly. If the nuisance functions for all
but one proxy are consistently estimated, every pair or triple in the
conditional bridge-error formulas contains at least one $o_p(1)$ factor,
while the other factors remain bounded under the regularity
condition. Hence, the remainder term is uniformly $o_p(1)$. For the centered
empirical term, equation~\eqref{eq:appendix-rs-l2-rate} need not hold when one
proxy-specific nuisance limit is misspecified. Nevertheless, class-contrast
separation uniformly bounds both the fitted and true bridges, and the
downstream moment contrast is uniformly bounded. Conditional on the training
sample, the corresponding centered foldwise average is therefore
$O_p(n^{-1/2})$, and hence $o_p(1)$. The same $Z$-estimation argument yields
consistency even if one proxy-specific nuisance model is misspecified.

\paragraph{Asymptotic Normality.} 
We next establish asymptotic normality. At $\beta^\ast$,
\begin{equation}
    \widehat\Psi_n(\beta^\ast)
    \ = \
    (\mathbb P_n-P)
    \psi^{\mathrm{DMM}}(O;\beta^\ast,\eta^\ast)
    +
    \sum_{k=1}^K\frac{n_k}{n}
    (\mathbb P_{n,k}-P)
    \Big\{
        \psi^{\mathrm{DMM}}(O;\beta^\ast,\widehat\eta_{-k})
        -
        \psi^{\mathrm{DMM}}(O;\beta^\ast,\eta^\ast)
    \Big\}
    +
    \overline R_n(\beta^\ast)
    \label{eq:appendix-moment-at-target}
\end{equation}
because $P\psi^{\mathrm{DMM}}(O;\beta^\ast,\eta^\ast)=0$.

Conditional on the training observations $\{O_i:i\in\mathcal I_{-k}\}$,
\begin{align*}
    &\E\!\left[
        \left\|
            \sqrt{n_k}(\mathbb P_{n,k}-P)
            \left\{
                \psi^{\mathrm{DMM}}(O;\beta^\ast,\widehat\eta_{-k})
                -\psi^{\mathrm{DMM}}(O;\beta^\ast,\eta^\ast)
            \right\}        
        \right\|_2^2
        \;\middle|\;
        \{O_i:i\in\mathcal I_{-k}\}
    \right]\\
    &\qquad\le
    C\left\|
        \psi^{\mathrm{DMM}}(O;\beta^\ast,\widehat\eta_{-k})
        -\psi^{\mathrm{DMM}}(O;\beta^\ast,\eta^\ast)
    \right\|_{2,P}^2
    \\
    &\qquad\le
    C\left\|
        \widehat H_{-k}^{\mathrm{R}}-H_0^{\mathrm{R}}
    \right\|_{2,P}^2
    \le C\delta_n^2
    =o_p(1),
\end{align*}
where the second inequality uses the exact moment-difference identity and the
bounded moment-contrast condition, and the third inequality follows from
equation~\eqref{eq:appendix-rs-l2-rate}. Conditional Markov's inequality and
fixed $K$ therefore
imply that the second term in
equation~\eqref{eq:appendix-moment-at-target} is $o_p(n^{-1/2})$.
Lemma~\ref{lem:appendix-rs-score-drift} and
$\delta_n^2=o_p(n^{-1/2})$ imply that the third term is also
$o_p(n^{-1/2})$. Hence
\begin{equation}
    \widehat\Psi_n(\beta^\ast)
    =
    (\mathbb P_n-P)\psi^{\mathrm{DMM}}(O;\beta^\ast,\eta^\ast)
    +o_p(n^{-1/2}).
    \label{eq:appendix-asymptotic-moment}
\end{equation}

The integral form of the mean-value expansion gives
\begin{align*}
    0
    ={}&
    \widehat\Psi_n(\beta^\ast)
    \\
    &+
    \Bigg[
    \int_0^1
    \sum_{k=1}^K\frac{n_k}{n}
    \mathbb P_{n,k}\Bigg\{
        \{1-\widehat H_{-k}^{\mathrm{R}}\}
        \frac{\partial}{\partial\beta^\top}
        \psi^F\!\left(
            Y,0,W;
            \beta^\ast+t(\widehat\beta^{\mathrm{DMM}}-\beta^\ast)
        \right)
    \\
    &\hspace{5.5cm}
        +
        \widehat H_{-k}^{\mathrm{R}}
        \frac{\partial}{\partial\beta^\top}
        \psi^F\!\left(
            Y,1,W;
            \beta^\ast+t(\widehat\beta^{\mathrm{DMM}}-\beta^\ast)
        \right)
    \Bigg\}
    \,dt
    \Bigg]
    (\widehat\beta^{\mathrm{DMM}}-\beta^\ast).
\end{align*}
Consistency, the uniform law of large numbers, smoothness of the
moment functions, and equation~\eqref{eq:appendix-rs-l2-rate} imply that the
matrix in brackets converges in probability to $-\mathbf A_0$.
Since $\mathbf A_0$ is nonsingular, combining this result with
equation~\eqref{eq:appendix-asymptotic-moment} yields
\begin{equation*}
    \sqrt n(\widehat\beta^{\mathrm{DMM}}-\beta^\ast)
    =
    \mathbf{A}_0^{-1}
    \frac{1}{\sqrt n}
    \sum_{i=1}^n
    \psi^{\mathrm{DMM}}(O_i;\beta^\ast,\eta^\ast)
    +o_p(1),
\end{equation*}
which is equation~\eqref{eq:rs-asymptotic-linearity}. The multivariate central
limit theorem gives
\begin{equation*}
    \frac{1}{\sqrt n}
    \sum_{i=1}^n
    \psi^{\mathrm{DMM}}(O_i;\beta^\ast,\eta^\ast)
    \rightsquigarrow
    N(0,\boldsymbol{\Omega}_0),
\end{equation*}
and Slutsky's theorem proves
equation~\eqref{eq:rs-asymptotic-normality}.

Finally, consistency and equation~\eqref{eq:appendix-rs-l2-rate} imply that
the cross-fitted moment function evaluated at
$\widehat\beta^{\mathrm{DMM}}$ converges in $L_2(P)$ to
$\psi^{\mathrm{DMM}}(O;\beta^\ast,\eta^\ast)$. Foldwise laws of large numbers then give
$\widehat{\boldsymbol{\Omega}}\overset{p}{\to}\boldsymbol{\Omega}_0$ and
$\widehat{\mathbf{A}}\overset{p}{\to}\mathbf{A}_0$. Therefore,
\begin{equation*}
    \widehat{\mathbf{V}}
    \overset{p}{\to}
    \mathbf{A}_0^{-1}\boldsymbol{\Omega}_0\mathbf{A}_0^{-\top}.
\end{equation*}
\end{proof}

\section{Details for the Efficiency Discussion}
\label{app:rs-efficiency}

Throughout this section, write $H_{0,J}^{\mathrm{R}}$ when emphasizing
that the robust bridge uses $J$ proxies.
Equation~\eqref{eq:variance-rs-score-decomposition} gives
$\psi_J^{\mathrm{DMM}}(\widetilde X,\widetilde D;\beta^\ast,\eta^\ast)
=
\psi_0^F
+\{H_{0,J}^{\mathrm{R}}-X^\ast\}\Delta\psi_0, 
$
where $\psi_0^F$ is a shorthand for the oracle full-data moment function evaluated at the true parameter.
The unbiasedness of the bridge function
$\E\left[
H_{0,J}^{\mathrm{R}}
\mid X^\ast,\widetilde D
\right]
=
X^\ast
$
implies equation~\eqref{eq:variance-oracle-bread}. Expanding the product
defining the meat matrix yields
\begin{align*}
    \boldsymbol\Omega_{0,J}^{\mathrm{R}}
    ={}&
    \E[\psi_0^F\psi_0^{F\top}]
    +
    \E\left[
        (H_{0,J}^{\mathrm{R}}-X^\ast)
        \psi_0^F\Delta\psi_0^\top
    \right]
    \\
    &+
    \E\left[
        (H_{0,J}^{\mathrm{R}}-X^\ast)
        \Delta\psi_0\psi_0^{F\top}
    \right]
    \\
    &+
    \E\left[
        (H_{0,J}^{\mathrm{R}}-X^\ast)^2
        \Delta\psi_0\Delta\psi_0^\top
    \right].
\end{align*}
The two cross terms vanish after conditioning on
$(X^\ast,\widetilde D)$, which proves
equation~\eqref{eq:variance-meat-decomposition}.

We may write
\begin{align*}
    \E\left[
        (H_{0,J}^{\mathrm{R}}-X^\ast)^2
        \Delta\psi_0\Delta\psi_0^\top
    \right]
    &=
    \E\left[
        \E[(H_{0,J}^{\mathrm{R}}-X^\ast)^2
        \mid X^\ast, \widetilde D]
        \Delta\psi_0\Delta\psi_0^\top
    \right] \\
    &=
    \E\left[
        \sum_{a=0}^{1} \Pr(X^\ast = a \mid \widetilde D)\E[(H_{0,J}^{\mathrm{R}}-a)^2\mid X^\ast = a, \widetilde D]
        \Delta\psi_0\Delta\psi_0^\top
    \right] \\
    &=
    \E\left[
        \sum_{a=0}^1
        \pi(\widetilde D_i)^a
        \{1-\pi(\widetilde D_i)\}^{1-a}
        v_{a,J}^{\mathrm{R}}(\widetilde D)
        \Delta\psi_0\Delta\psi_0^\top
    \right].
\end{align*}
Premultiplying and postmultiplying by $\mathbf A_0^{-1}$ and
$\mathbf A_0^{-\top}c$, respectively, gives
equation~\eqref{eq:variance-coefficient-factorization}.

Define the centered single-proxy bridge noise
\begin{equation*}
    \varepsilon_j
    =
    M_j(\widetilde D; \eta)-X^\ast.
\end{equation*}
Conditional on
$(X^\ast,\widetilde D)=(a,d)$, the variables
$\varepsilon_1,\ldots,\varepsilon_J$ are mutually independent and satisfy
$\E[\varepsilon_j\mid X^\ast=a,\widetilde D=d]=0$,
$\E[\varepsilon_j^2\mid X^\ast=a,\widetilde D=d]=\kappa_{j,a}(d)$.
Direct expansion of the pair and triple averages gives
\begin{equation}
    H_{0,J}^{\mathrm{R}}-X^\ast
    =
    \frac{3(1-2X^\ast)}{\binom{J}{2}}
    \sum_{j_1<j_2}\varepsilon_{j_1}\varepsilon_{j_2}
    -
    \frac{2}{\binom{J}{3}}
    \sum_{j_1<j_2<j_3}\varepsilon_{j_1}\varepsilon_{j_2}\varepsilon_{j_3}.
    \label{eq:appendix-rs-centered}
\end{equation}
For any nonempty subset $S\subseteq\{1,\ldots,J\}$, let
$\varepsilon_S=\prod_{j\in S}\varepsilon_j$. If $S\neq T$, then
\begin{align*}
    &\E\left[
        \varepsilon_S\varepsilon_T
        \mid X^\ast=a,\widetilde D=d
    \right]
    \\
    &\qquad=
    \prod_{j\in S\cap T}
    \E\left[
        \varepsilon_j^2
        \mid X^\ast=a,\widetilde D=d
    \right]
    \prod_{j\in (S \setminus T) \cup (T \setminus S)}
    \E\left[
        \varepsilon_j
        \mid X^\ast=a,\widetilde D=d
    \right]
    =0,
\end{align*}
because $(S \setminus T) \cup (T \setminus S)$ is nonempty. Thus, distinct pair and triple monomials
are conditionally orthogonal: all pair--pair, triple--triple, and pair--triple
covariance terms vanish. For $S=T$,
$\E\left[\varepsilon_S^2\mid X^\ast=a,\widetilde D=d\right]
=\prod_{j\in S}\kappa_{j,a}(d)$.
Squaring equation~\eqref{eq:appendix-rs-centered} and taking conditional
expectations therefore yields
\begin{equation*}
    v_{a,J}^{\mathrm{R}}(d)
    =
    9
    \binom{J}{2}^{-2}
    \sum_{j_1<j_2}
    \kappa_{j_1,a}(d)\kappa_{j_2,a}(d)
    +
    4
    \binom{J}{3}^{-2}
    \sum_{j_1<j_2<j_3}
    \kappa_{j_1,a}(d)\kappa_{j_2,a}(d)\kappa_{j_3,a}(d),
\end{equation*}
which is equation~\eqref{eq:variance-rs-exact}.

Our main question is: when does an additional label improve precision?
Define the elementary symmetric sums
\begin{align*}
    S_{1,a,J}(d)
    &=
    \sum_{j_1=1}^J \kappa_{j_1,a}(d),
    \\
    S_{2,a,J}(d)
    &=
    \sum_{j_1<j_2}
    \kappa_{j_1,a}(d)\kappa_{j_2,a}(d),
    \\
    S_{3,a,J}(d)
    &=
    \sum_{j_1<j_2<j_3}
    \kappa_{j_1,a}(d)\kappa_{j_2,a}(d)\kappa_{j_3,a}(d).
\end{align*}
Suppose a new label has conditional bridge variance
$\kappa_{J+1,a}(d)$. The elementary symmetric sums update according to
$S_{2,a,J+1}(d)=S_{2,a,J}(d)+\kappa_{J+1,a}(d)S_{1,a,J}(d),$
$S_{3,a,J+1}(d)=S_{3,a,J}(d)+\kappa_{J+1,a}(d)S_{2,a,J}(d).$
Consequently,
\begin{align}
    &v_{a,J+1}^{\mathrm{R}}(d)
    -v_{a,J}^{\mathrm{R}}(d)
    \nonumber\\
    &\quad=
    \kappa_{J+1,a}(d)
    \left\{
        9S_{1,a,J}(d)
        \binom{J+1}{2}^{-2}
        +
        4S_{2,a,J}(d)
        \binom{J+1}{3}^{-2}
    \right\}
    \nonumber\\
    &\qquad-
    \left[
        9S_{2,a,J}(d)
        \left\{
            \binom{J}{2}^{-2}-\binom{J+1}{2}^{-2}
        \right\}
        +
        4S_{3,a,J}(d)
        \left\{
            \binom{J}{3}^{-2}-\binom{J+1}{3}^{-2}
        \right\}
    \right].
    \label{eq:appendix-rs-variance-difference}
\end{align}
If the denominator below is positive, define
\begin{equation}
    \overline{\kappa}_{J,a}(d)
    \coloneq
    \frac{
        9S_{2,a,J}(d)
        \{\binom{J}{2}^{-2}-\binom{J+1}{2}^{-2}\}
        +
        4S_{3,a,J}(d)
        \{\binom{J}{3}^{-2}-\binom{J+1}{3}^{-2}\}
    }{
        9S_{1,a,J}(d)\binom{J+1}{2}^{-2}
        +
        4S_{2,a,J}(d)\binom{J+1}{3}^{-2}
    }.
    \label{eq:appendix-rs-threshold}
\end{equation}
Equation~\eqref{eq:appendix-rs-variance-difference} then implies
$v_{a,J+1}^{\mathrm{R}}(d)
<
v_{a,J}^{\mathrm{R}}(d)
\Longleftrightarrow
\kappa_{J+1,a}(d)<\overline{\kappa}_{J,a}(d)$.
If this denominator is zero, then all current $\kappa_{j,a}(d)$ are zero, and both
conditional bridge variances remain zero after adding a single additional
label.

For a particular coefficient contrast, subtracting
equation~\eqref{eq:variance-coefficient-factorization} for $J$ and $J+1$
gives the exact condition
\begin{align*}
    &c^\top
    \left(
        \mathbf V_{0,J+1}^{\mathrm{R}}
        -\mathbf V_{0,J}^{\mathrm{R}}
    \right)c
    \\
    &\quad=
    \E\left[
        \sum_{a=0}^1
        \pi(\widetilde D_i)^a
        \{1-\pi(\widetilde D_i)\}^{1-a}
        \{v_{a,J+1}^{\mathrm{R}}(\widetilde D)
          -v_{a,J}^{\mathrm{R}}(\widetilde D)\}
        \{(\mathbf{A}_0^{-\top}c)^\top\Delta\psi_0(Y,W)\}^2
    \right].
\end{align*}
Therefore, satisfying the threshold for both latent classes almost surely is a
simple sufficient condition for weakly improving every coefficient contrast.
For a fixed contrast, it is sufficient and necessary that the displayed
weighted average be negative.

\paragraph{A special case.}
If the existing proxies have common conditional bridge variance
$\kappa_a(d)>0$ and the added proxy has conditional bridge variance $q_a(d)$,
equation~\eqref{eq:appendix-rs-threshold} reduces to
\begin{equation}
    q_a(d)
    <
    2\kappa_a(d)
    \frac{
        J(J-2)+(2J-1)\kappa_a(d)
    }{
        (J-2)\{J-1+2\kappa_a(d)\}
    }.
    \label{eq:appendix-rs-homogeneous-threshold}
\end{equation}
In particular, an additional proxy of the same quality,
$q_a(d)=\kappa_a(d)$, always satisfies this condition. If all $J$ proxies have
common variance $\kappa_a(d)$, then
$v_{a,J}^{\mathrm{R}}(d)
=
\frac{18\kappa_a(d)^2}{J(J-1)}
+
\frac{24\kappa_a(d)^3}{J(J-1)(J-2)},$
which is strictly decreasing in $J$ for $\kappa_a(d)>0$ and is of order
$J^{-2}$. Moreover, the right-hand side of
equation~\eqref{eq:appendix-rs-homogeneous-threshold} converges to
$2\kappa_a(d)$ as $J\to\infty$. Thus, in this case, an added proxy may be
somewhat noisier than the existing labels and still improve precision through
the additional averaging.

All formulas in this subsection concern the population bridge evaluated at the
true nuisance functions. If conditional independence across proxies fails,
additional covariance terms generally enter the conditional variance, and the
threshold above no longer applies. These comparisons also hold across designs
with different fixed values of $J$; a formal analysis with $J=J_n\to\infty$
would require additional control of the growing nuisance dimension and the minimum class contrast.

\section{Additional Details on the Experiments}
\label{app:application-details}

This appendix provides additional information on the two studies in
Section~\ref{sec:applications}. We first clarify how our analyses relate to the
original applications, then describe the logistic latent-class model and the
expectation--maximization (EM) algorithm used in both studies, and finally give
application-specific details. Throughout, an ``expert label'' denotes the
benchmark annotation.

\subsection{Scope and Relationship to the Original Applications}
\label{app:applications-scope}

The two analyses are numerical illustrations of DMM rather than exact
replications of the original substantive studies. In the first application,
the original analysis and the DSL validation study examine differences in
political-advertising tone between Facebook and television using the original
expert-coded corpus and a candidate-adjusted downstream analysis
\citep{fowler2021political,egami2024using}. We instead use a transparent
ad-level logistic regression for a binary Promote indicator. The original
annotation file contains $13{,}040$ expert-coded ads. After merging it with the
platform and candidate covariates and retaining complete records, our analysis
contains $12{,}973$ unique ads. 
Importantly, our Facebook coefficient is not the same
estimand as the candidate-fixed-effects coefficient in the original study.

In the second application, the original study codes both prefecture-level
and county-level wrongdoing and includes both variables in a richer downstream
specification \citep{pan2018concealing}. We focus on prefecture-level
wrongdoing because it enters the simplified downstream model as a single main
effect. This isolates one latent independent variable and avoids introducing a
second latent variable and the county-wrongdoing interaction used in the
original analysis. The resulting coefficient should therefore be interpreted
as the coefficient from the working model defined below, not as an exact
replication of every specification in \citet{pan2018concealing}.

Expert labels serve three different roles across the exercises. First, they
define the full-sample benchmark in both applications. Second, they are used to
calibrate the synthetic Fowler data-generating process. Third, they are used
retrospectively to rank the available proxy labels by F1 score. They do not
enter the feasible DMM moment equations within a Monte Carlo
replication or the DMM fit. The F1-based ranking should consequently
be viewed as a device for constructing interpretable nested proxy sets
for evaluation.

\subsection{Model Specification}
\label{app:shared-logistic-em}

Both applications use the same low-dimensional logistic latent-class model for
the measurement nuisance functions. To describe the common implementation,
let $L_i^\ast\in\{0,1\}$ denote a generic latent binary label, let
$\widetilde L_i=(L_i^{(1)},\ldots,L_i^{(J)})$ collect its binary proxy labels,
and let $Z_i$ denote the nuisance-model design vector, including an intercept.
The correspondence is
\begin{equation*}
\begin{array}{lll}
\text{Fowler:}
& L_i^\ast=Y_i^\ast,
& Z_i=(1,T_i,W_i^\top)^\top,\\[2pt]
\text{Pan--Chen:}
& L_i^\ast=X_i^\ast,
& Z_i=(1,W_i^\top)^\top
\end{array}
\end{equation*}
where $T_i$ is the main independent variable, an indicator for Facebook advertisements.
We posit
\begin{align*}
    p_i
    &\coloneq{\Pr}(L_i^\ast=1\mid Z_i)
      =\operatorname{expit}(Z_i^\top\alpha),\\
    \eta_{ij,a}
    &\coloneq \eta_{j,a}(Z_i)
    \coloneq{\Pr}(L_i^{(j)}=1\mid L_i^\ast=a,Z_i)
      =\operatorname{expit}(Z_i^\top\gamma_{j,a}),
      \qquad a\in\{0,1\},
\end{align*}
where $\operatorname{expit}(u)=\{1+\exp(-u)\}^{-1}$. Conditional independence implies the
observed-data likelihood
\begin{equation}
    \mathcal L(\theta)
    =\prod_{i=1}^n\Bigg[{}
    (1-p_i)
    \prod_{j=1}^J
        \eta_{ij,0}^{L_i^{(j)}}
        (1-\eta_{ij,0})^{1-L_i^{(j)}}\\
    +
    p_i
    \prod_{j=1}^J
        \eta_{ij,1}^{L_i^{(j)}}
        (1-\eta_{ij,1})^{1-L_i^{(j)}}
    \Bigg],
    \label{eq:app-em-observed-likelihood}
\end{equation}
where $\theta=(\alpha,\{\gamma_{j,a}\}_{j,a})$. The downstream dependent
variable is not included in this likelihood. 
In particular, in the Pan--Chen
analysis, upward reporting is not used to infer the latent wrongdoing label.
Thus, we impose two substantively reasonable conditional independence assumptions and use the corresponding nuisance models: $X^{(1)} \indep \cdots \indep X^{(J)} \mid X^\ast, W$ and $\widetilde X \indep Y \mid X^\ast, W$, which jointly imply Assumption~\ref{ass:label-conditional-independence}.

\subsubsection{Expectation--Maximization Algorithm}
\label{app:em-algorithm}

Let
$\xi_i^{(t)}={\Pr}_{\theta^{(t)}}(L_i^\ast=1\mid\widetilde L_i,Z_i)$
denote the posterior responsibility at iteration $t$. For compactness, define
\begin{equation*}
    f_{ij,a}^{(t)}
    =
    \{\eta_{ij,a}^{(t)}\}^{L_i^{(j)}}
    \{1-\eta_{ij,a}^{(t)}\}^{1-L_i^{(j)}}.
\end{equation*}
The E-step is
\begin{equation}
    \xi_i^{(t)}
    =
    \frac{
        p_i^{(t)}\prod_{j=1}^J f_{ij,1}^{(t)}
    }{
        (1-p_i^{(t)})\prod_{j=1}^J f_{ij,0}^{(t)}
        +p_i^{(t)}\prod_{j=1}^J f_{ij,1}^{(t)}
    }.
    \label{eq:app-em-estep}
\end{equation}
The production code evaluates equation~\eqref{eq:app-em-estep} on the log
scale using a log-sum-exp calculation.

Given $\xi_i^{(t)}$, the M-step separates into standard logistic regression
problems. The latent prevalence model is updated by a fractional-response
logistic regression of $\xi_i^{(t)}$ on $Z_i$:
\begin{equation*}
    \alpha^{(t+1)}
    =
    \operatorname*{arg\,max}_\alpha
    \sum_{i=1}^n
    \left[
        \xi_i^{(t)}\log p_i(\alpha)
        +\{1-\xi_i^{(t)}\}\log\{1-p_i(\alpha)\}
    \right].
\end{equation*}
For each proxy $j$, the class-specific response models are updated by weighted
logistic regressions:
\begin{align*}
    \gamma_{j,1}^{(t+1)}
    &=
    \operatorname*{arg\,max}_\gamma
    \sum_{i=1}^n
    \xi_i^{(t)}
    \log f_j\{L_i^{(j)}\mid 1,Z_i;\gamma\},\\
    \gamma_{j,0}^{(t+1)}
    &=
    \operatorname*{arg\,max}_\gamma
    \sum_{i=1}^n
    \{1-\xi_i^{(t)}\}
    \log f_j\{L_i^{(j)}\mid 0,Z_i;\gamma\}.
\end{align*}
Thus, $\xi_i^{(t)}$ and $1-\xi_i^{(t)}$ act as the effective class-one and
class-zero weights. The M-step produces updated $p_i^{(t+1)}$ and
$\eta_{ij,a}^{(t+1)}$, which are then used in the next E-step.

The algorithm is initialized from several posterior-responsibility vectors.
The candidate starts include the row-wise proxy mean, its complement,
single-proxy starts of the form $0.1+0.8L_i^{(j)}$, and random draws from
$\operatorname{Uniform}(0.2,0.8)$ when additional starts are needed. We use six
starts and retain the solution with the largest final observed-data
log-likelihood in equation~\eqref{eq:app-em-observed-likelihood}.

Because the latent classes are identified only up to a common permutation, we
orient each fitted solution after its final EM update. All proxies are coded so that
one denotes the substantive positive class. We therefore require the average
fitted proxy contrast to be positive:
\begin{equation}
    \frac{1}{nJ}
    \sum_{i=1}^n\sum_{j=1}^J
    (\widehat\eta_{ij,1}-\widehat\eta_{ij,0})
    >0.
    \label{eq:app-em-anchor}
\end{equation}
If the fitted contrast is negative, we replace $\widehat\xi_i$ by
$1-\widehat\xi_i$ and refit the latent-prevalence and class-specific
proxy regressions. This is
the implementation counterpart of the anchoring condition in
Section~\ref{sec:latent-independent}.

\paragraph{Convergence and numerical safeguards.}
The settings used in both applications are summarized in
Table~\ref{tab:app-em-settings}. Convergence requires both a small maximum
change in posterior responsibilities and a small absolute change in the
observed log-likelihood per observation. Coefficients and
probabilities are clipped only for numerical stability.

\begin{table}[t]
    \centering
    \caption{Shared settings for the low-dimensional logistic EM implementation.}
    \label{tab:app-em-settings}
    \small
    \begin{tabular}{ll}
        \toprule
        Quantity & Setting \\
        \midrule
        Number of starts & $6$ \\
        Maximum EM iterations & $500$ \\
        Posterior-change tolerance
            & $\max_i|\xi_i^{(t)}-\xi_i^{(t-1)}|<10^{-5}$ \\
        Per-observation likelihood tolerance
            & $|\log\mathcal L(\theta^{(t)})-\log\mathcal L(\theta^{(t-1)})|/n<10^{-7}$ \\
        Logistic coefficient clipping & $[-20,20]$ \\
        Probability clipping & $[10^{-8},1-10^{-8}]$ \\
        Bridge denominator floor & $0.05$ \\
        \bottomrule
    \end{tabular}
\end{table}

The general theory is stated using cross-fitting so that flexible nuisance
learners can be accommodated without restrictive empirical-process conditions.
The application studies use the low-dimensional parametric implementation above;
for this finite-dimensional smooth nuisance class, the usual parametric
M-estimation conditions replace the role of outer cross-fitting. Specifically,
the shared EM engine is fit once on the
analysis sample in each Fowler replication or Pan--Chen bootstrap resample.
The two application adapters differ only in the construction of the proxy
matrix and $Z_i$; both call the same EM algorithm with the settings in
Table~\ref{tab:app-em-settings}. 

\subsubsection{From the EM Nuisance Fit to DMM}
\label{app:em-to-dmm}

Let
$\widehat\Delta_{ij}\coloneq\widehat\eta_{ij,1}-\widehat\eta_{ij,0}$. In the
implementation, the denominator is stabilized as
\begin{equation*}
    \widehat\Delta_{ij,\epsilon}
    =
    \begin{cases}
        \max(\widehat\Delta_{ij},\epsilon),
            & \widehat\Delta_{ij}\geq0,\\
        \min(\widehat\Delta_{ij},-\epsilon),
            & \widehat\Delta_{ij}<0,
    \end{cases}
    \qquad \epsilon=0.05,
\end{equation*}
and the fitted single-proxy bridge is
\begin{equation*}
    \widehat M_{ij}
    =
    \frac{L_i^{(j)}-\widehat\eta_{ij,0}}
         {\widehat\Delta_{ij,\epsilon}}.
\end{equation*}
The robust bridge is then
\begin{equation*}
    \widehat H_i^{\mathrm{R}}
    =
    3\binom{J}{2}^{-1}
    \sum_{j_1<j_2}\widehat M_{ij_1}\widehat M_{ij_2}
    -
    2\binom{J}{3}^{-1}
    \sum_{j_1<j_2<j_3}
    \widehat M_{ij_1}\widehat M_{ij_2}\widehat M_{ij_3}.
\end{equation*}
The floor is a finite-sample stabilization device rather than part of the
population identification argument.

For the Fowler latent-outcome analysis, let
$r_i=(1,T_i,W_i^\top)^\top$. The final estimate solves the generated-outcome
quasi-score
\begin{equation}
    \frac{1}{n}\sum_{i=1}^n
    r_i
    \left\{
        \widehat H_i^{\mathrm{R}}
        -\operatorname{expit}(r_i^\top\beta)
    \right\}
    =0.
    \label{eq:app-fowler-dmm-score}
\end{equation}
For the Pan--Chen latent-independent-variable analysis, the final estimate
solves
\begin{equation}
    \frac{1}{n}\sum_{i=1}^n
    \left[
        \{1-\widehat H_i^{\mathrm{R}}\}
        \psi_0(Y_i,W_i;\beta)
        +
        \widehat H_i^{\mathrm{R}}
        \psi_1(Y_i,W_i;\beta)
    \right]
    =0.
    \label{eq:app-pan-chen-dmm-score}
\end{equation}
Sandwich standard errors are computed from the empirical moment function
variance and Jacobian, as in equations~\eqref{eq:estimated-rs-bread}--\eqref{eq:estimated-rs-variance}.

\subsection{Monte Carlo Simulations}
\label{app:fowler-details}

\paragraph{Data, annotation task, and downstream target.}
The annotation task follows the Wesleyan Media Project codebook used in
\citet{fowler2021political}: an ad is classified according to whether its
primary purpose is to promote a candidate, attack a candidate, or contrast
candidates. We convert each LLM response into the binary indicator that the ad
is classified as Promote. The observed covariates are the Facebook indicator
$T_i$, party, incumbency status, and office type. Categorical variables are
represented by dummy indicators in both the calibration models and the EM model. The complete-case expert Promote prevalence is $0.756$.
The three highest-ranked labels are GPT-4 multi 6-shot, GPT-4 multi 3-shot,
and GPT-4 instruction 6-shot, with F1 scores of approximately $0.950$, $0.942$,
and $0.933$, respectively.

Let $Y_i^\ast$ denote the expert-coded Promote indicator and define
\begin{equation*}
    r_i=(1,T_i,W_i^\top)^\top.
\end{equation*}
The downstream working model is
\begin{equation}
    {\Pr}(Y_i^\ast=1\mid T_i,W_i)
    =
    \operatorname{expit}(r_i^\top\beta^\ast),
    \label{eq:app-fowler-latent-model}
\end{equation}
and the coefficient on $T_i$ is the target. Fitting
equation~\eqref{eq:app-fowler-latent-model} to all expert labels in the
complete-case sample gives the calibration target
$\beta_{\mathrm{FB}}^\ast=1.3141$.

\paragraph{Calibration of the data-generating process.}
We hold the $12{,}973$ observed covariate rows fixed. We first fit
equation~\eqref{eq:app-fowler-latent-model} to obtain
\begin{equation*}
    \widehat\pi_i^{\mathrm{cal}}
    =
    \operatorname{expit}(r_i^\top\widehat\beta^{\mathrm{cal}}).
\end{equation*}
For each proxy label $j$ and expert class $a\in\{0,1\}$, we separately fit
\begin{equation}
    \Pr(Y_i^{(j)}=1\mid Y_i^\ast=a,T_i,W_i)
    =
    \operatorname{expit}(r_i^\top\gamma_{j,a}),
    \label{eq:app-fowler-proxy-model}
\end{equation}
using the original expert and LLM labels. These calibration models are
ordinary logistic regressions with probability clipping for
numerical stability. Let
$\widehat\eta_{ij,a}^{\mathrm{cal}}$ denote the fitted probabilities.

In replication $b$, we draw
\begin{align*}
    Y_i^{\ast,b}
    &\sim
    \operatorname{Bernoulli}
    (\widehat\pi_i^{\mathrm{cal}}),\\
    Y_i^{(j),b}
    \mid Y_i^{\ast,b}=a,T_i,W_i
    &\sim
    \operatorname{Bernoulli}
    (\widehat\eta_{ij,a}^{\mathrm{cal}}),
    \qquad j=1,\ldots,J,
\end{align*}
with the proxy draws mutually independent across $j$ conditional on
$(Y_i^{\ast,b},T_i,W_i)$. Hence the conditional-independence restriction holds
exactly in the simulated population. Moreover, the EM estimator uses
the same logistic family as the calibration model, so its nuisance working
models are correctly specified. The feasible estimators use only
$(\widetilde Y_i^b,T_i,W_i)$; $Y_i^{\ast,b}$ is supplied only to the infeasible
oracle.

\paragraph{Proxy sets and comparison estimators.}
The binary Promote proxy labels are ranked by their F1 score against the
original expert labels. We use the nested proxy counts displayed in
Figure~\ref{fig:fowler-promote-proxy-count}. Because both the number and the
quality of the included proxies change as the set expands, this exercise
should be interpreted as adding progressively weaker proxies, not as holding
proxy quality fixed while varying $J$.

For a set of size $J$, define $S_i^b=\sum_{j=1}^JY_i^{(j),b}$. The
naive majority-vote outcome is
\begin{equation*}
    \check Y_i^b
    =
    \mathbf{1}\{S_i^b>J/2\}
    +
    B_i^b\mathbf{1}\{S_i^b=J/2\},
    \qquad
    B_i^b\sim\operatorname{Bernoulli}(1/2),
\end{equation*}
where the auxiliary tie-breaking draws are independent across tied samples and
Monte Carlo replications. The naive estimator fits the same logistic
regression as equation~\eqref{eq:app-fowler-latent-model} after replacing
$Y_i^{\ast,b}$ with $\check Y_i^b$. DMM uses the same $J$ proxies but constructs
$\widehat H_i^{\mathrm{R}}$ based on nuisance components fitted with the aforementioned EM algorithm and solves
equation~\eqref{eq:app-fowler-dmm-score}. The oracle fits the downstream logit
using $Y_i^{\ast,b}$. Each proxy-count cell uses $500$ Monte Carlo
replications.

For an estimator $m$, the reported summaries are
\begin{align*}
    \operatorname{Bias}(m)
    &=
    \frac{1}{500}\sum_{b=1}^{500}
    (\widehat\beta_{m,b}-\beta_{\mathrm{FB}}^\ast),\\
    \operatorname{RMSE}(m)
    &=
    \left\{
    \frac{1}{500}\sum_{b=1}^{500}
    (\widehat\beta_{m,b}-\beta_{\mathrm{FB}}^\ast)^2
    \right\}^{1/2},\\
    \operatorname{Coverage}(m)
    &=
    \frac{1}{500}\sum_{b=1}^{500}
    \mathbf{1}\{
        \beta_{\mathrm{FB}}^\ast
        \in
        \widehat C_{m,b}^{95\%}
    \}.
\end{align*}

\paragraph{Additional results.}
The three expert-coded tone indicators differ substantially in prevalence:
Promote accounts for $75.6\%$ of advertisements in the complete-case sample,
whereas Contrast accounts for $17.2\%$ and Attack for only $7.1\%$. We repeat
the same application-calibrated simulation separately for the two less
prevalent indicators, recalibrating the latent-outcome and proxy models for
each outcome. The target Facebook coefficients are $-1.029$ for Contrast and
$-1.334$ for Attack. Table~\ref{tab:fowler-alternative-tone-results} shows that
DMM performs especially well for Attack: across the top-3 through top-7
sets, its bias remains close to zero, coverage ranges from $0.942$ to
$0.962$, and its RMSE falls from $0.113$ to $0.085$, approaching the oracle
RMSE of $0.076$. The corresponding naive estimators are substantially biased
and severely undercover. Contrast shows similar results.

\begin{table}[htbp]
    \centering
    \caption{Fowler simulation results for the alternative Attack and Contrast
    tone indicators. Each cell uses $500$ Monte Carlo replications. For naive
    rows, F1 is the individual proxy's positive-class F1 score; for DMM rows,
    it is the median [minimum, maximum] F1 score among the included proxies.}
    \label{tab:fowler-alternative-tone-results}
    \footnotesize
    \setlength{\tabcolsep}{3.5pt}
    \begin{tabular}{@{}llcrrr@{}}
        \toprule
        Outcome & Estimator / set & F1 summary & Bias & Coverage & RMSE \\
        \midrule
        Attack
        & Oracle & --- & $-0.002$ & $0.954$ & $0.076$ \\
        & Naive: GPT-4 multi, 3-shot
        & $0.651$ & $0.591$ & $0.000$ & $0.594$ \\
        & Naive: GPT-4 multi, 0-shot
        & $0.595$ & $0.890$ & $0.000$ & $0.892$ \\
        & Naive: Llama-2 multi, 6-shot
        & $0.581$ & $0.113$ & $0.410$ & $0.124$ \\
        & DMM: Top 3
        & $0.595$ [$0.581$, $0.651$] & $-0.005$ & $0.958$ & $0.113$ \\
        & DMM: Top 5
        & $0.581$ [$0.565$, $0.651$] & $-0.001$ & $0.962$ & $0.087$ \\
        & DMM: Top 7
        & $0.577$ [$0.548$, $0.651$] & $-0.001$ & $0.942$ & $0.085$ \\
        \addlinespace
        Contrast
        & Oracle & --- & $0.000$ & $0.952$ & $0.050$ \\
        & Naive: GPT-4 multi, 3-shot
        & $0.665$ & $-0.275$ & $0.000$ & $0.280$ \\
        & Naive: GPT-4 multi, 0-shot
        & $0.658$ & $-0.308$ & $0.000$ & $0.312$ \\
        & Naive: Llama-2 instruction, 6-shot
        & $0.511$ & $-0.144$ & $0.340$ & $0.155$ \\
        & DMM: Top 3
        & $0.658$ [$0.511$, $0.665$] & $-0.083$ & $0.958$ & $0.598$ \\
        & DMM: Top 5
        & $0.511$ [$0.489$, $0.665$] & $0.005$ & $0.950$ & $0.155$ \\
        & DMM: Top 7
        & $0.494$ [$0.375$, $0.665$] & $0.003$ & $0.960$ & $0.163$ \\
        \bottomrule
    \end{tabular}
\end{table}

\subsection{Empirical Validation}
\label{app:pan-chen-details}

\paragraph{Data and downstream model.}
The analysis uses $1{,}412$ Chinese citizen complaints. The latent independent
variable is the expert-coded indicator
$X_i^\ast=\mathbf{1}\{\text{prefecture-level wrongdoing}\}$, and the downstream
outcome $Y_i=\texttt{SendOrNot}_i$ indicates whether the complaint was sent to
higher-level authorities. The expert-coded positive-class prevalence is
$0.055$, and the mean of $Y_i$ is $0.418$.

The downstream model is
\begin{equation}
    \operatorname{logit}
    \{{\Pr}(Y_i=1\mid X_i^\ast,W_i)\}
    =
    \alpha+\tau X_i^\ast+\gamma^\top W_i,
    \label{eq:app-pan-chen-logit}
\end{equation}
where $W_i$ contains
\texttt{connect2b}, \texttt{prevalence}, \texttt{regionj},
\texttt{groupIssue}, \texttt{realWorldCollectiveAction},
\texttt{petitioning}, \texttt{sentiment\_indico}, and
\texttt{personal\_experience}. Seven controls are binary and
\texttt{sentiment\_indico} is continuous. The coefficient $\tau$ is the main
estimand. The full-sample expert-label fit gives
$\widehat\tau_{\mathrm{expert}}=-1.0388$.

\paragraph{Proxy labels.}
We use three LLM labels:
GPT-4.1 5-shot, GPT-4 5-shot, and Llama-4 0-shot. Table~\ref{tab:app-pc-proxy-quality}
reports several descriptive metrics against the expert labels.

\begin{table}[t]
    \centering
    \caption{Descriptive quality of the three Pan--Chen proxy labels. The
    class-weighted F1 is the quantity displayed in
    Figure~\ref{fig:pan-chen-empirical-results}; positive-class F1 and
    sensitivity focus on the rare wrongdoing class.}
    \label{tab:app-pc-proxy-quality}
    \small
    \begin{tabular}{lccccc}
        \toprule
        Proxy & Weighted F1 & Positive F1 & Accuracy & Sensitivity & Specificity \\
        \midrule
        GPT-4.1 5-shot & 0.944 & 0.545 & 0.938 & 0.667 & 0.954 \\
        GPT-4 5-shot   & 0.942 & 0.511 & 0.939 & 0.577 & 0.960 \\
        Llama-4 0-shot & 0.949 & 0.492 & 0.953 & 0.410 & 0.985 \\
        \bottomrule
    \end{tabular}
\end{table}

\paragraph{DMM and benchmark estimators.}
The expert-label benchmark fits equation~\eqref{eq:app-pan-chen-logit} using
$X_i^\ast$. Each naive estimator replaces $X_i^\ast$ with one LLM label and
otherwise fits the same model. DMM estimates
nuisance components using the aforementioned EM algorithm, constructs the robust bridge, and solves
equation~\eqref{eq:app-pan-chen-dmm-score}.

The DSL benchmark uses a simple random sample of $500$ expert labels, or about
$35.4\%$ of the sample. It supplies the three LLM labels jointly, together with
$W_i$, to the supervised prediction step. In the implementation,
\texttt{dsl::dsl} uses the logit downstream model and its default generalized
random forest learner with cross-fitting. Thus, DSL neither selects one proxy
nor converts the three proxies to a majority vote; it learns a joint predictor
of $X_i^\ast$ and then applies the design-based correction using the sampled
expert labels.

\paragraph{Fixed-target bootstrap diagnostic.}
We estimate empirical coverage by asking how often each method's
interval contains the fixed full-sample expert-label coefficient under
nonparametric resampling of the observed complaints.

For bootstrap draw $b$, we sample $n=1{,}412$ complaints with replacement,
refit the complete estimator, and record the interval
$[\widehat\tau_{b,\mathrm{lo}},\widehat\tau_{b,\mathrm{hi}}]$. The displayed
fixed-target rate is
\begin{equation*}
    \frac{1}{500}\sum_{b=1}^{500}
    \mathbf{1}\{
        \widehat\tau_{b,\mathrm{lo}}
        \leq
        \widehat\tau_{\mathrm{expert}}
        \leq
        \widehat\tau_{b,\mathrm{hi}}
    \},
    \qquad
    \widehat\tau_{\mathrm{expert}}=-1.0388.
\end{equation*}

\end{document}